\documentclass{amsart}

\usepackage{amsmath,amssymb}
\usepackage{amsthm}
\usepackage{amsrefs}
\usepackage{indentfirst}
\usepackage{mathrsfs}
\usepackage{multirow}
\usepackage{graphicx}
\usepackage{hyperref}
\usepackage{xcolor}
\usepackage{fourier}
\usepackage{float}
\usepackage[norelsize,linesnumbered,vlined,ruled,algo2e]{algorithm2e} 




\theoremstyle{plain}
\newtheorem{theorem}{Theorem}
\newtheorem{lemma}{Lemma}

\newtheorem{prop}[theorem]{Proposition}

\theoremstyle{definition}
\newtheorem{defn}[theorem]{Definition}

\newtheorem{algorithm}[theorem]{{\bf Algorithm}}

\theoremstyle{remark}
\newtheorem*{remark}{Remark}
\newtheorem{example}{Example}

\DeclareMathOperator{\tr}{tr}

\DeclareMathOperator{\spec}{spec}

\newcommand{\RR}{\mathbb{R}}

\newcommand{\TT}{\mathrm{T}}

\newcommand{\wt}[1]{\widetilde{#1}}

\def\Ps{P^{*}}
\def\Qs{Q^{*}}
\def\Rs{R^{*}}
\def\bs{b^{*}}
\def\ds{d^{*}}

\IfFileExists{mathabx.sty}%
  {\DeclareFontFamily{U}{mathx}{\hyphenchar\font45}%
   \DeclareFontShape{U}{mathx}{m}{n}{<->mathx10}{}%
   \DeclareSymbolFont{mathx}{U}{mathx}{m}{n}%
   \DeclareFontSubstitution{U}{mathx}{m}{n}%
   \DeclareMathAccent{\widebar}{0}{mathx}{"73}%
}{%
  \PackageWarning{mathabx}{%
    Package mathabx not available, therefore\MessageBreak substituting
    widebar with overline\MessageBreak }%
  \newcommand{\widebar}[1]{\overline{#1}}%
}
\newcommand{\wb}[1]{\widebar{#1}}

\newcommand{\mc}[1]{\mathcal{#1}}

\newcommand{\norm}[1]{\lVert#1\rVert}


\newcommand{\x}{\mathrm{x}}

\begin{document}

\title{Density matrix minimization with $\ell_1$ regularization} 

\thanks{The research of J.L.~was supported in part by the Alfred
  P.~Sloan Foundation and the National Science Foundation under award
  DMS-1312659. The research of S.O.~was supported by the Office of
  Naval Research Grant N00014-11-1-719. }

\author{Rongjie Lai}
\address{Department of Mathematics, University of California, Irvine}

\author{Jianfeng Lu}
\address{Department of Mathematics, Physics, and Chemistry, Duke University}

\author{Stanley Osher}
\address{Department of Mathematics and Institute for Pure and Applied Mathematics, University of California, Los Angeles}

\date{\today}

\begin{abstract}
  We propose a convex variational principle to find sparse
  representation of low-lying eigenspace of symmetric matrices. In the
  context of electronic structure calculation, this corresponds to a
  sparse density matrix minimization algorithm with $\ell_1$
  regularization. The minimization problem can be efficiently solved
  by a split Bergman iteration type algorithm. We further prove that from any
  initial condition, the algorithm converges to a minimizer of the
  variational principle. 
\end{abstract}

\maketitle

\section{Introduction}

The low-lying eigenspace of operators has
many important applications, including those in quantum chemistry,
numerical PDEs, and statistics. Given a $n \times n$ symmetric matrix
$H$, and denote its eigenvectors as $\{\Phi_i\}, i = 1, \ldots,
n$. The low-lying eigenspace is given by the span of the first $N$
(usually $N \ll n$) eigenvectors.

In many scenario, the real interest is the subspace itself, but not a
particular set of basis functions. In particular, we are interested in a
sparse representation of the eigenspace. The eigenvectors form a
natural basis set, but for oftentimes they are not sparse or localized
(consider for example the eigenfunctions of the free Laplacian
operator $-\Delta$ on a periodic box). This suggests asking for an alternative
sparse representation of the eigenspace.

In quantum chemistry, the low-lying eigenspace for a Hamiltonian
operator corresponds to the physically occupied space of electrons. In
this context, a localized class of basis functions of the low-lying
eigenspaces is called Wannier functions \cites{Wannier:1937,
  Kohn:1959Wannier}. These functions provide transparent
interpretation and understandings of covalent bonds, polarizations,
\textit{etc.} of the electronic structure. These localized
representations are also the starting point and the essence for many
efficient algorithms for electronic structure calculations (see
e.g.~the review article~\cite{Goedecker:99}).

\subsection{Our contribution}

In this work, we propose a convex minimization principle for finding a
sparse representation of the low-lying eigenspace.
\begin{equation}\label{eqn:l1DensityMatrix}
  \begin{aligned}
    & \min_{P \in \RR^{n\times n}} \tr(H P) + \frac{1}{\mu} \norm{P}_1\\
    & \text{s.t.} \  P = P^{\TT},\, \tr P = N,\, 0 \preceq P \preceq I,
  \end{aligned}
\end{equation}
where $\norm{\cdot}_1$ is the entrywise $\ell_1$ matrix norm, $A
\preceq B$ denotes that $B - A$ is a positive semi-definite matrix,
and $\mu$ is a penalty parameter for entrywise sparsity. Here $H$ is an
$n\times n$ symmetric matrix, which is the (discrete) Hamiltonian in
the electronic structure context. The variational principle gives $P$
as a sparse representation of the projection operator onto the
low-lying eigenspace.

The key observation here is to use the matrix $P$ instead of the wave
functions $\Psi$. This leads to a convex variational
principle. Physically, this corresponds to looking for a sparse
representation of the density matrix. We also noted that in cases
where we expect degeneracy or near-degeneracy of eigenvalues of the
matrix $H$, the formulation in terms of the density matrix $P$ is more
natural, as it allows fractional occupation of states. This is a
further advantage besides the convexity.

Moreover, we design an efficient minimization algorithm based on split
Bregman iteration to solve the above variational problem.  Starting
from any initial condition, the algorithm always converges to a
minimizer.

\subsection{Previous works}

There is an enormous literature on numerical algorithms for
Wannier functions and more generally sparse representation of
low-lying eigenspace.  The influential work \cite{Marzari:1997}
proposed a minimization strategy within the occupied space to find
spatially localized Wannier functions (coined as ``maximally localized
Wannier functions'').

In \cite{E:2010PNAS}, the second author with his collaborators
developed a localized subspace iteration (LSI) algorithm to find
Wannier functions. The idea behind the LSI algorithm is to combine the
localization step with the subspace iteration method as an iterative
algorithm to find Wannier functions of an operator. The method has
been applied to electronic structure calculation in
\cite{Garcia-CerveraLuXuanE:09}. As \cite{Garcia-CerveraLuXuanE:09}
shows, due to the truncation step involved, the LSI algorithm does not
in general guarantee convergence.

As a more recent work in \cites{OzolinsLaiCaflischOsher:13}, $L_1$ regularization is proposed to
be used in the variational formulation of the Schr\"odinger equation
of quantum mechanics for creating compressed modes, a set of spatially
localized functions $\{\psi_i\}_{i=1}^N$ in $\mathbb{R}^d$ with
compact support.
\begin{equation}
\label{model:CMs}
E = \min_{\Psi_N} \sum_{j=1}^N  \left( \frac{1}{\mu}\left| \psi_j \right|_1 +  \langle \psi_j , \hat{H} \psi_j \rangle \right) \quad \mbox{\text{s.t.}} \quad \langle \psi_j, \psi_k \rangle = \delta_{jk},
\end{equation}
where $\hat{H} = -\frac{1}{2}\Delta + V(\x)$ is the Hamilton operator
corresponding to potential $V(\x)$, and the $L_1$ norm is defined
as $\left| \psi_j \right|_1 = \int | \psi_j | d\x$.  This $L_1$
regularized variational approach describes a general formalism for
obtaining localized (in fact, compactly supported) solutions to a
class of mathematical physics PDEs, which can be recast as variational
optimization problems. Although an efficient algorithm based on a method of splitting
orthogonality constraints (SOC)~\cite{Lai:2014splitting} is designed to solve the above non-convex problem, it is still
a big challenge to theoretically analyze the convergence of the proposed the algorithm.

The key idea in the proposed convex formulation~\eqref{eqn:l1DensityMatrix} of the variational principle is
the use of the density matrix $P$.  The density matrix is widely used
in electronic structure calculations, for example the density matrix
minimization algorithm \cite{LiNunesVanderbilt:93}. In this type of
algorithm, sparsity of density matrix is specified explicitly by
restricting the matrix to be a banded matrix. The resulting
minimization problem is then non-convex and found to suffer from many local
minimizers. Other electronic structure algorithms that use density
matrix include density matrix purification \cite{McWeeny:60}, Fermi
operator expansion algorithm \cite{BaroniGiannozzi:92}, just to name a
few.

From a mathematical point of view, the use of density matrix can be
viewed as similar to the idea of lifting, which has been recently used
in recovery problems \cite{CandesStrohmerVoroninski:13}. While a
nuclear norm is used in PhaseLift method
\cite{CandesStrohmerVoroninski:13} to enhance sparsity in terms of
matrix rank; we will use an entrywise $\ell_1$ norm to favor sparsity
in matrix entries.

\bigskip

The rest of the paper is organized as follows. We formulate and
explain the convex variational principle for finding localized
representations of the low-lying eigenspace in
Section~\ref{sec:formulation}. An efficient algorithm is proposed in
Section~\ref{sec:Algorithm} to solve the variational
principle, with numerical examples presented in Section~\ref{sec:numerics}. The convergence proof of the algorithm is given in Section~\ref{sec:proof}. 

\section{Formulation}\label{sec:formulation}

Let us denote by $H$ a symmetric matrix \footnote{With obvious changes, our results generalize to the Hermitian case} coming from, for example, the
discretization of an effective Hamiltonian operator in electronic
structure theory. We are interested in a sparse representation of the
eigenspace corresponding to its low-lying eigenvalues. In physical
applications, this corresponds to the occupied space of a Hamiltonian;
in data analysis, this corresponds to the principal components (for
which we take the negative of the matrix so that the largest
eigenvalue becomes the smallest). We are mainly interested in physics
application here, and henceforth, we will mainly interpret the formulation and algorithms from a physical view point. 

The Wannier functions, originally defined for periodic Schr\"odinger
operators, are spatially localized basis functions of the occupied
space. In \cite{OzolinsLaiCaflischOsher:13}, it was proposed to find the
spatially localized functions by minimizing the variational problem
\begin{equation}\label{eq:Psi}
  \min_{\Psi \in \RR^{n \times N},\,  \Psi^{\TT} \Psi = I} \tr (\Psi^{\TT} H \Psi) + \frac{1}{\mu} \norm{\Psi}_{1}
\end{equation}
where $\norm{\Psi}_{1}$ denotes the entrywise $\ell_1$ norm of $\Psi$. Here $N$ is the number of Wannier functions and $n$ is the number of spatial degree of freedom (e.g. number of spatial grid points or basis functions). 

The idea of the above minimization can be easily understood by looking at each term in the energy functional. The $\tr(\Psi^{\TT} H \Psi)$ is the sum of the Ritz value in the space spanned by the columns of $\Psi$. Hence, without the $\ell_1$ penalty term, the minimization 
\begin{equation}
  \min_{\Psi \in \RR^{n \times N},\, \Psi^{\TT} \Psi = I} \tr (\Psi^{\TT} H \Psi) 
\end{equation}
gives the eigenspace corresponds to the first $N$ eigenvalues (here
and below, we assume the non-degeneracy that the $N$-th and $(N+1)$-th
eigenvalues of $H$ are different). While the $\ell_1$ penalty prefers
$\Psi$ to be a set of sparse vectors. The competition of the two terms gives a sparse representation of a subspace that is close to the eigenspace. 

Due to the orthonormality constraint $\Psi^{\TT} \Psi = I$, the minimization problem \eqref{eq:Psi} is not convex, which may result in troubles in finding the minimizer of the above minimization problem and also makes the proof of convergence difficult.

Here we take an alternative viewpoint, which gives a convex
optimization problem. The key idea is instead of $\Psi$, we consider $P
= \Psi \Psi^{\TT} \in \RR^{n\times n}$. Since the columns of $\Psi$
form an orthonormal set of vectors, $P$ is the projection operator
onto the space spanned by $\Psi$.  In physical terms, if $\Psi$ are
the eigenfunctions of $H$, $P$ is then the density matrix which corresponds
to the Hamiltonian operator. For insulating systems, it is known that
the off-diagonal terms in the density matrix decay exponentially fast
\cites{Kohn:59, Panati:07, Cloizeaux:64a, Cloizeaux:64b, Nenciu:83,
  Kivelson:82, NenciuNenciu:98, ELu:CPAM, ELu:13}.

We propose to look for a sparse approximation of the exact density matrix by solving the minimization 
problem proposed in \eqref{eqn:l1DensityMatrix}.
The variational problem \eqref{eqn:l1DensityMatrix} is a convex relaxation of the non-convex variational problem 
\begin{equation}\label{eq:Pnonconvex}
  \begin{aligned}
    & \min_{P \in \RR^{n\times n}} \tr(H P) + \frac{1}{\mu} \norm{P}_1 \\
    & \text{s.t.} \  P = P^{\TT},\, \tr P = N,\, P = P^2,
  \end{aligned}
\end{equation}
where the constraint $0 \preceq P \preceq I$ is replaced by the
idempotency constraint of $P$: $P = P^2$. The variational principle
\eqref{eq:Pnonconvex} can be understood as a reformulation of
\eqref{eq:Psi} using the density matrix as variable. The idempotency
condition $P = P^2$ is indeed the analog of the orthogonality
constraint $\Psi^{\TT} \Psi = I$. 
Note that $0 \preceq P \preceq I$ requires that the eigenvalues of $P$
(the occupation number in physical terms) are between $0$ and $1$,
while $P = P^2$ requires the eigenvalues are either $0$ or $1$. Hence,
the set 
\begin{equation}
\mathcal{C} =  \{ P: P = P^{\TT},\, \tr P = N,\, 0 \preceq P \preceq I \}
\end{equation}
is the convex hull of the set 
\begin{equation}
\mathcal{D} =   \{ P: P = P^{\TT},\, \tr P = N,\, P = P^2\}.
\end{equation}
Therefore \eqref{eqn:l1DensityMatrix} is indeed a convex relaxation of \eqref{eq:Pnonconvex}.

Without the $\ell_1$ regularization, the variational problems \eqref{eqn:l1DensityMatrix} and \eqref{eq:Pnonconvex} become
\begin{equation}\label{eq:P'}
  \begin{aligned}
    & \min_{P \in \RR^{n\times n}} \tr(H P) \\
    & \text{s.t.} \  P = P^{\TT},\, \tr P = N,\, 0 \preceq P \preceq I,
  \end{aligned}
\end{equation}
and
\begin{equation}\label{eq:Pnonconvex'}
  \begin{aligned}
    & \min_{P \in \RR^{n\times n}} \tr(H P) \\
    & \text{s.t.} \  P = P^{\TT},\, \tr P = N,\, P = P^2. 
  \end{aligned}
\end{equation}
These two minimizations actually lead to the same result in the
non-degenerate case.

\begin{prop}\label{prop:equiv}
  Let $H$ be a symmetric $n \times n$ matrix. Assume that the $N$-th
  and $(N+1)$-th eigenvalues of $H$ are distinct, the minimizers of
  \eqref{eq:P'} and \eqref{eq:Pnonconvex'} are the same.
\end{prop}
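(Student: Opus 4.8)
The plan is to show that both minimization problems have the same unique minimizer, namely the spectral projector $P_\star = \sum_{i=1}^N \Phi_i \Phi_i^{\TT}$ onto the span of the first $N$ eigenvectors. First I would dispose of \eqref{eq:Pnonconvex'}: on the set $\DD$, every feasible $P$ is an orthogonal projection of rank $N$, so it can be written as $P = \Psi\Psi^{\TT}$ with $\Psi^{\TT}\Psi = I$, and $\tr(HP) = \tr(\Psi^{\TT} H \Psi)$. By the Ky Fan variational principle (or the Courant--Fischer characterization of sums of eigenvalues), this trace is minimized exactly when the columns of $\Psi$ span the eigenspace of the $N$ lowest eigenvalues; the non-degeneracy assumption that the $N$-th and $(N+1)$-th eigenvalues of $H$ are distinct makes this minimizing subspace unique, hence $P_\star$ is the unique minimizer of \eqref{eq:Pnonconvex'}.

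Next I would treat the convex problem \eqref{eq:P'}. Since $\CC$ is the convex hull of $\DD$ (as established in the discussion preceding the proposition) and the objective $P \mapsto \tr(HP)$ is linear, the minimum of \eqref{eq:P'} over $\CC$ equals the minimum over the extreme points, which are contained in $\DD$; thus $\min_{\CC}\tr(HP) = \min_{\DD}\tr(HP)$ and $P_\star$ is \emph{a} minimizer of \eqref{eq:P'}. To get uniqueness I would argue directly with eigenvalues: for any $P \in \CC$, writing the eigenvalues of $P$ as $0 \le \lambda_1 \le \cdots \le \lambda_n \le 1$ with $\sum_i \lambda_i = N$, and denoting by $h_1 \le \cdots \le h_n$ the eigenvalues of $H$, one has the sharp lower bound $\tr(HP) \ge \sum_{i=1}^n h_i \lambda_{\sigma(i)}$ minimized by pairing the largest occupations with the smallest eigenvalues, giving $\tr(HP)\ge \sum_{i=1}^N h_i$ with equality forcing $\lambda_i = 1$ for the $N$ lowest modes of $H$ and $\lambda_i = 0$ otherwise — and here the gap $h_N < h_{N+1}$ is exactly what rules out any fractional-occupation minimizer and pins down $P = P_\star$.

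The main obstacle is the uniqueness part for the convex problem: a linear functional on a convex set generically attains its minimum on a face rather than a vertex, so one genuinely needs the spectral gap to exclude minimizers on higher-dimensional faces (e.g.\ mixtures that fractionally occupy the $N$-th and $(N+1)$-th modes when $h_N = h_{N+1}$). The cleanest way to make the equality-case analysis rigorous is to combine the rearrangement-type inequality above with a careful bookkeeping of which $\lambda_i$ can be strictly between $0$ and $1$; alternatively one can invoke a trace inequality of the form $\tr(HP) \ge \sum_{i=1}^n h_i\,\mu_i$ where $\mu_i$ is the increasing rearrangement of the spectrum of $P$, together with the Schur--Horn description of $\CC$. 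Once equality in this inequality is shown to force the occupation numbers to be $\{1^N, 0^{n-N}\}$ aligned with the eigenbasis of $H$, uniqueness of both minimizers — and their coincidence — follows immediately.
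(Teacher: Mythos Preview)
Your proposal is correct, but the paper takes a shorter and more elementary route for the convex problem \eqref{eq:P'}. Instead of working with the eigenvalues of $P$ and invoking a von Neumann--type trace inequality (with its equality case) or Schur--Horn, the paper expands directly in the eigenbasis $\{v_i\}$ of $H$: setting $\theta_i(P) = v_i^{\TT} P v_i$, one gets $\tr(HP) = \sum_i h_i \theta_i$, with $\sum_i \theta_i = \tr P = N$ and $0 \le \theta_i \le 1$ immediate from $0 \preceq P \preceq I$. Minimizing the linear form $\sum_i h_i \theta_i$ over this box-plus-sum constraint is a trivial LP, and the gap $h_N < h_{N+1}$ makes $\theta_i = \mathbf{1}_{i\le N}$ the unique optimum; finally $v_i^{\TT} P v_i = 1$ with $P \preceq I$ forces $P v_i = v_i$ (and similarly $\theta_i = 0$ forces $P v_i = 0$), so $P = P_\star$.

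What your approach buys is a structural explanation: you separate the ``value equals'' part (linear objective on a convex hull attains its min on extreme points) from the uniqueness part (spectral gap rules out fractional occupations), and you make explicit that the heavy lifting in the uniqueness is a rearrangement/trace inequality. What the paper's approach buys is economy: by using the diagonal entries $\theta_i$ in the $H$-basis rather than the eigenvalues of $P$, it sidesteps the trace inequality and its equality analysis entirely, reducing everything to a one-line LP argument.
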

This is perhaps a folklore result in linear algebra, nevertheless we include
the short proof here for completeness.
\begin{proof}
  It is clear that the unique minimizer of \eqref{eq:Pnonconvex'} is
  given by the projection matrix on the first $N$ eigenvectors of $H$,
  given by 
  \begin{equation*}
    P_N = \sum_{i=1}^N v_i v_i^{\TT}
  \end{equation*}
  where $\{v_i\}, i = 1, \ldots, n$ are the eigenvectors of $H$,
  ordered according to their associated eigenvalues. Let us prove
  that \eqref{eq:P'} is minimized by the same solution.

  Assume $P$ is a minimizer of \eqref{eq:P'}, we calculate
  \begin{equation}\label{eq:trhp}
      \tr(HP)  = \sum_{i=1}^n v_i^{\TT} H P v_i 
      = \sum_{i=1}^n \lambda_i v_i^{\TT} P v_i
      = \sum_{i=1}^n \lambda_i \theta_i(P), 
  \end{equation}
  where $\theta_i(P) = v_i^{\TT} P v_i$. On the other hand, we have 
  \begin{equation*}
    \tr(P) = \sum_{i=1}^n v_i^{\TT} P v_i = \sum_{i=1}^n \theta_i(P) = N, 
  \end{equation*}
  and $0 \leq \theta_i(P) \leq 1$ since $0 \preceq P \preceq
  I$. Therefore, if we view \eqref{eq:trhp} as a variational problem
  with respect to $\{\theta_i\}$, it is clear that the unique minimum
  is achieved when
  \begin{equation*}
    \theta_i(P) = 
    \begin{cases}
      1, & i \leq N; \\
      0, & \text{otherwise}. 
    \end{cases}
  \end{equation*}
  We conclude the proof by noticing that the above holds if and only
  if $P = P_N$. 
\end{proof}

This result states that we can convexify the set of admissible matrices. 
We remark that, somewhat surprisingly, this result also holds for the
Hartree-Fock theory \cite{Lieb:77} which can be vaguely understood as
a nonlinear eigenvalue problem. However the resulting variational problem is still non-convex for the Hartree-Fock theory. 

Proposition~\ref{prop:equiv} implies that the variational principle \eqref{eqn:l1DensityMatrix} can be understood as an $\ell_1$ regularized version of the variational problem \eqref{eq:Pnonconvex'}. The equivalence no longer holds for \eqref{eqn:l1DensityMatrix} and \eqref{eq:Pnonconvex} with the $\ell_1$ regularization. The advantage of \eqref{eqn:l1DensityMatrix} over \eqref{eq:Pnonconvex} is that the former is a convex problem while the latter is not. 

Coming back to the properties of the variational problem \eqref{eqn:l1DensityMatrix}. We note that while the objective function of \eqref{eqn:l1DensityMatrix} is convex, it is not strictly convex as the $\ell_1$-norm is not strictly convex and the trace term is linear. Therefore, in general, the minimizer of \eqref{eqn:l1DensityMatrix} is not unique. 

\begin{example} Let $\mu \in \RR_+$, $N = 1$ and 
\begin{equation}
  H =
  \begin{pmatrix}
    1 & 0 \\
    0 & 1
  \end{pmatrix}, 
\end{equation}
The non-uniqueness comes from the degeneracy of the Hamiltonian
eigenvalues. Any diagonal matrix $P$ with trace $1$ and non-negative
diagonal entries is a minimizer.
\end{example}

\begin{example} Let $\mu = 1$, $N = 1$ and
\begin{equation}
  H = 
  \begin{pmatrix}
    1 & 0 & 0 \\
    0 & 2 & 2 \\
    0 & 2 & 2
  \end{pmatrix}
\end{equation}
The non-uniqueness comes from the competition between the trace term
and the $\ell_1$ regularization. The eigenvalues of $H$ are $0, 1$ and
$4$. Straightforward calculation shows that 
\begin{equation}
  P_0 = 
  \begin{pmatrix} 
    0 & 0 & 0 \\
    0 & 1/2 & -1/2 \\
    0 & -1/2 & 1/2 
  \end{pmatrix}
\end{equation}
which corresponds to the eigenvector $(0, \sqrt{2}/2,
-\sqrt{2}/2)^{\TT}$ associated with eigenvalue $0$ and
\begin{equation}
  P_1 = 
  \begin{pmatrix} 
    1 & 0 & 0 \\
    0 & 0 & 0 \\
    0 & 0 & 0 
  \end{pmatrix}
\end{equation}
which corresponds to the eigenvector $(1, 0, 0)^{\TT}$ associated with
eigenvalue $1$ are both minimizers of the objective function $\tr(HP)
+ \norm{P}_{1}$. Actually, due to convexity, any convex combination of
$P_0$ and $P_1$ is a minimizer too.

\end{example}

It is an open problem under what assumptions that the uniqueness is guaranteed.

\section{Algorithm}\label{sec:Algorithm}

To solve the proposed minimization problem \eqref{eqn:l1DensityMatrix}, we design a fast algorithm based on split Bregman iteration~\cite{Goldstein:2009split}, which comes from the ideas of variables splitting and Bregman iteration~\cite{Osher:2005}. Bregman iteration has attained intensive attention due to its efficiency in many $\ell_1$ related constrained optimization problems~\cite{Yin:2008bregman,yin2013error}. With the help of auxiliary variables, split Bregman iteration iteratively approaches the original optimization problem by computation of several easy-to-solve subproblems. This algorithm  popularizes the idea of using operator/variable splitting to solve optimization problems arising from information science. The equivalence of the split Bregman iteration to the alternating direction method of multipliers (ADMM), Douglas-Rachford splitting and augmented Lagrangian method can be found in \cite{Esser:2009CAM,Setzer:2009SSVMCV,Wu:2010SIAM}. 

By introducing auxiliary variables $Q = P$ and $R=P$, the optimization problem \eqref{eqn:l1DensityMatrix} is equivalent to 
\begin{equation}\label{eq:P_split}
  \begin{aligned}
    & \min_{P, Q, R \in \RR^{n\times n}} \frac{1}{\mu} \norm{Q}_1 + \tr(H P) \\
    & \text{s.t.} \  Q = P,\, R = P,\,  \tr P = N,\,R = R^{\TT},\, 0 \preceq R \preceq I,
  \end{aligned}
\end{equation}
which can be iteratively solved by:
\begin{align}
  (P^k,Q^k,R^k) &= \arg\min_{P, Q, R \in \RR^{n\times n}} \frac{1}{\mu} \norm{Q}_1 + \tr(H P) + \frac{\lambda}{2} \|P - Q + b\|_F^2 + \frac{r}{2}\|P - R + d\|^2_F  \\
  & \qquad \text{s.t.} \qquad \tr P = N,\,R = R^{\TT},\, 0 \preceq R \preceq I,   \nonumber  \label{eqn:PQR}\\
  b^k &= b^{k-1} + P^k - Q^k   \\
  d^k &= d^{k-1} + P^k - R^k
\end{align}
where variables $b, d$ are essentially Lagrangian multipliers and parameters $r, \lambda$ control the penalty terms.
Solving $P^k, Q^k, R^k$ in \eqref{eqn:PQR} alternatively, we have the following algorithm.
\begin{algorithm}
\label{alg:CM_P}
Initialize $Q^0 = R^0 = P^0 \in\mathcal{C} , b^0 = d^0 = 0$

\While{``not converge"}{
\begin{enumerate}
\item $\displaystyle P^k = \arg\min_{P \in \RR^{n\times n}}\tr(H P) + \frac{\lambda}{2} \|P - Q^{k-1} + b^{k-1}\|_F^2 + \frac{r}{2}\|P - R^{k-1} + d^{k-1}\|^2_F, ~ \text{s.t.} ~ \tr P = N $.
\item $\displaystyle Q^k = \arg\min_{Q  \in \RR^{n\times n}} \frac{1}{\mu}\norm{Q}_1+ \frac{\lambda}{2} \|P^k - Q + b^{k-1}\|_F^2  $.
\item $R^k = \displaystyle\arg\min_{R  \in \RR^{n\times n}}   \frac{r}{2}\|P^{k} - R + d^{k-1}\|^2_F, \quad \text{s.t.} \quad    R = R^{\TT},\, 0 \preceq R \preceq I$.
\item $b^{k} = b^{k-1} +  P^k - Q^k$.
\item $d^{k} = d^{k-1} +  P^k - R^k$.
\end{enumerate}
}
\end{algorithm}

Note that the minimization problem in the steps of Algorithm~\ref{alg:CM_P} can be solved explicitly, as follows:
\begin{align}
  P^k &= B^k - \frac{\tr (B^k) - N}{n}, \\
  & \quad  \text{where } \quad B^k = \frac{\lambda}{\lambda+r}(Q^{k-1} - b^{k-1}) + \frac{r}{\lambda+r}(R^{k-1} - d^{k-1}) - \frac{1}{\lambda +r} H   \notag \\
  Q^k &= \operatorname{Shrink}\left(P^k +b^{k-1},\frac{1}{\lambda\mu} 
\right) 
= \operatorname{sign}(P^k + b^{k-1})\max\left\{|P^{k} +b^{k-1}| - \frac{1}{\lambda\mu},0\right\}                \\
  R^k &= V\min\{\max\{D,0\},1\}V^T , \text{ where } [V,~ D] =
  \operatorname{eig}(P^k + d^{k-1}).
\end{align}

Starting form any initial guess, the following theorem guarantees that the algorithm converges to one of the minimizers of the variational problem \eqref{eq:P_split}. 
\begin{theorem}\label{thm:conv}
  The sequence $\big\{(P^k, Q^k, R^k)\big\}_k$ generated by
  Algorithm~\ref{alg:CM_P} from any starting point converges to a
  minimum of the variational problem \eqref{eq:P_split}.
\end{theorem}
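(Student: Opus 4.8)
The plan is to recognize Algorithm~\ref{alg:CM_P} as an instance of the alternating direction method of multipliers (ADMM) applied to the splitting~\eqref{eq:P_split}, and then to invoke the standard ADMM convergence machinery, adapted to the fact that here we split into \emph{three} blocks of variables with two distinct multipliers $b$ and $d$. First I would reformulate~\eqref{eq:P_split} in the canonical two-block form by grouping: let the first block be $P$ (constrained to the affine set $\tr P = N$) and the second block be the pair $(Q,R)$ (with $Q$ free and $R$ constrained to the spectrahedron $0 \preceq R \preceq I$, $R = R^\TT$), with the coupling constraint being the pair of linear equalities $P - Q = 0$ and $P - R = 0$. Because the objective separates as $g(P) = \tr(HP) + \chi_{\{\tr P = N\}}(P)$ and $h(Q,R) = \frac{1}{\mu}\norm{Q}_1 + \chi_{\{0 \preceq R \preceq I\}}(R)$, and the $Q$- and $R$-subproblems decouple given $P$, the three-step update in the algorithm is exactly the two-block ADMM update for this grouping. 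I would write down the augmented Lagrangian $L(P,Q,R,b,d)$, verify that steps (1)--(3) are its exact block minimizations (this is what the closed-form formulas after the algorithm confirm), and that steps (4)--(5) are the standard dual ascent steps with step size equal to the penalty parameter.

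Next I would establish the three ingredients needed for convergence. (i) \emph{Solvability / well-posedness}: the feasible set $\mathcal{C}$ is nonempty, compact and convex, $H$ is symmetric, so the primal problem~\eqref{eq:P_split} attains a minimizer; moreover each subproblem has a (unique, given the quadratic penalties) solution, as exhibited by the explicit formulas — in particular the projection onto $\tr P = N$, the soft-thresholding (shrinkage) for $Q$, and the eigenvalue clipping $V\min\{\max\{D,0\},1\}V^\TT$ for $R$. (ii) \emph{Existence of a KKT point}: since the problem is convex with only affine equality constraints and a closed convex set constraint, and a Slater-type condition holds (e.g.\ $P = R = Q = (N/n)I$ is strictly feasible for $0 \prec R \prec I$ when $0 < N < n$), strong duality holds and there exists $(P^*,Q^*,R^*,b^*,d^*)$ satisfying the KKT system. (iii) \emph{The contraction/Lyapunov estimate}: define the nonnegative quantity
\begin{equation*}
  \Delta^k = \frac{\lambda}{2}\norm{b^k - b^*}_F^2 + \frac{r}{2}\norm{d^k - d^*}_F^2 + \frac{\lambda}{2}\norm{Q^k - Q^*}_F^2 + \frac{r}{2}\norm{R^k - R^*}_F^2,
\end{equation*}
and show, using convexity of $g$ and $h$ together with the variational inequalities characterizing the subproblem minimizers, that
\begin{equation*}
  \Delta^{k-1} - \Delta^k \ge \frac{\lambda}{2}\norm{P^k - Q^k}_F^2 + \frac{r}{2}\norm{P^k - R^k}_F^2 + (\text{extra nonnegative terms in consecutive differences}).
\end{equation*}
Summing over $k$ gives that $\Delta^k$ is monotone nonincreasing (hence convergent and bounded), and that the residuals $P^k - Q^k \to 0$, $P^k - R^k \to 0$, and the successive differences $Q^k - Q^{k-1} \to 0$, $R^k - R^{k-1} \to 0$ all tend to zero.

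Finally I would pass to the limit. Boundedness of $\Delta^k$ gives boundedness of $\{(Q^k,R^k,b^k,d^k)\}$, and then boundedness of $\{P^k\}$ follows from $P^k - Q^k \to 0$; extract a convergent subsequence with limit $(P^\infty,Q^\infty,R^\infty,b^\infty,d^\infty)$. Using the vanishing of the residuals and successive differences, I would check that this limit satisfies the KKT conditions of~\eqref{eq:P_split} — the equality constraints $P^\infty = Q^\infty = R^\infty$ come from the residuals, while the subgradient inclusions pass to the limit by closedness of $\partial\norm{\cdot}_1$ and of the normal cones to $\{\tr P = N\}$ and to the spectrahedron (here one needs continuity of the eigenvalue-clipping map, or equivalently closedness of the projection onto $0 \preceq R \preceq I$). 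Hence $(P^\infty, Q^\infty, R^\infty)$ is a minimizer of~\eqref{eq:P_split}. To upgrade subsequential convergence to convergence of the whole sequence, I would use a standard argument: $(P^\infty, Q^\infty, R^\infty, b^\infty, d^\infty)$ is itself a valid KKT point, so it may be used in place of the starred point in the definition of $\Delta^k$; then $\Delta^k$ (with this choice) is nonincreasing and has a subsequence tending to $0$, forcing $\Delta^k \to 0$, i.e.\ the whole sequence converges to $(P^\infty,Q^\infty,R^\infty)$.

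The main obstacle I anticipate is the Lyapunov inequality in step (iii): because the splitting carries two multipliers and three primal blocks, the bookkeeping in combining the three subproblem optimality conditions is more delicate than in textbook two-block ADMM, and one must be careful that the cross terms telescope correctly and that the ``extra nonnegative terms'' are genuinely of one sign (this is where the specific grouping $(Q,R)$ as one block, so that they are updated in parallel from the same $P^k$, matters). A secondary technical point is justifying that the limit of the $R$-updates lands in the normal cone of the spectrahedron, which relies on the continuity of the spectral projection $X \mapsto V\min\{\max\{D,0\},1\}V^\TT$ — true, but worth stating carefully since the eigen-decomposition itself is not continuous when eigenvalues cross.
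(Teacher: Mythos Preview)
Your proposal is correct and follows essentially the same route as the paper: the same Lyapunov quantity $\lambda\|b^k-b^*\|^2+r\|d^k-d^*\|^2+\lambda\|Q^k-Q^*\|^2+r\|R^k-R^*\|^2$, the same monotonicity inequality with residuals $\|P^k-Q^k\|$, $\|P^k-R^k\|$ and successive differences on the right, and the same trick of replacing $(P^*,Q^*,R^*,b^*,d^*)$ by the subsequential limit to upgrade to full-sequence convergence. The only cosmetic differences are that you package the argument as two-block ADMM by grouping $(Q,R)$ (valid, since both are updated from the same $P^k$), and you verify optimality of the limit via closedness of the subdifferentials rather than, as the paper does, by sandwiching the objective value using the saddle-point inequality on one side and the subproblem optimality conditions on the other.
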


We will prove a slightly more general version of the above
(Theorem~\ref{thm:three}). The idea of the proof follows from the
general framework of analyzing split Bregman iteration,
\textit{i.e.}~alternating direction method of multipliers (ADDM), see
for example \cite{GlowinskiLeTallec:89}. The standard proof needs to
be generalized to cover the current case of ``two level splitting''
and the non-strictly convexity of the functionals. We defer the
detailed proof to Section~\ref{sec:proof}.

\section{Numerical results}\label{sec:numerics}

In this section, numerical experiments are presented to demonstrate
the proposed model \eqref{eqn:l1DensityMatrix} for density matrix
computation using algorithm \ref{alg:CM_P}. We illustrate our
numerical results in three representative cases, free electron model,
Hamiltonian with energy band gap and a non-uniqueness example of the
proposed optimization problem. All numerical experiments are
implemented by \textsf{MATLAB} in a PC with a 16G RAM and a 2.7 GHz CPU.

\subsection{1D Laplacian}
In the first example, we consider the proposed model for the free
electron case, in other words, we consider the potential free
Schr\"{o}dinger operator $-1/2\Delta$ defined on 1D domain $\Omega =
[0,~ 100]$ with periodic boundary condition.  This model approximates
the behavior of valence electrons in a metallic solid with weak atomic
pseudopotentials.
In this case, the matrix $H$ is a central difference discretization of
$-1/2\Delta$ on $[0, ~100]$ with equally spaced $256$ points, and we
take $N = 10$.
Figure~\ref{fig:DensityMatrix_Lap}(a) illustrates the true density
matrix $\sum_{i=1}^{10} |\phi_i\rangle \langle \phi_i |$ obtained by
the first $10$ eigenfunctions of $H$. As the free Laplacian does not
have a spectral gap, the density matrix decays slowly in the
off-diagonal direction.  Figure~\ref{fig:DensityMatrix_Lap}(b) and
(c) plot the density matrices obtained from the proposed model with
parameter $\mu = 10$ and $100$. Note that they are much localized than
the original density matrix. As $\mu$ gets larger, the variational
problem imposes a smaller penalty on the sparsity, and hence the
solution for $\mu = 100$ has a wider spread than that for $\mu = 10$.

\begin{figure}[ht]
\centering
\includegraphics[width=.5\linewidth]{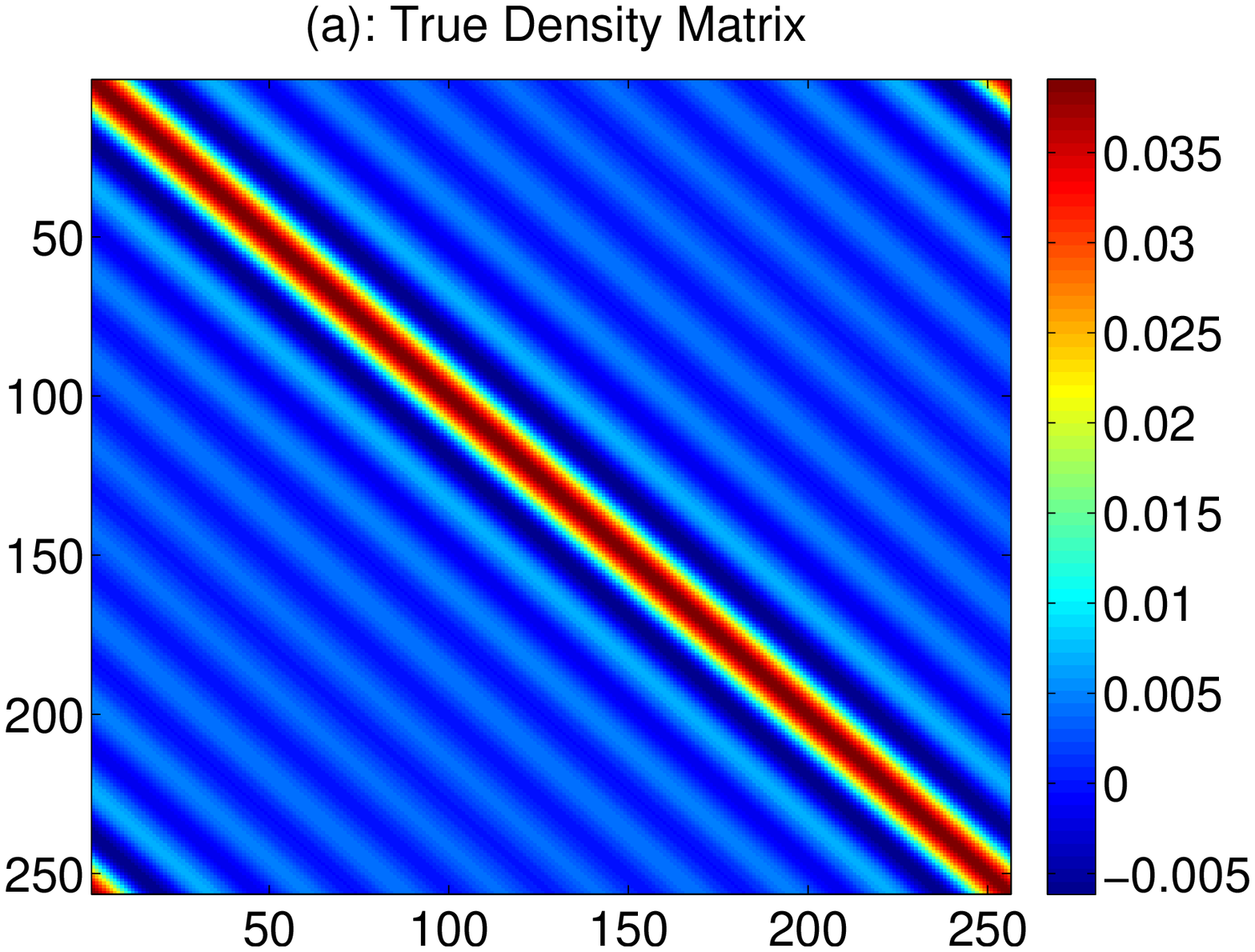}\\
\begin{minipage}{0.49\linewidth}
\includegraphics[width=1\linewidth]{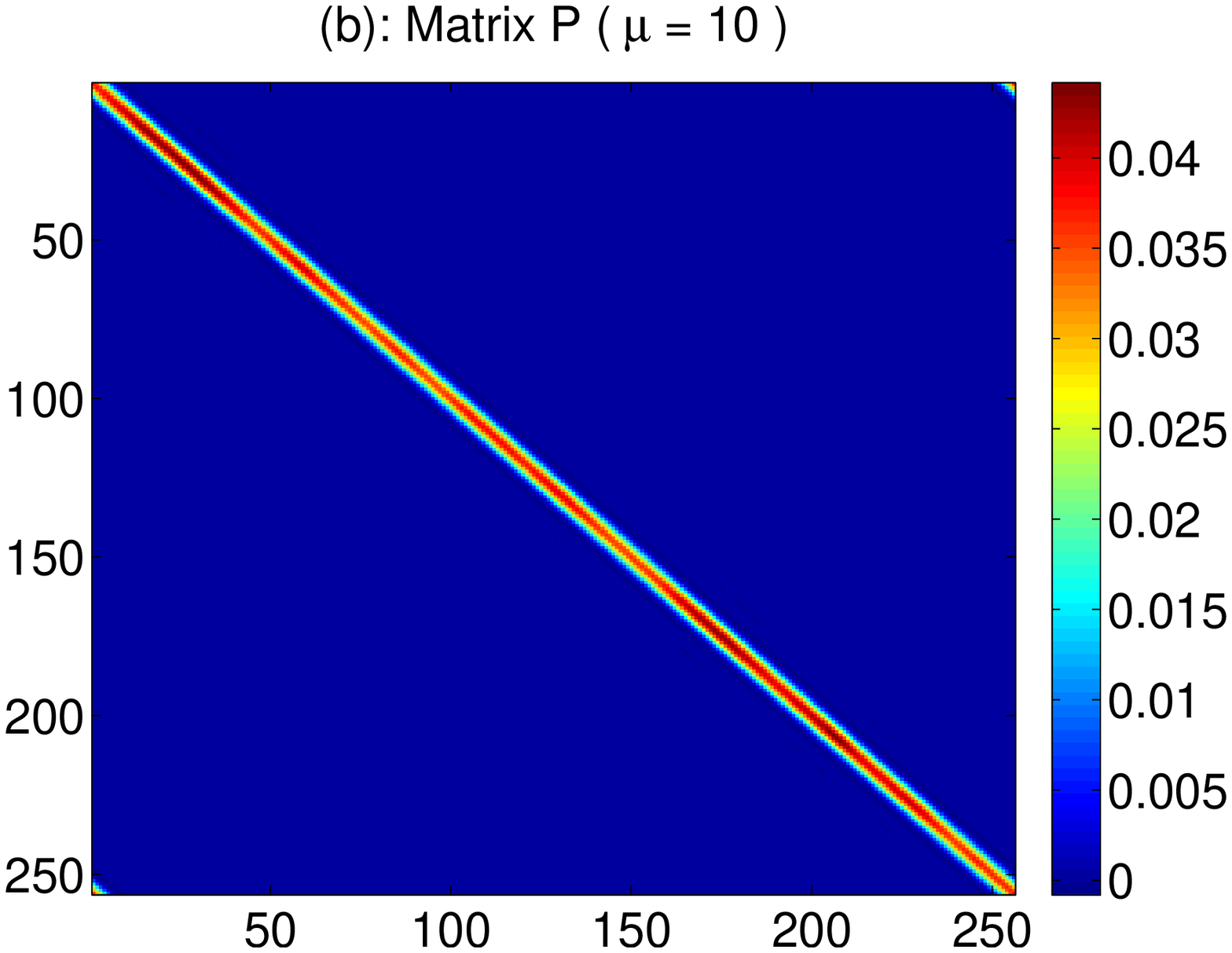}\\
\end{minipage}\hfill
\begin{minipage}{0.49\linewidth}
\includegraphics[width=1\linewidth]{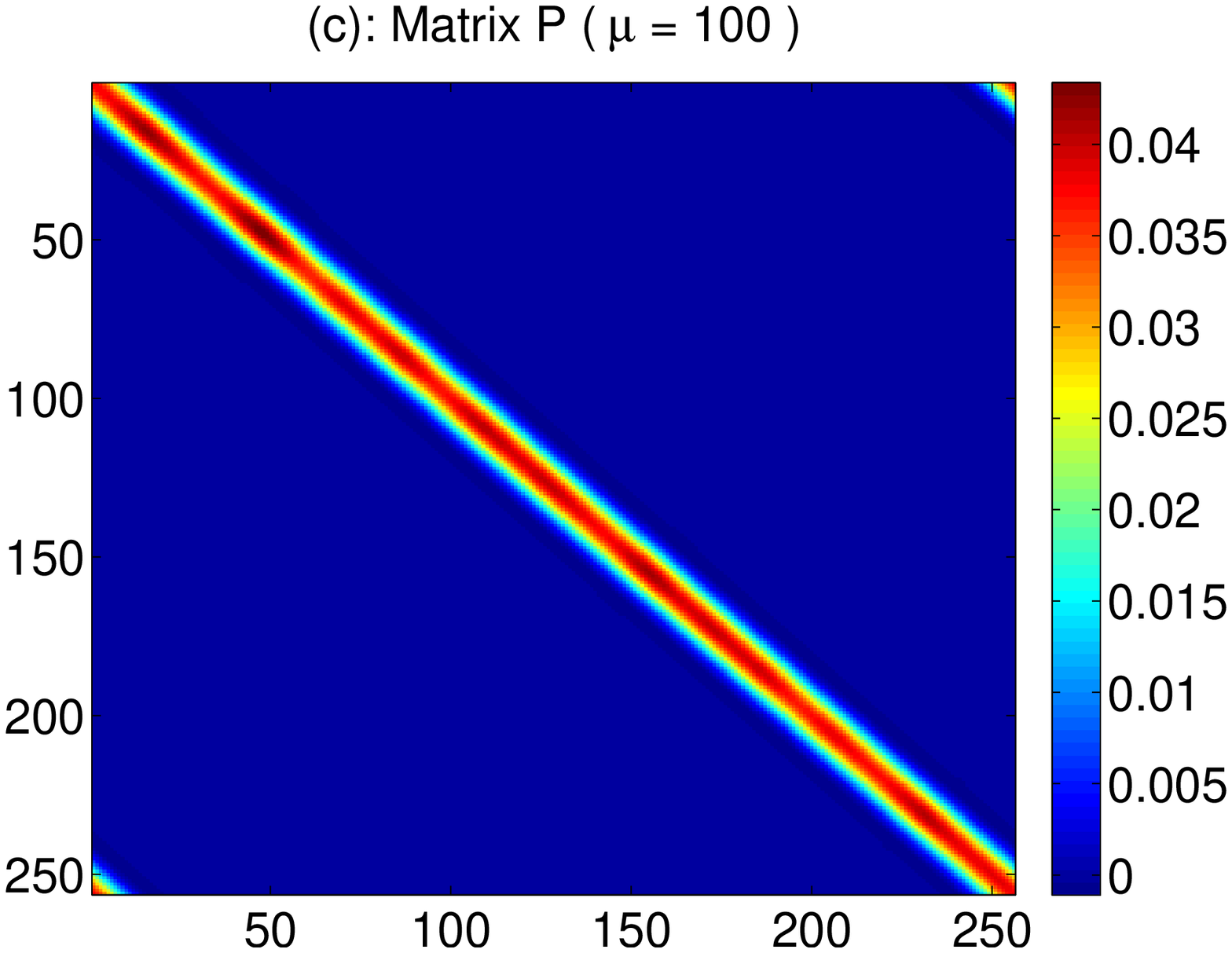}
\end{minipage}\hfill
\caption{(a): The true density matrix obtained by the first 10
  eigenfunctions of $H$. (b), (c): solutions of the density matrices
  with $\mu = 10, 100$ respectively. \label{fig:DensityMatrix_Lap}}
\end{figure}

After we obtain the sparse representation of the density
matrix $P$, we can find localized Wannier functions as its action on
the delta functions, as plotted in Figure~\ref{fig:Projection_Lap} upper and lower pictures for $\mu = 10$ and $100$ respectively. 

\begin{figure}[htbp]
\includegraphics[width=.8\linewidth]{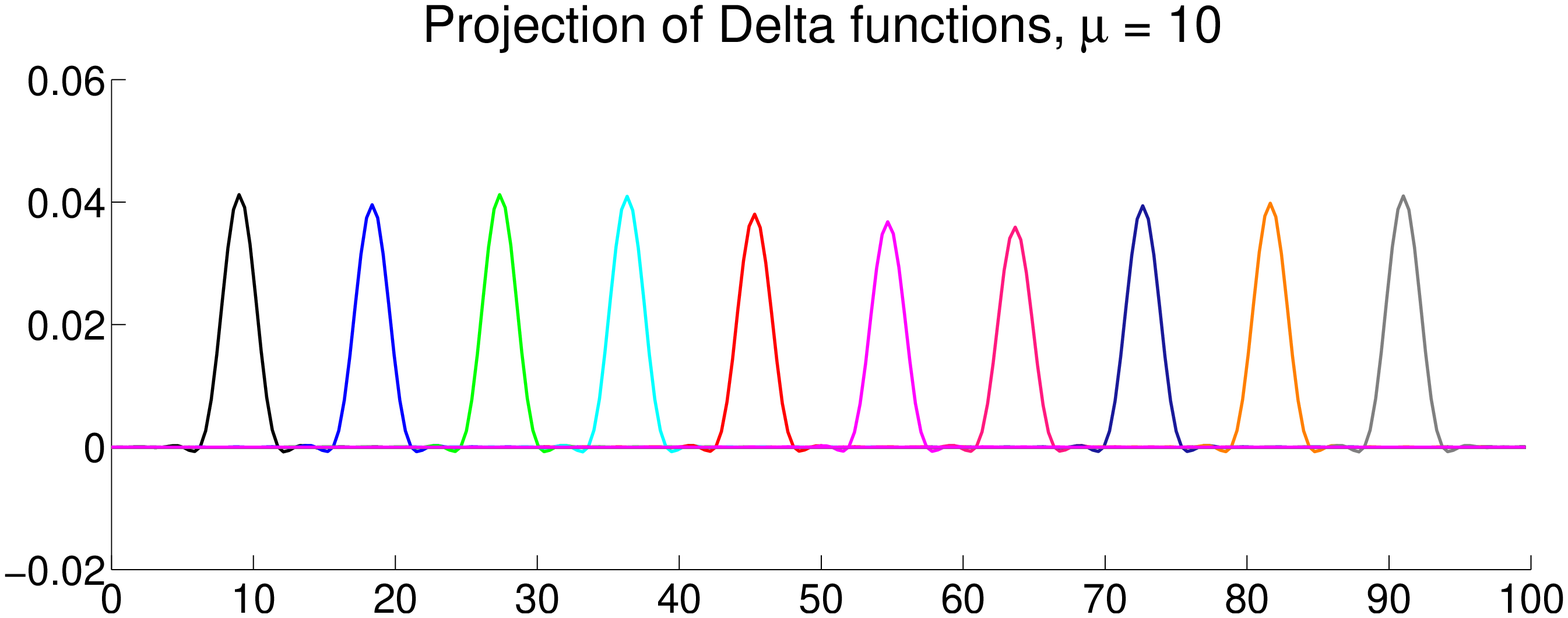}\\
\vspace{0.2cm}
\includegraphics[width=.8\linewidth]{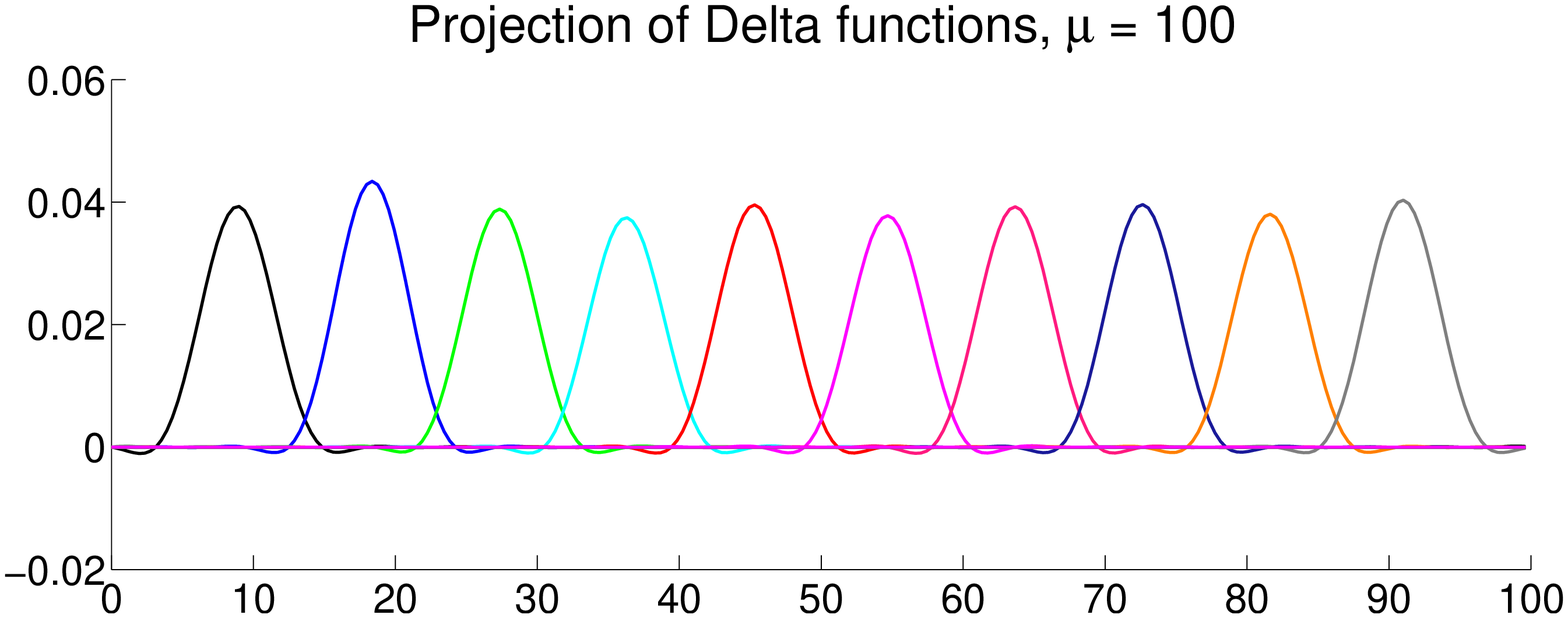}
\caption{Projection of Delta function $\delta(x - x_i)$ using density
  matrices with $\mu = 10$ (upper) and $\mu = 100$ (lower)
  respectively.}
 \label{fig:Projection_Lap}
\end{figure}

To indicate the approximation behavior of the proposed model, we
consider the energy function approximation of $ \frac{1}{\mu}
\norm{P}_1 + \tr(H P) $ to $\sum_{i=1}^{10} \langle \phi_i | H |
\phi_i \rangle$ with different values of $\mu$. In addition, we define
$\sum_{i=1}^{10} \| \phi_i - P \phi_i\|^2$ as a measurement for the
space approximation of the density matrix $P$ to the lower eigen-space
$Span\{\phi_i\}_{i=1}^{10}$.  Figure~\ref{fig:DensityFunApprox_Lap}
reports the energy approximation and the space approximation with
different values of $\mu$. Both numerical results suggest that the
proposed model will converge to the energy states of the
Schr\"{o}dinger operator. We also remark that even though the exact
density matrix is not sparse, a sparse approximation gives fairly good
results in terms of energy and space approximations.

\begin{figure}[h]
\centering
\includegraphics[width=.8\linewidth]{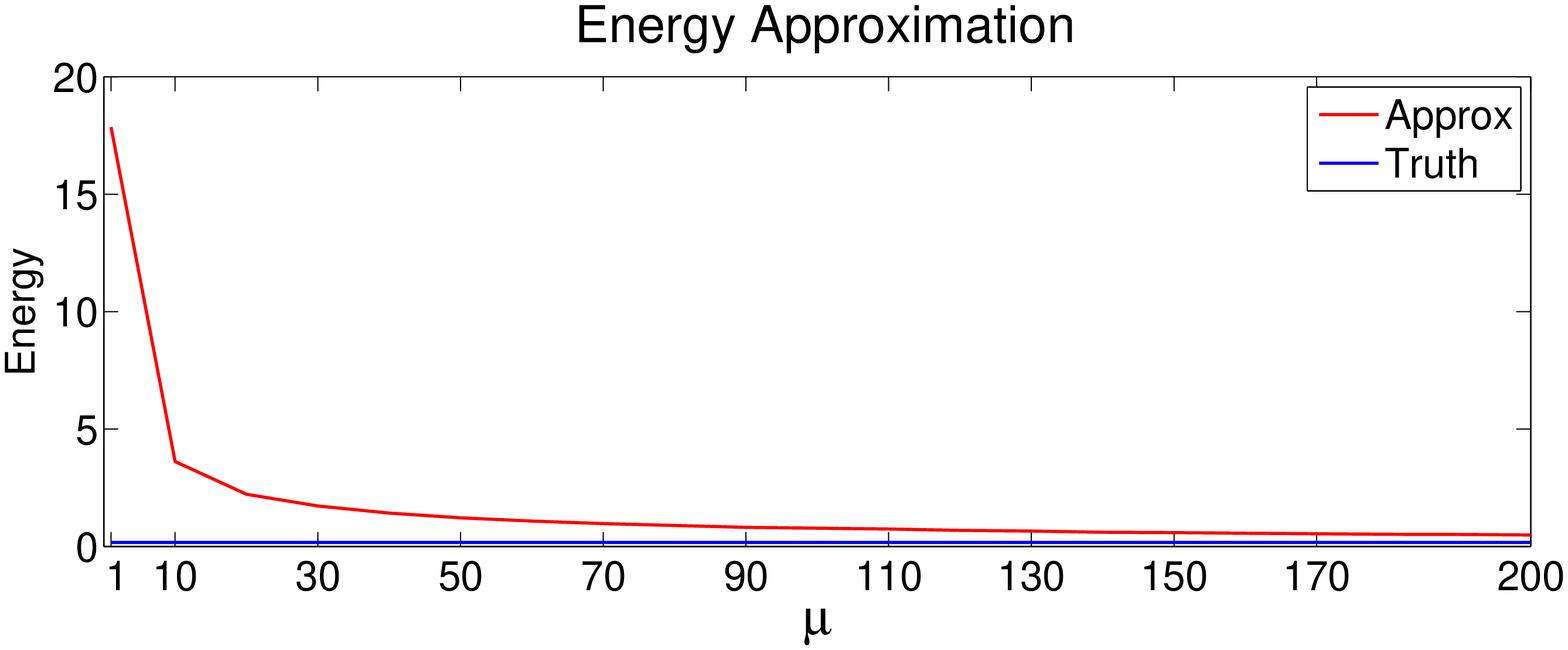}\\
\includegraphics[width=.8\linewidth]{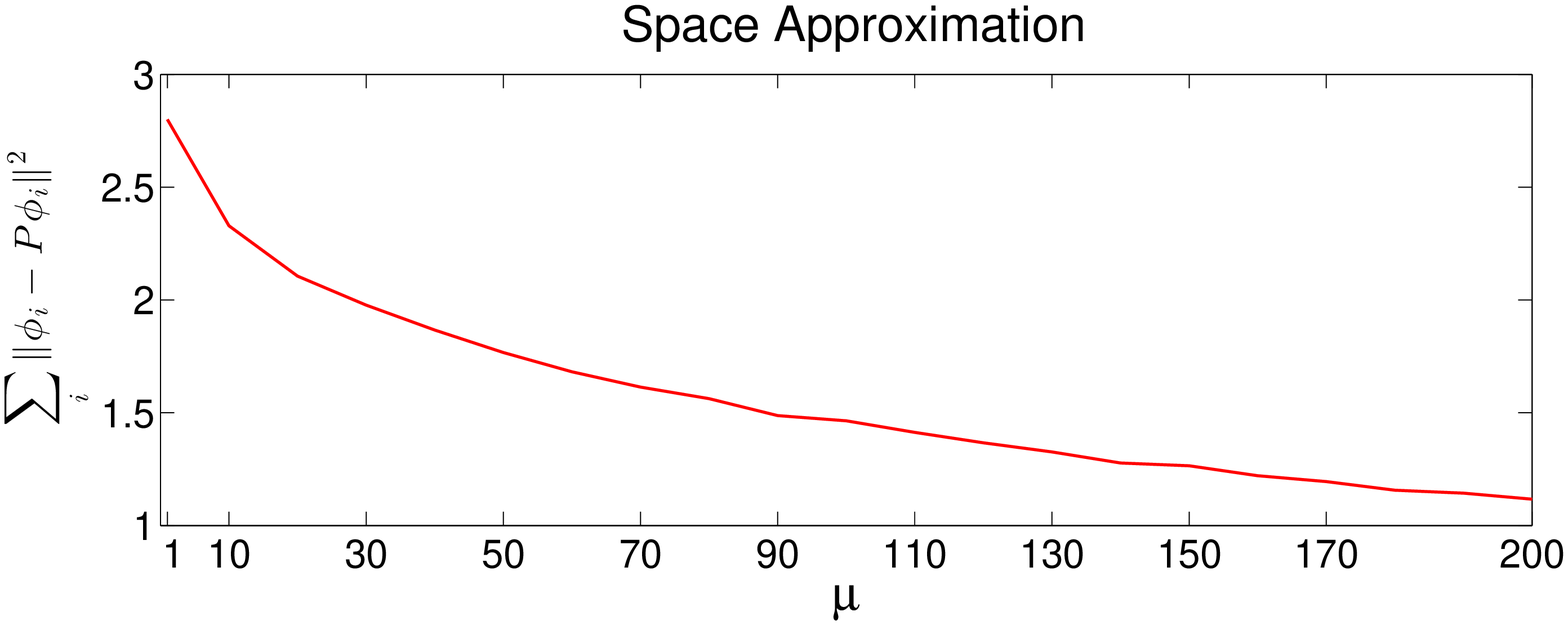}\\
\caption{Upper: Energy approximation as a function of $\mu$. Lower: Space
  approximation as a function of $\mu$.}
\label{fig:DensityFunApprox_Lap}
\end{figure}


\subsection{1D Hamiltonian operator with a band gap}
We then consider a modified Kronig--Penney (KP)
model~\cite{Kronig:1931quantum} for a one-dimensional insulator.  The
original KP model describes the states of independent electrons in a
one-dimensional crystal, where the potential function $V(x)$ consists
of a periodic array of rectangular potential wells. We replace the
rectangular wells with inverted Gaussians so that the potential is
given by 
\begin{equation*}
  V(x) = -V_0\sum_{j=1}^{N_{\text{at}}} \exp\left[ -\frac{(x -
      x_j)^2}{\delta^2} \right], 
\end{equation*}
where $N_{\text{at}}$ gives the number of potential wells. In our
numerical experiments, we choose $N_{\text{at}} = 10$ and $x_j = 100 j
/ 11$ for $j = 1, \ldots, N_{\text{at}}$, and the domain is $[0, 100]$
with periodic boundary condition. The potential is plotted in
Figure~\ref{fig:V_KP}(a). For this given potential, the Hamiltonian
operator $H = - \tfrac{1}{2} \Delta + V(x)$ exhibits two low-energy
bands separated by finite gaps from the rest of the eigenvalue
spectrum (See Figure~\ref{fig:V_KP}(b)). Here a centered difference is
used to discretize the Hamiltonian operator.

\begin{figure}[ht]
\centering
\begin{minipage}{0.6\linewidth}
\includegraphics[width=1\linewidth]{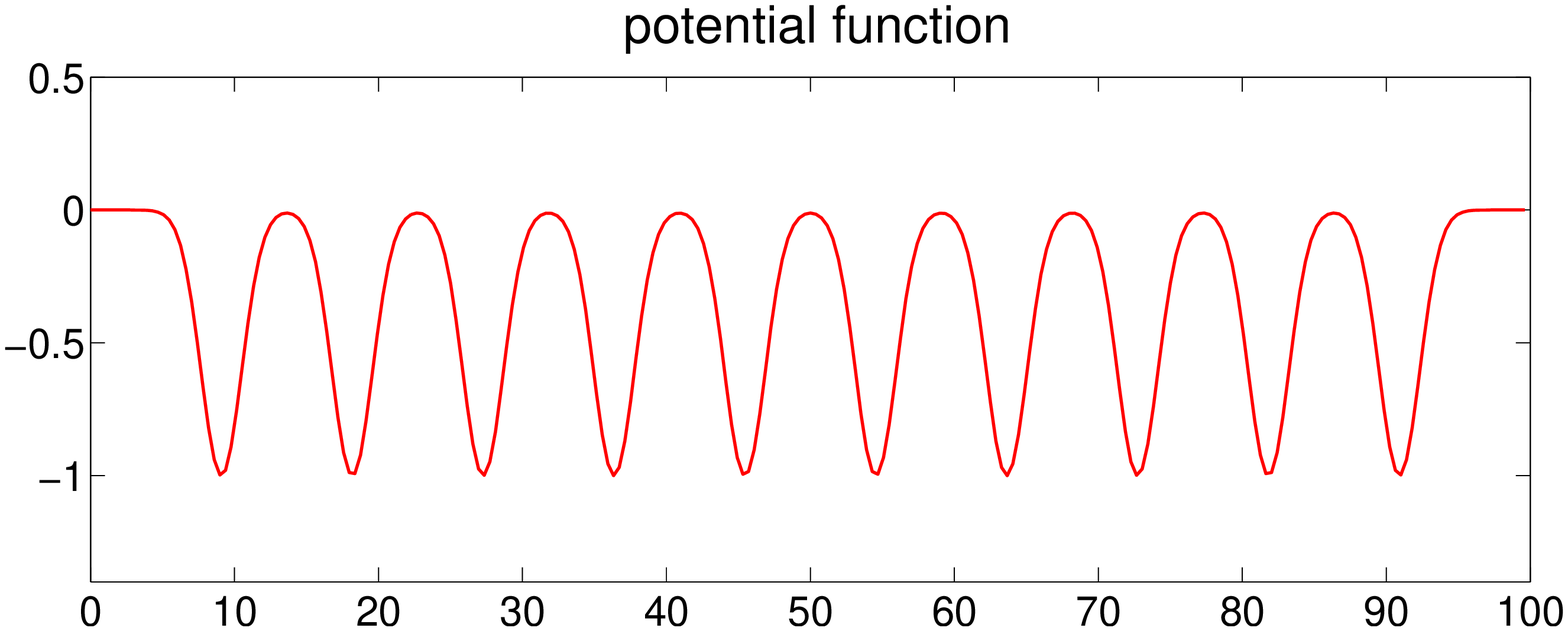}\\
\end{minipage}\hfill
\begin{minipage}{0.39\linewidth}
\includegraphics[width=.9\linewidth]{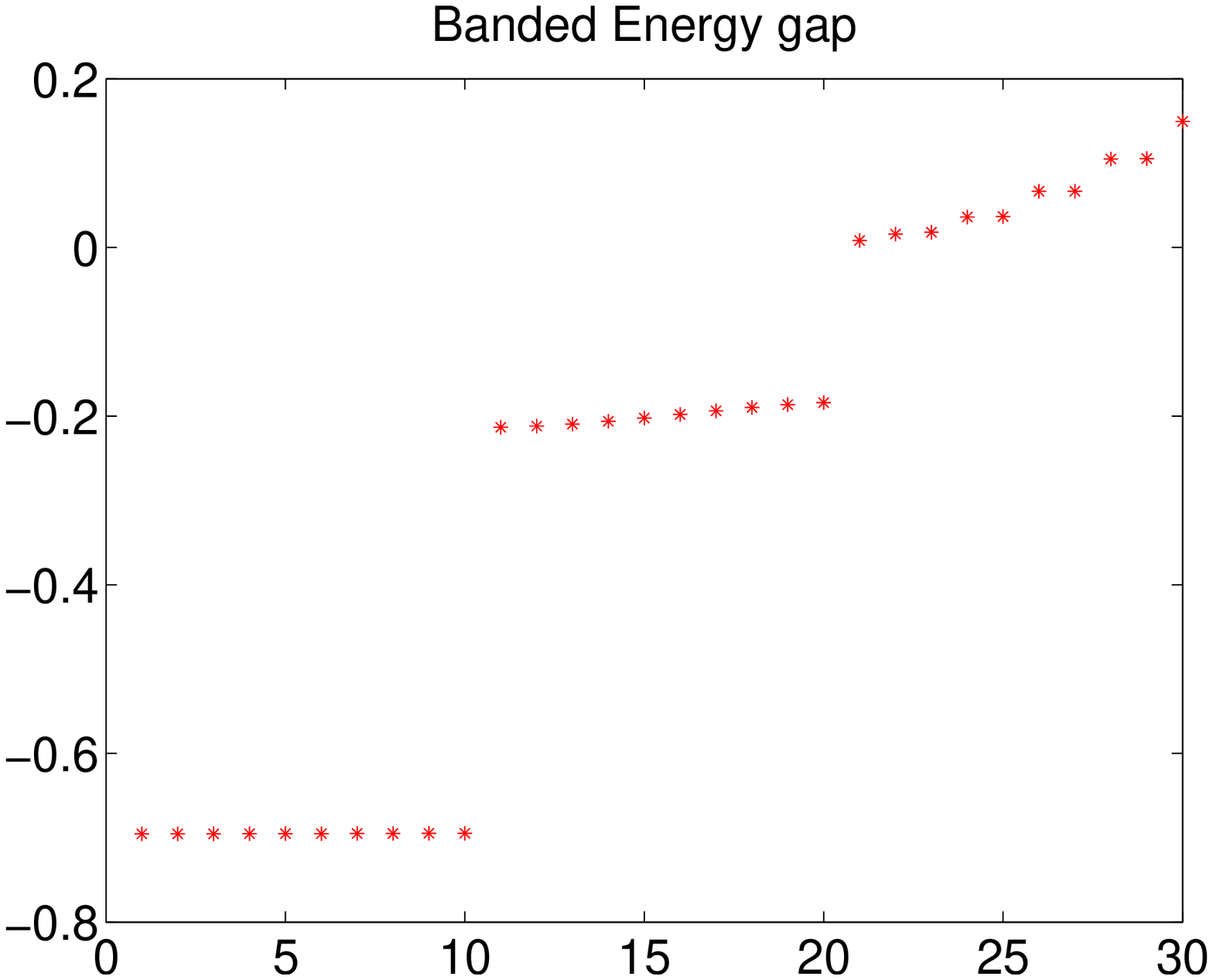}\\
\end{minipage}\hfill\\
\begin{minipage}{0.6\linewidth}
\centering (a)
\end{minipage}\hfill
\begin{minipage}{0.39\linewidth}
\centering (b)
\end{minipage}
\caption{(a): The potential function in the modified Kronig-Penney
  model. (b): The spectrum of the (discretized) Hamiltonian operator.}
\label{fig:V_KP}
\end{figure}

We consider three choices of $N$ for this model: $N = 10$, $N = 15$
and $N = 20$. They correspond to three interesting physical situations
of the model, as explained below. 

For $N = 10$, the first band of the Hamiltonian is occupied, and hence
the system has a spectral gap between the occupied and unoccupied
states. As a result, the associated density matrix is exponentially
localized, as shown in Figure~\ref{fig:DensityFunApprox_KP}(a). The
resulting sparse representation from the convex optimization is shown
in Figure~\ref{fig:DensityFunApprox_KP}(b) and (c) for $\mu = 10$ and
$100$ respectively. We see that the sparse representation agrees well
with the exact density matrix, as the latter is very localized.  The
Wannier functions obtained by projection of delta functions are shown
in Figure~\ref{fig:Projection_KP}. As the system is an
insulator, we see that the localized representation converges quickly
to the exact answer when $\mu$ increases. This is further confirmed in
Figure~\ref{fig:DensityFunApprox_KP_energy} where the energy
corresponding to the approximated density matrix and space
approximation measurement $\sum_{i=1}^{10} \| \phi_i - P \phi_i\|^2$
are plotted as functions of $\mu$.

\begin{figure}[ht]
\centering
\includegraphics[width=.5\linewidth]{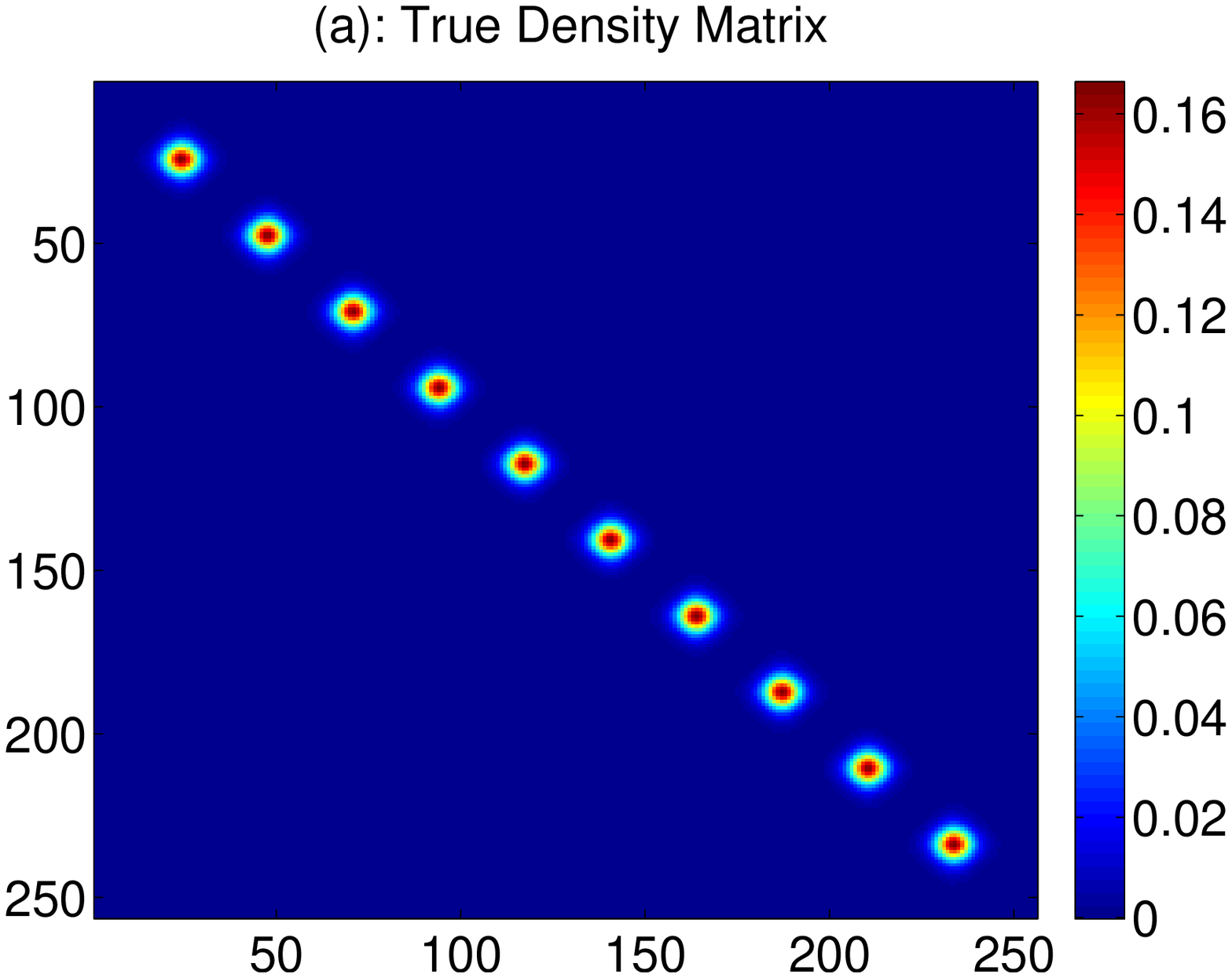}\\
\begin{minipage}{0.49\linewidth}
\includegraphics[width=1\linewidth]{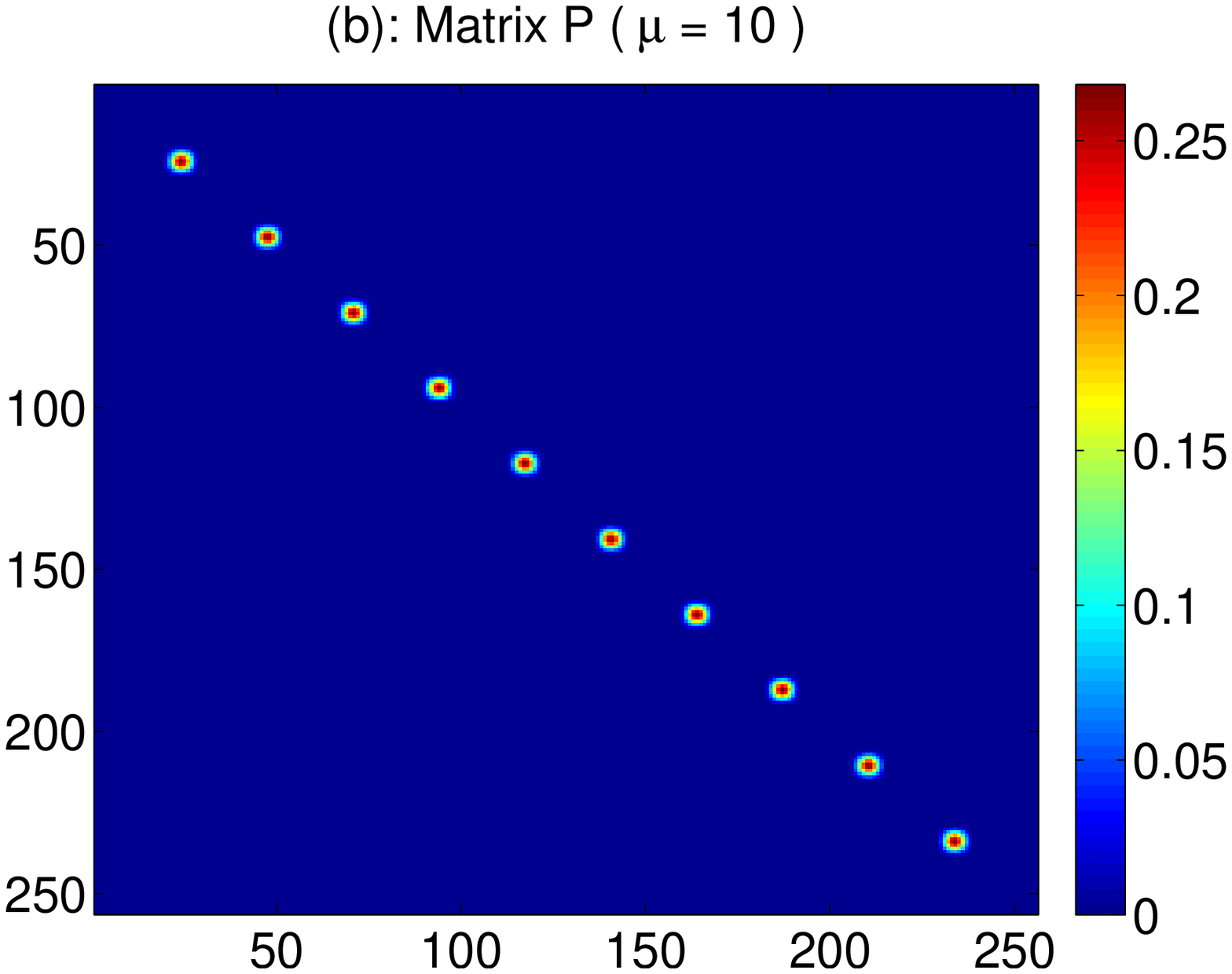}\\
\end{minipage}\hfill
\begin{minipage}{0.49\linewidth}
\includegraphics[width=1\linewidth]{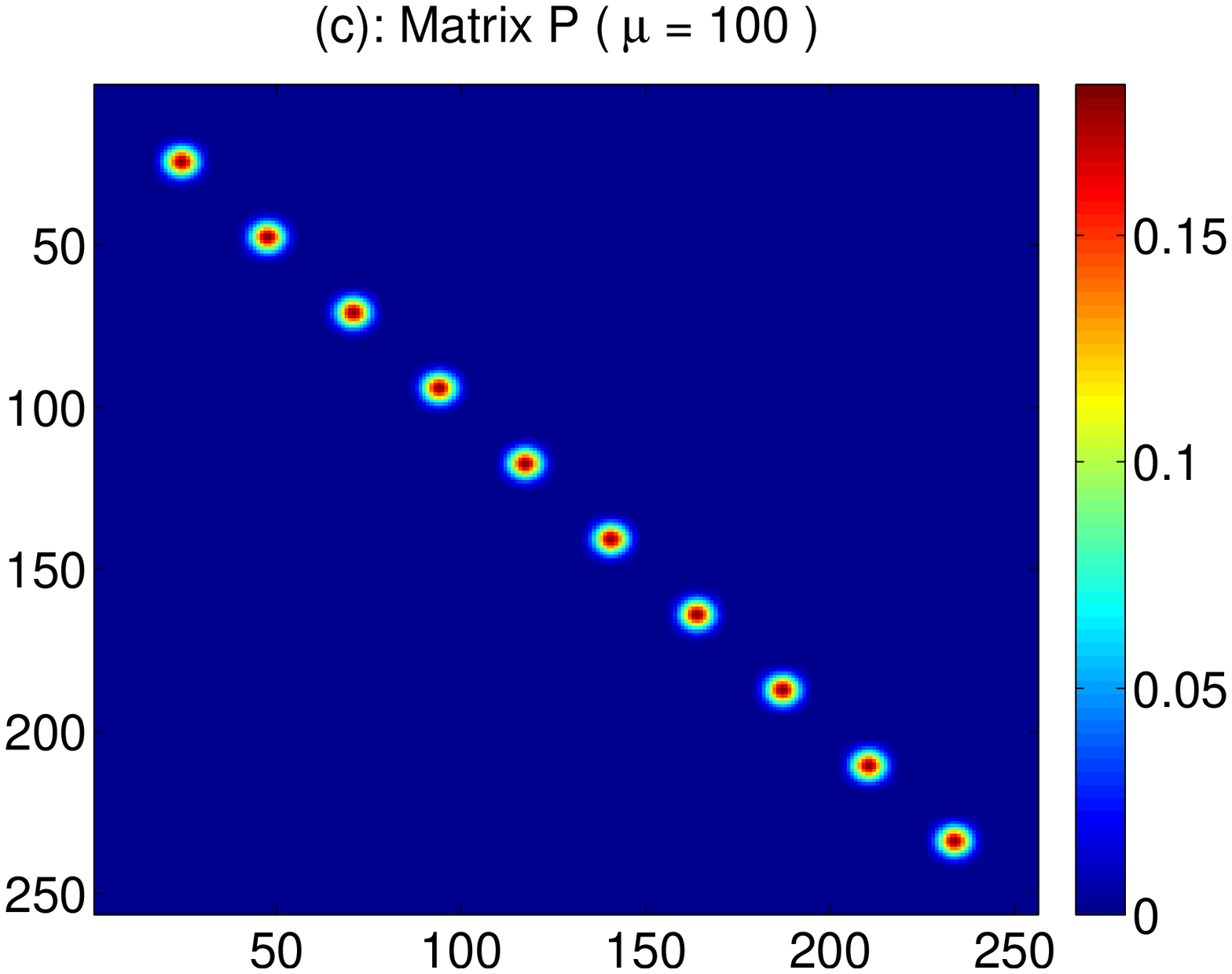}\\
\end{minipage}\hfill
\caption{(a): The true density matrix obtained by the first 10 eigenfunctions of $H$. (b), (c): solutions of the density matrices with $\mu = 10, 100$ respectively.}
\label{fig:DensityFunApprox_KP}
\end{figure}

\begin{figure}[htbp]
\centering
\includegraphics[width=.7\linewidth]{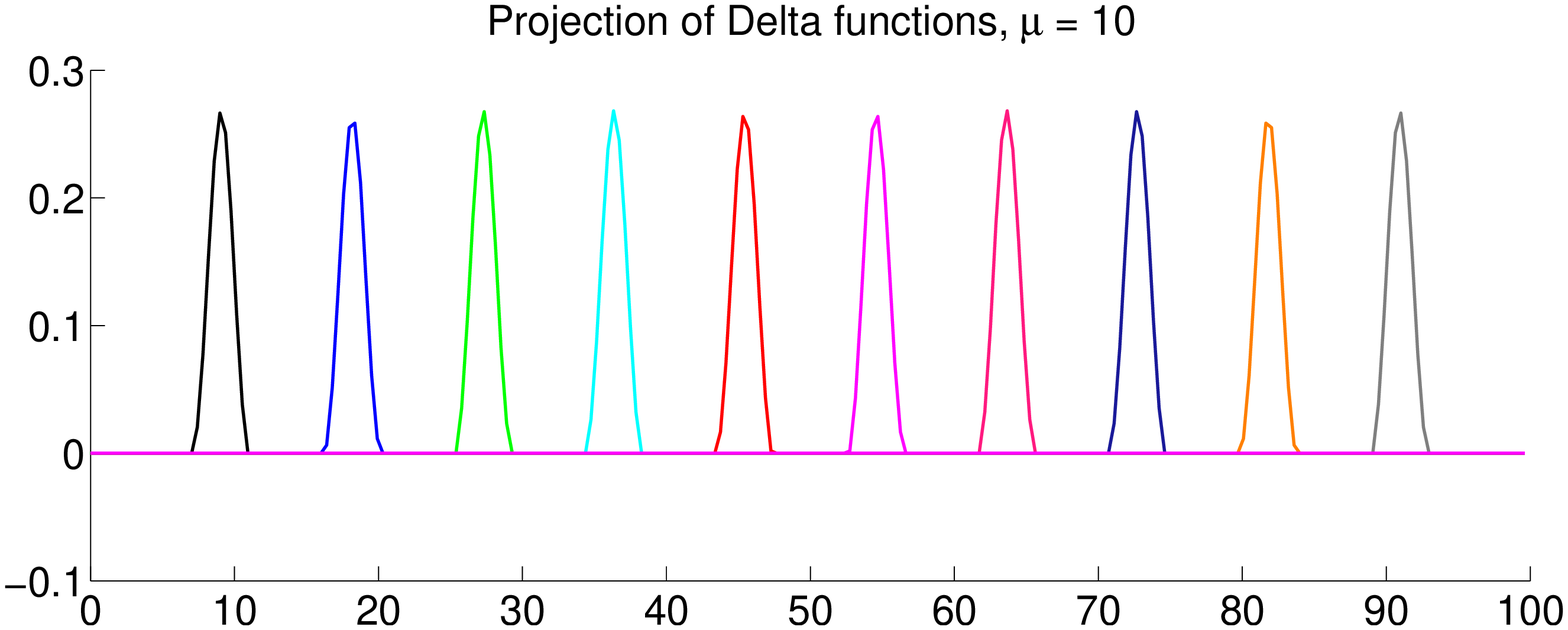}\\
\centering
\includegraphics[width=.7\linewidth]{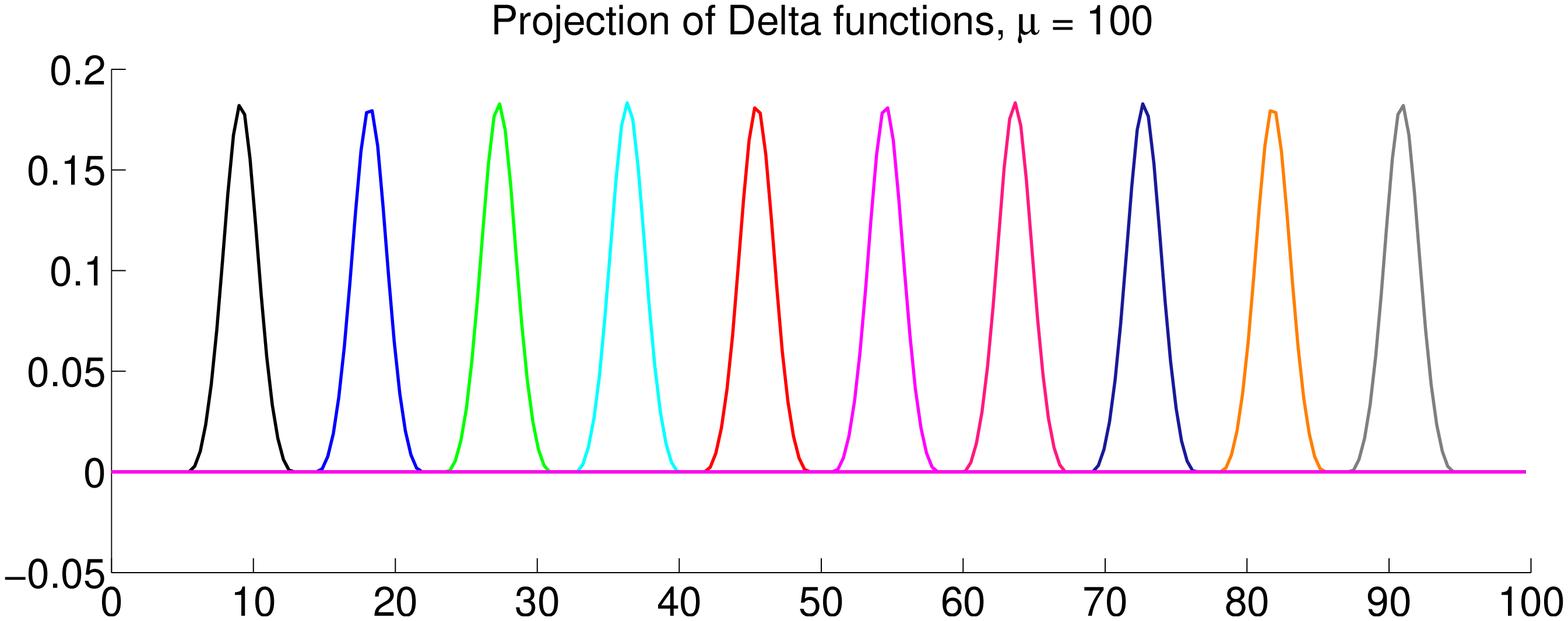}\\
\caption{Projection of Delta function $\delta(x - x_i)$ using density
  matrices with $\mu = 10$ (upper) and $\mu = 100$ (lower)
  respectively.} 
\label{fig:Projection_KP}
\end{figure}

\begin{figure}[ht]
\includegraphics[width=.8\linewidth]{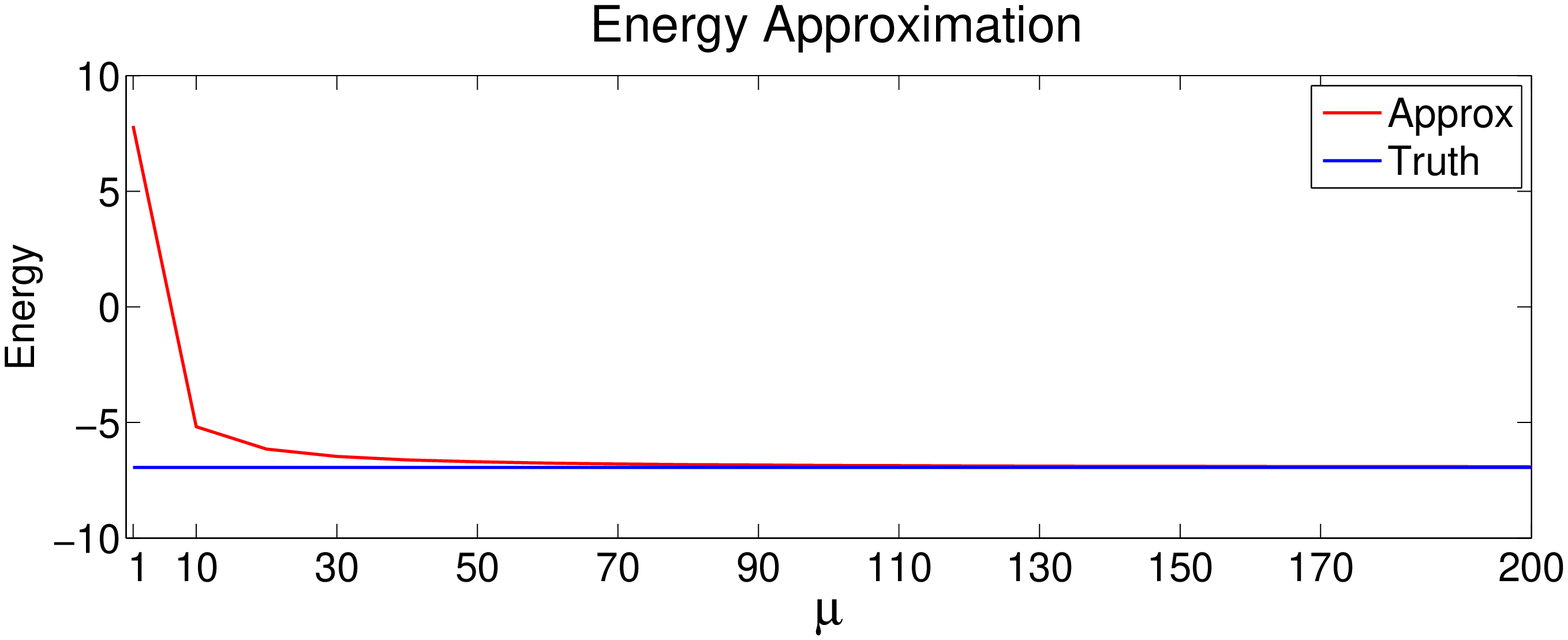}\\
\centering (a)\\
\includegraphics[width=.8\linewidth]{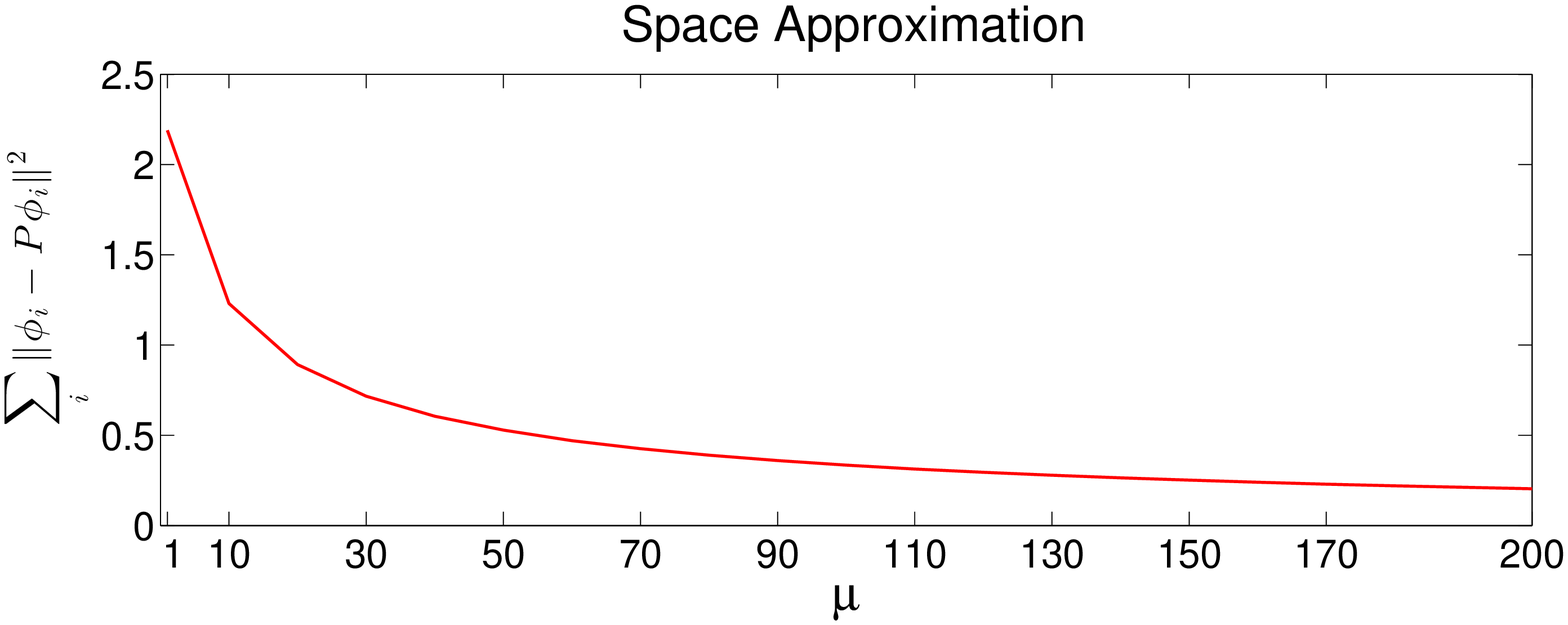}\\
\centering (b)
\caption{(a): Energy approximation as a function of $\mu$. (b): Space
  approximation as a function of $\mu$.}
\label{fig:DensityFunApprox_KP_energy}
\end{figure}

Next we consider the case $N = 15$. The first band of $10$ eigenstates
of $H$ is occupied and the second band of $H$ is ``half-filled''. That
is we have only $5$ electrons occupying the $10$ eigenstates of
comparable eigenvalue of $H$. Hence, the system does not have a gap,
which is indicated by the slow decay of the density matrix shown in
Figure~\ref{fig:KP_N15}(a). Nevertheless, the algorithm with $\mu =
100$ gives a sparse representation of the density matrix, which
captures the feature of the density matrix near the diagonal, as shown in Figure~\ref{fig:KP_N15}(b). To understand better the resulting sparse representation, we diagonal the matrix $P$: 
\begin{equation*}
  P = \sum_{i} f_i \varphi_i \varphi_i^{\TT}. 
\end{equation*}
The eigenvalues $f_i$, known as the occupation number in the physics
literature, are sorted in the decreasing order. The first $40$
occupation numbers are shown in Figure~\ref{fig:KP_N15}(c). We have
$\sum_i f_i = \tr P = 15$, and we see that $\{f_i\}$ exhibits two
groups. The first $10$ occupation numbers are equal to $1$,
corresponding to the fact that the lowest $10$ eigenstates of the
Hamiltonian operator is occupied. Indeed, if we compare the
eigenvalues of the operator $PH$ with the eigenvalues of $H$, as in
Figure~\ref{fig:KP_N15}(d), we see that the first $10$ low-lying
states are well represented in $P$. This is further confirmed by the filtered density matrix $M_1$ given by the first $10$ eigenstates of $P$ as 
\begin{equation*}
  M_1 = \sum_{i=1}^{10}  f_i \varphi_i \varphi_i^{\TT},
\end{equation*}
plotted in Figure~\ref{fig:KP_N15}(e). It is clear that it is very
close to the exact density matrix corresponding to the first $10$
eigenfunctions of $H$, as plotted in
Figure~\ref{fig:DensityFunApprox_KP}(a). The next group of occupation
numbers in Figure~\ref{fig:KP_N15}(c) gets value close to $0.5$. This
indicates that those states are ``half-occupied'', matches very well
with the physical intuition. This is also confirmed by the state
energy shown in Figure~\ref{fig:KP_N15}(d). Note that due to the fact
these states are half filled, the perturbation in the eigenvalue by
the localization is much stronger. The corresponding filtered density
matrix
\begin{equation*}
  M_2 = \sum_{i=11}^{20}  f_i \varphi_i \varphi_i^{\TT},
\end{equation*}
is shown in Figure~\ref{fig:KP_N15}(f). 

For this example, we compare with the results obtained using the
variational principle \eqref{eq:Psi} as in
\cite{OzolinsLaiCaflischOsher:13} shown in
Figure~\ref{fig:CMs_KP_N15}. As the variational principle
\eqref{eq:Psi} is formulated with orbital functions $\Psi$, it does
not allow fractional occupations, in contrast with the one in terms of
the density matrix. Hence, the occupation number is either $1$ or $0$,
which is equivalent to the idempotency condition, as shown in
Figure~\ref{fig:CMs_KP_N15}(b). As a result, even though the states in
the second band have very similar energy, the resulting $\Psi$ are
forced to choose five states over the ten, as can be seen from the
Ritz value plotted in Figure~\ref{fig:CMs_KP_N15}(c). The solution is
quite degenerate in this case. Physically, what happens is that the
five electrons choose $5$ wells out of the ten to sit in (on top of
the state corresponding to the first band already in the well), as
shown from the corresponding density matrix in
Figure~\ref{fig:CMs_KP_N15}(a), or more clearly by the filtered
density matrix in Figure~\ref{fig:CMs_KP_N15}(d) for the five higher
energy states.

\begin{figure}[ht]
\centering
\begin{minipage}{0.49\linewidth}
\includegraphics[width=1\linewidth]{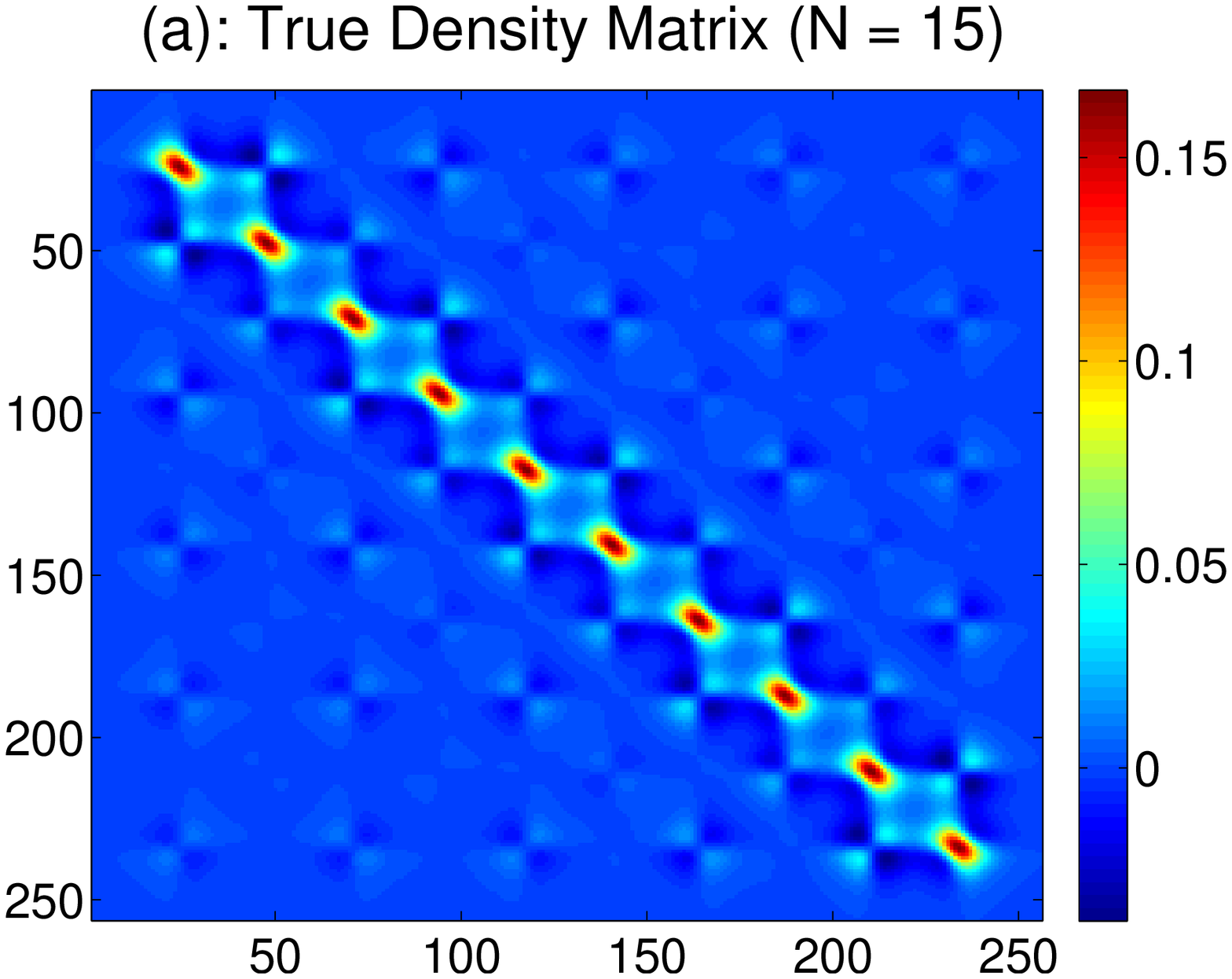}\\
\end{minipage}\hfill
\begin{minipage}{0.49\linewidth}
\includegraphics[width=1\linewidth]{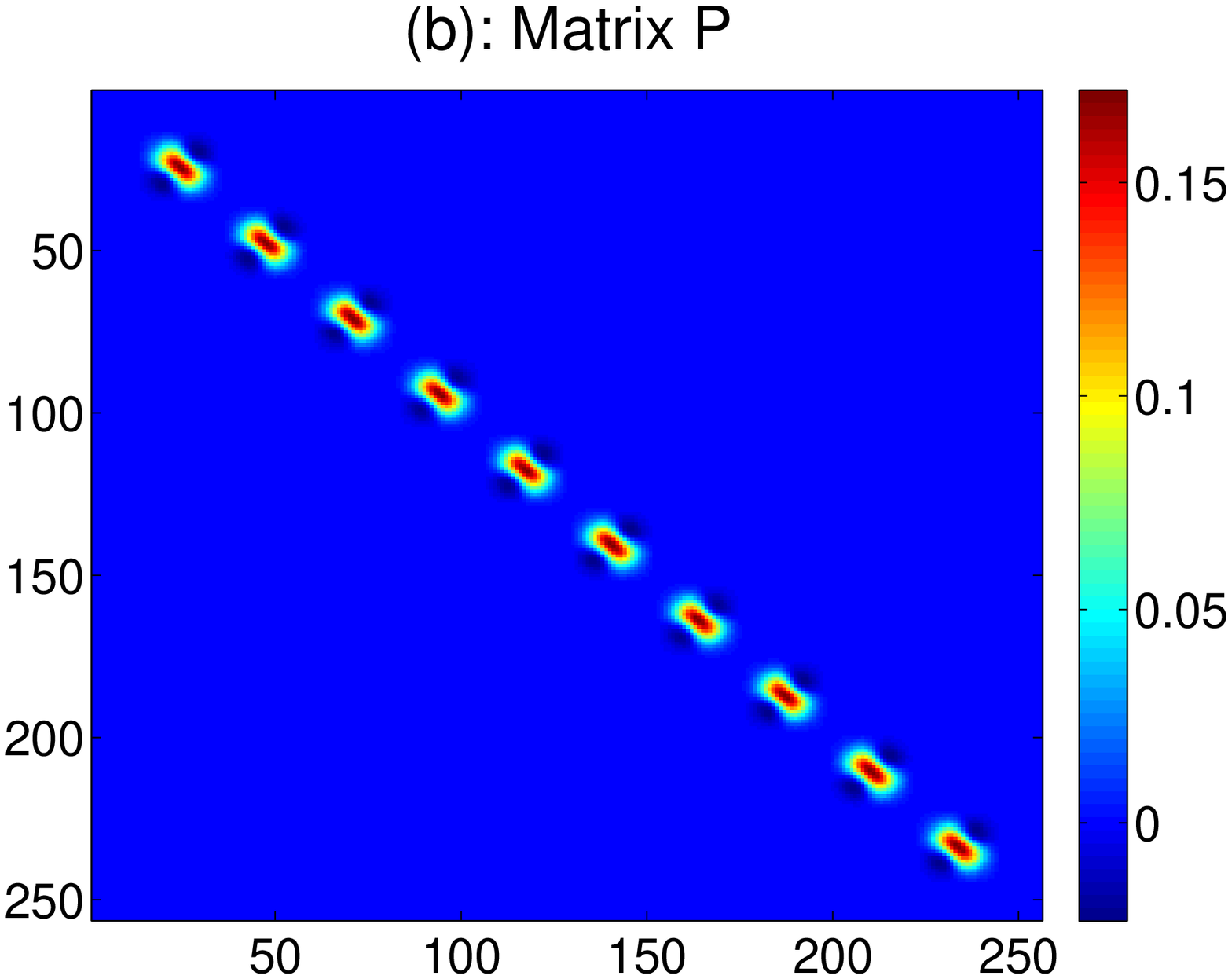}\\
\end{minipage}\hfill\\
\begin{minipage}{0.49\linewidth}
\includegraphics[width=.9\linewidth]{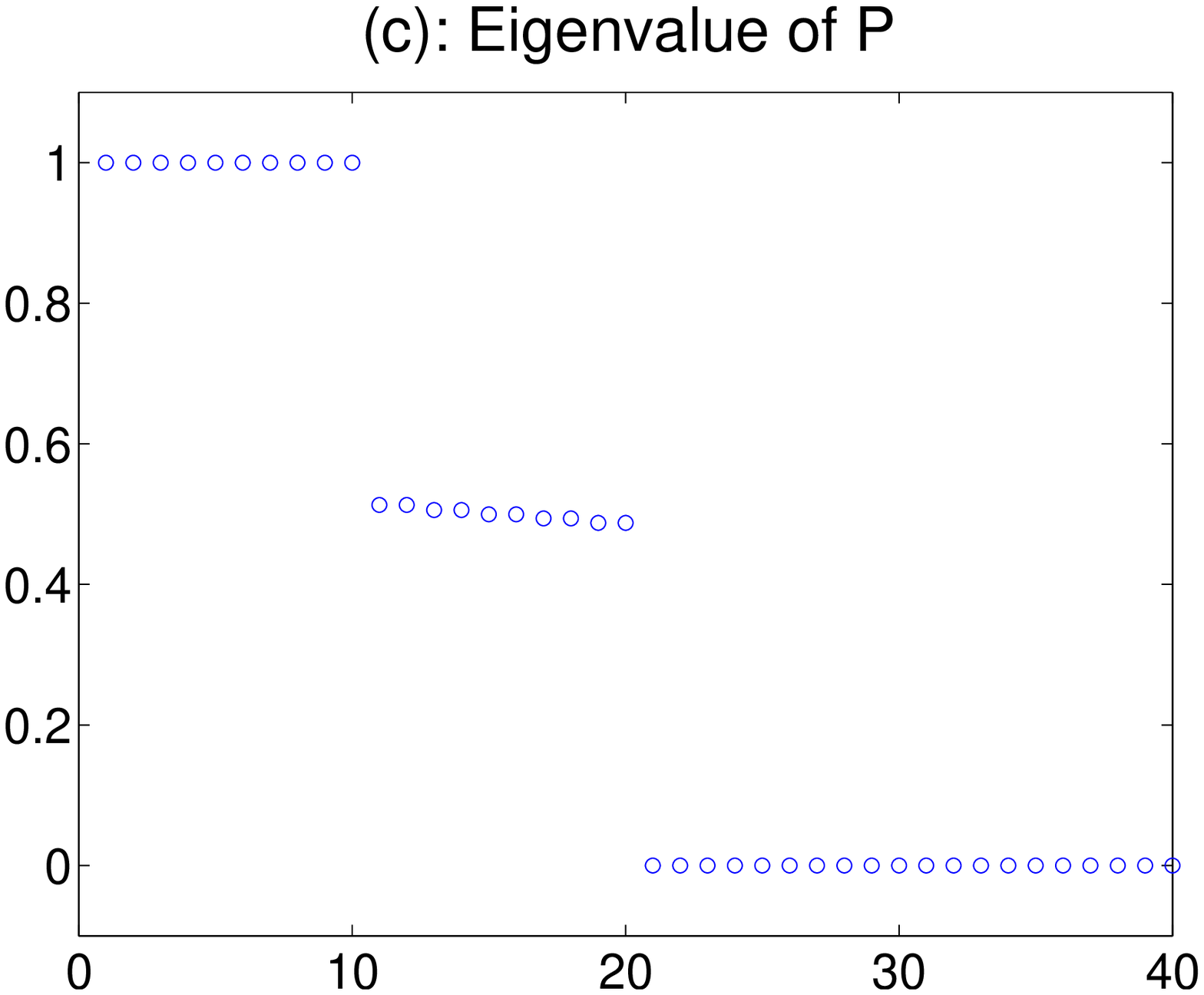}\\
\end{minipage}\hfill
\begin{minipage}{0.49\linewidth}
\includegraphics[width=.9\linewidth]{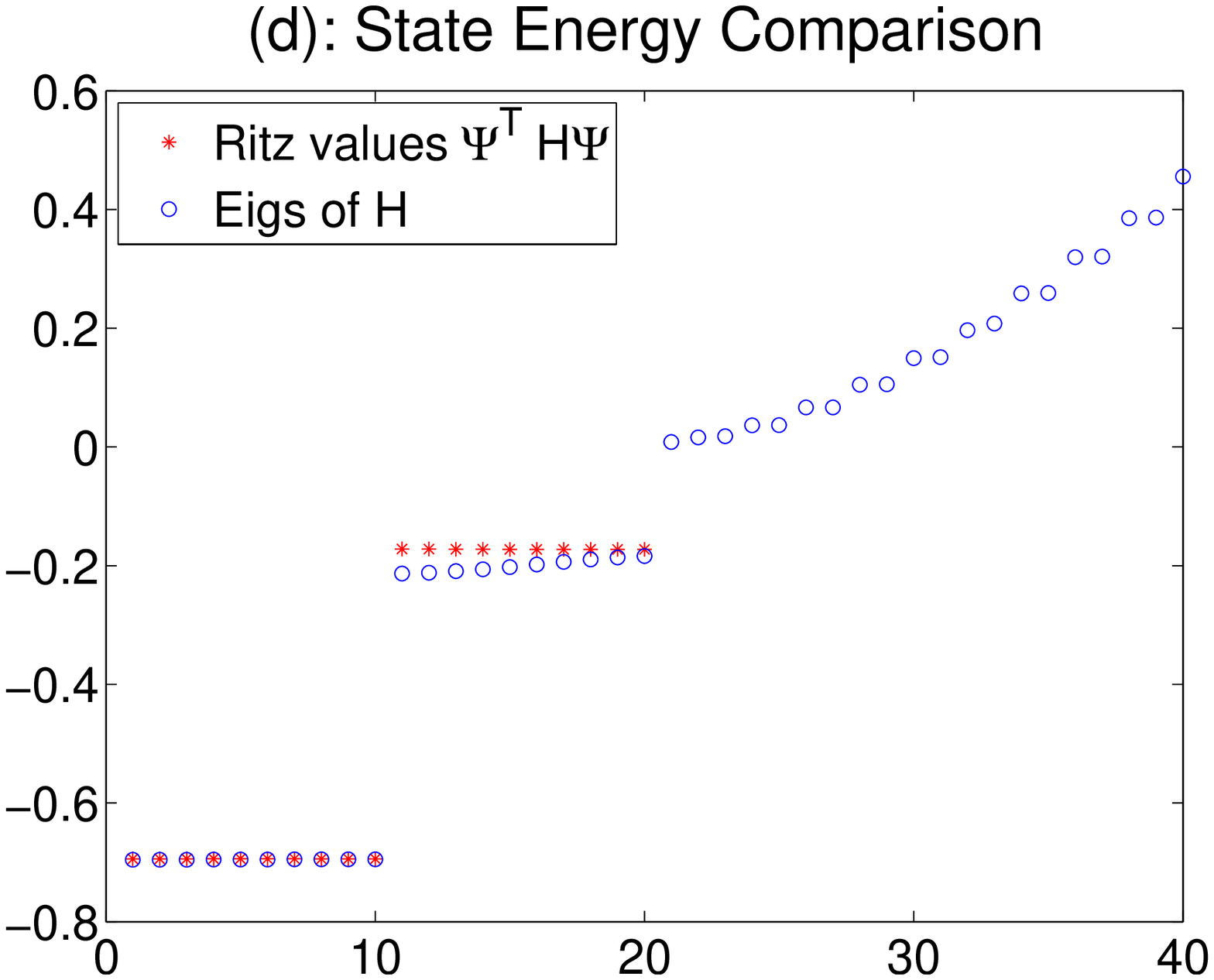}\\
\end{minipage}\hfill\\
\begin{minipage}{0.49\linewidth}
\includegraphics[width=1\linewidth]{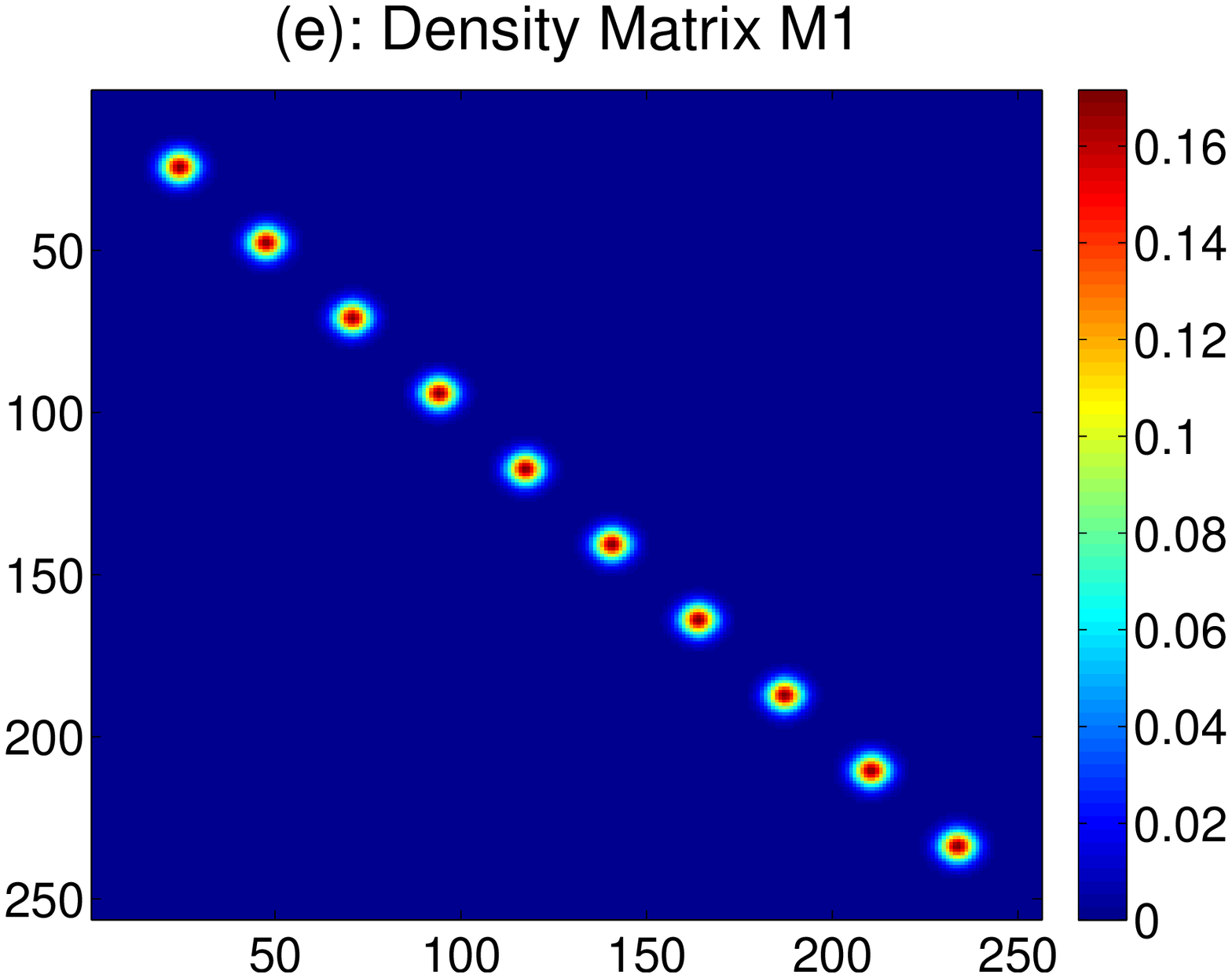}\\
\end{minipage}\hfill
\begin{minipage}{0.49\linewidth}
\includegraphics[width=1\linewidth]{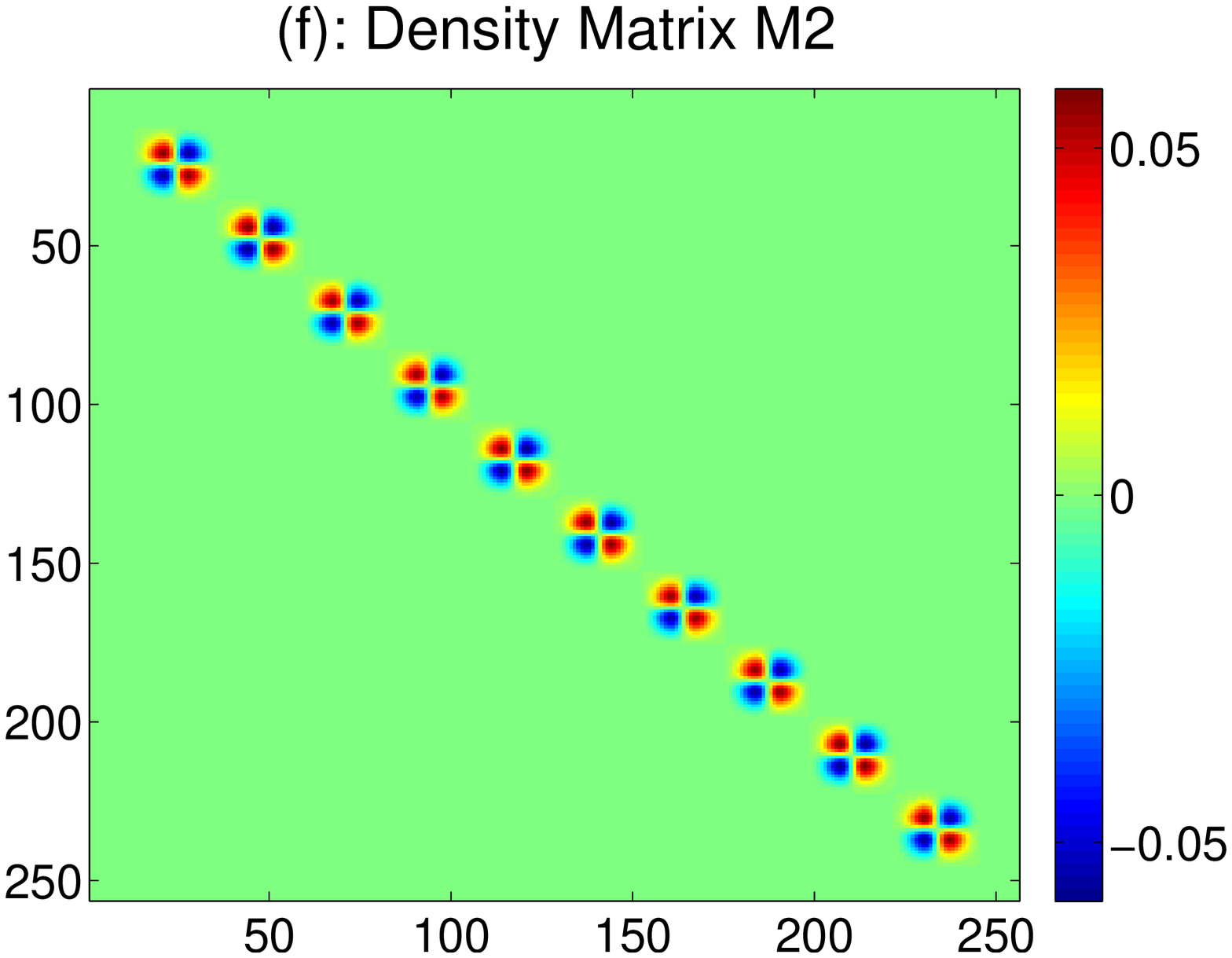}\\
\end{minipage}
\caption{(a): The true density matrix corresponds to the first $15$
  eigenfunctions of $H$. (b): The sparse representation $P$ of the
  density matrix for $\mu = 100$.  (c): The occupation number
  (eigenvalues) of $P$. (d) The first $15$ eigenvalues of $PH$
  compared with the eigenvalues of $H$. (e): The filtered density
  matrix $M_1$ corresponds to the first $10$ eigenstates of $P$.  (f)
  The filtered density matrix $M_2$ corresponds to the next $10$
  eigenstates of $P$.\label{fig:KP_N15}}
\end{figure}

\begin{figure}[ht]
\centering
\begin{minipage}{0.49\linewidth}
\includegraphics[width=1\linewidth]{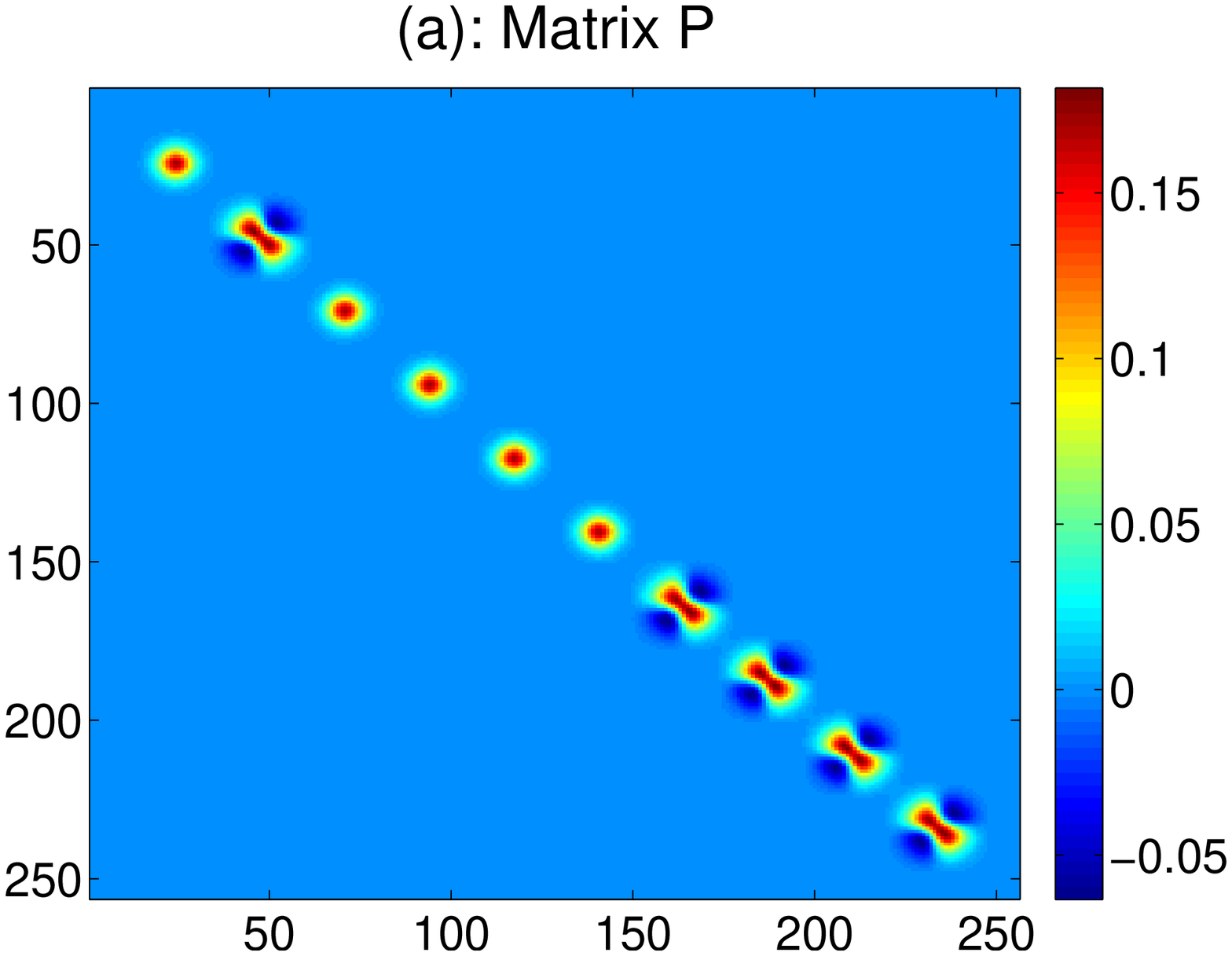}\\
\end{minipage}\hfill
\begin{minipage}{0.49\linewidth}
\includegraphics[width=.9\linewidth]{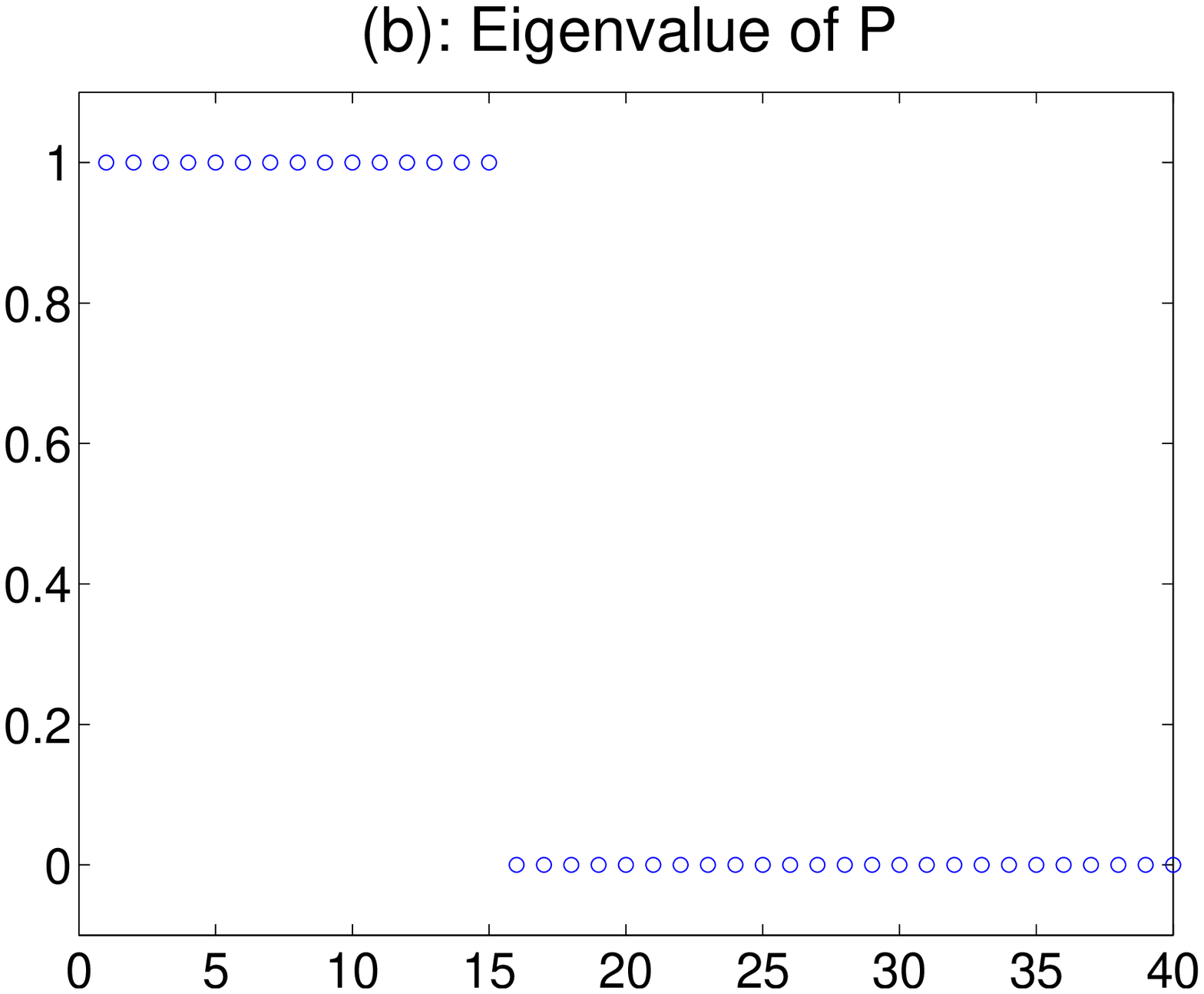}\\
\end{minipage}\hfill\\
\begin{minipage}{0.49\linewidth}
\includegraphics[width=.9\linewidth]{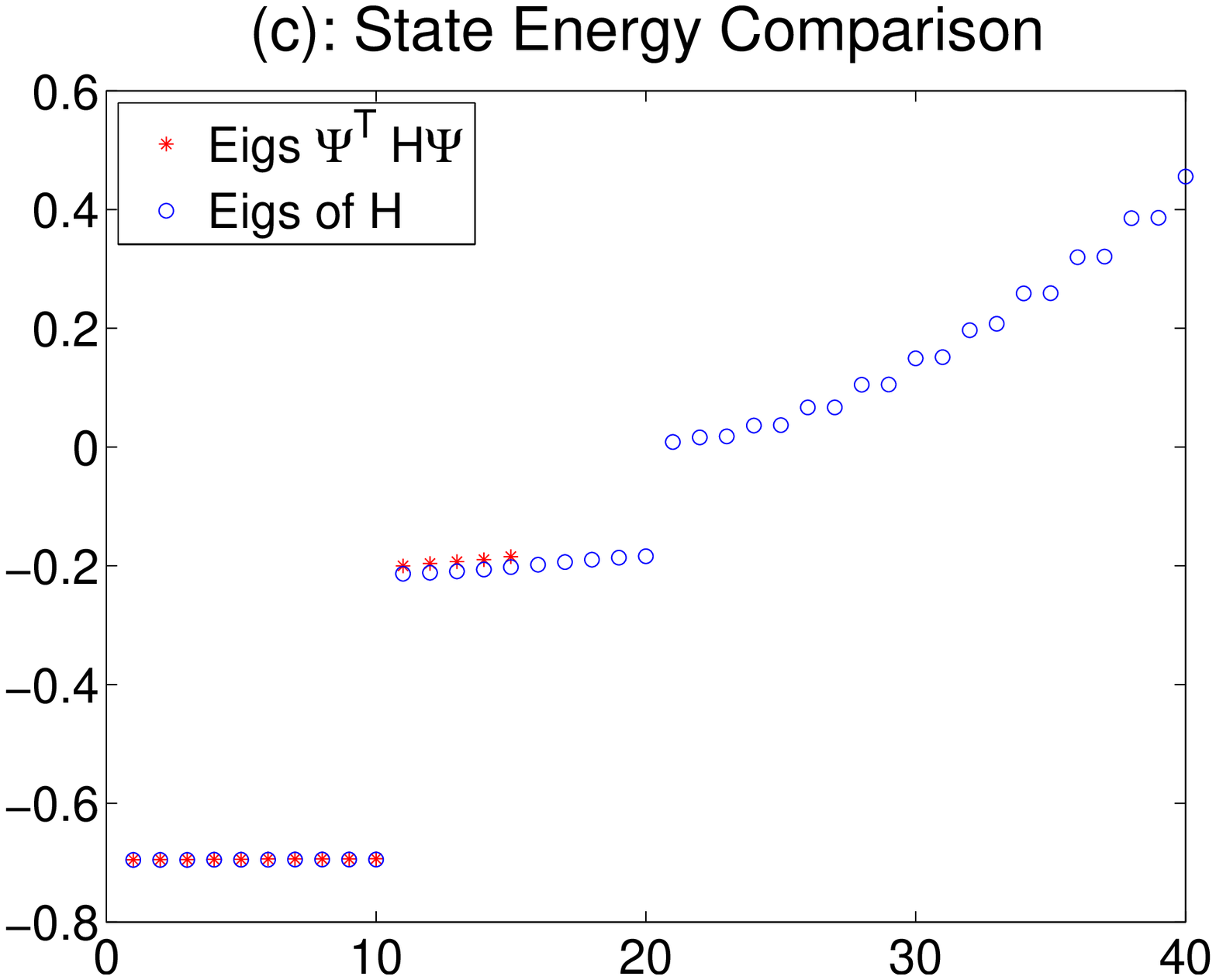}\\
\end{minipage}\hfill
\begin{minipage}{0.49\linewidth}
\includegraphics[width=1\linewidth]{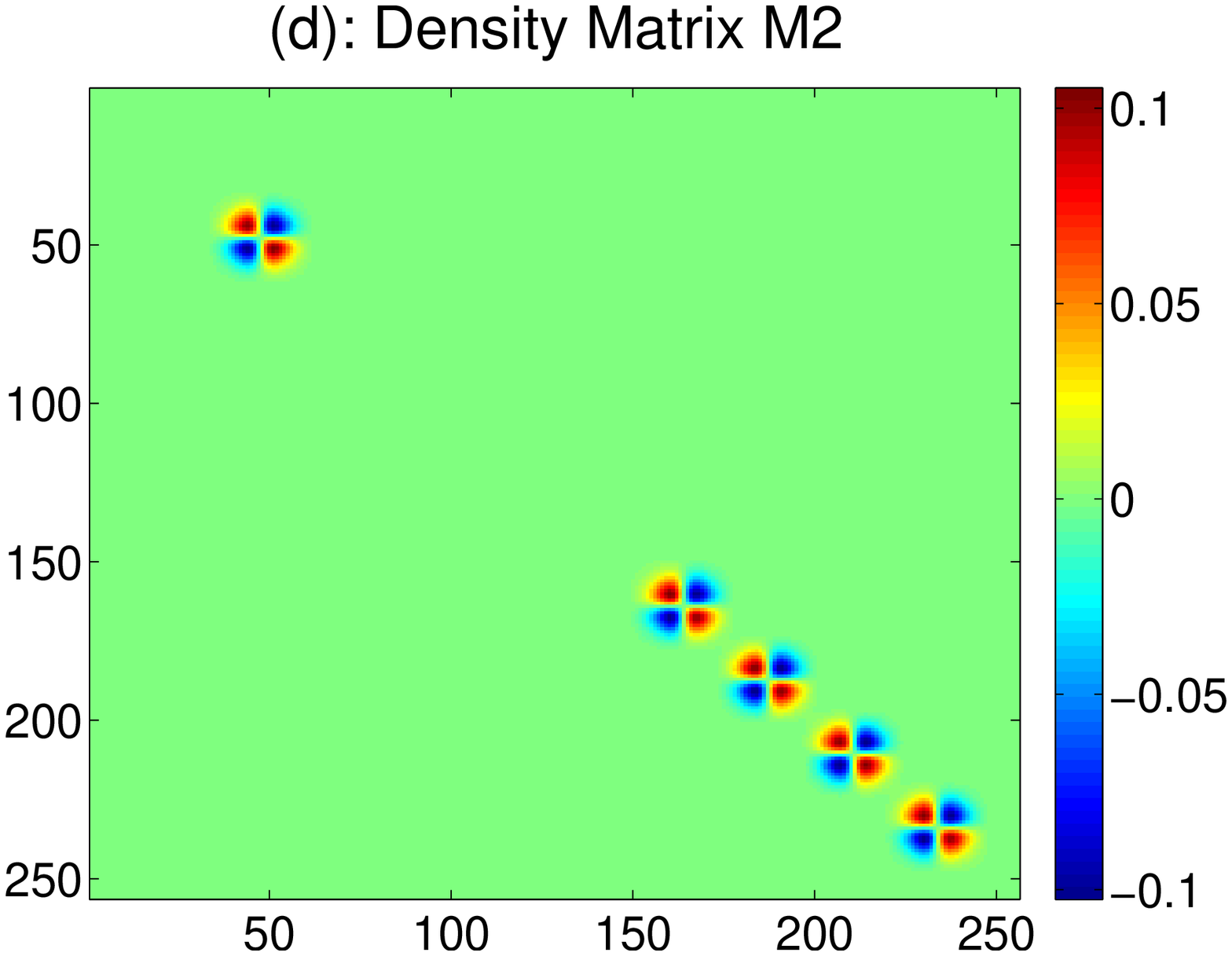}\\
\end{minipage}
\caption{Results obtained by the first 15 Compressed modes $\Psi =
  \{\psi_i\}_{i=1}^{15}$ for $\mu = 100$. (a): The density
  representation $P$ given by $P = \Psi^T\Psi $.  (b): The occupation
  number (eigenvalues) of $P$. (d) The first $15$ eigenvalues of
  $\Psi^T H \Psi$ compared with the eigenvalues of $H$. (d) The
  filtered density matrix $M_2$ corresponds to the $5$ states in the
  second band.
  \label{fig:CMs_KP_N15}}
\end{figure}

Finally, the $N = 20$ case corresponds to the physical situation that
the first two bands are all occupied. Note that as the band gap
between the second band from the rest of the spectrum is smaller than
the gap between the first two bands, the density matrix, while still
exponentially localized, has a slower off diagonal decay rate. The
exact density matrix corresponds to the first $20$ eigenfunctions of
$H$ is shown in Figure~\ref{fig:KP_N20}(a), and the localized
representation with $\mu = 100$ is given in
Figure~\ref{fig:KP_N20}(b). The occupation number is plotted in
Figure~\ref{fig:KP_N20}(c), indicates that the first $20$ states are
fully occupied, while the rest of the states are empty. This is
further confirmed by comparison of the eigenvalues given by $HP$ and
$H$, shown in Figure~\ref{fig:KP_N20}(d). In this case, we see that
physically, each well contains two states. Hence, if we look at the
electron density, which is diagonal of the density matrix, we see a
double peak in each well. Using the projection of delta functions, we
see that the sparse representation of the density matrix $P$
automatically locate the two localized orbitals centered at the two
peaks, as shown in Figure~\ref{fig:KP_N20}(e).

\begin{figure}[ht]
\centering
\begin{minipage}{0.49\linewidth}
\includegraphics[width=1\linewidth]{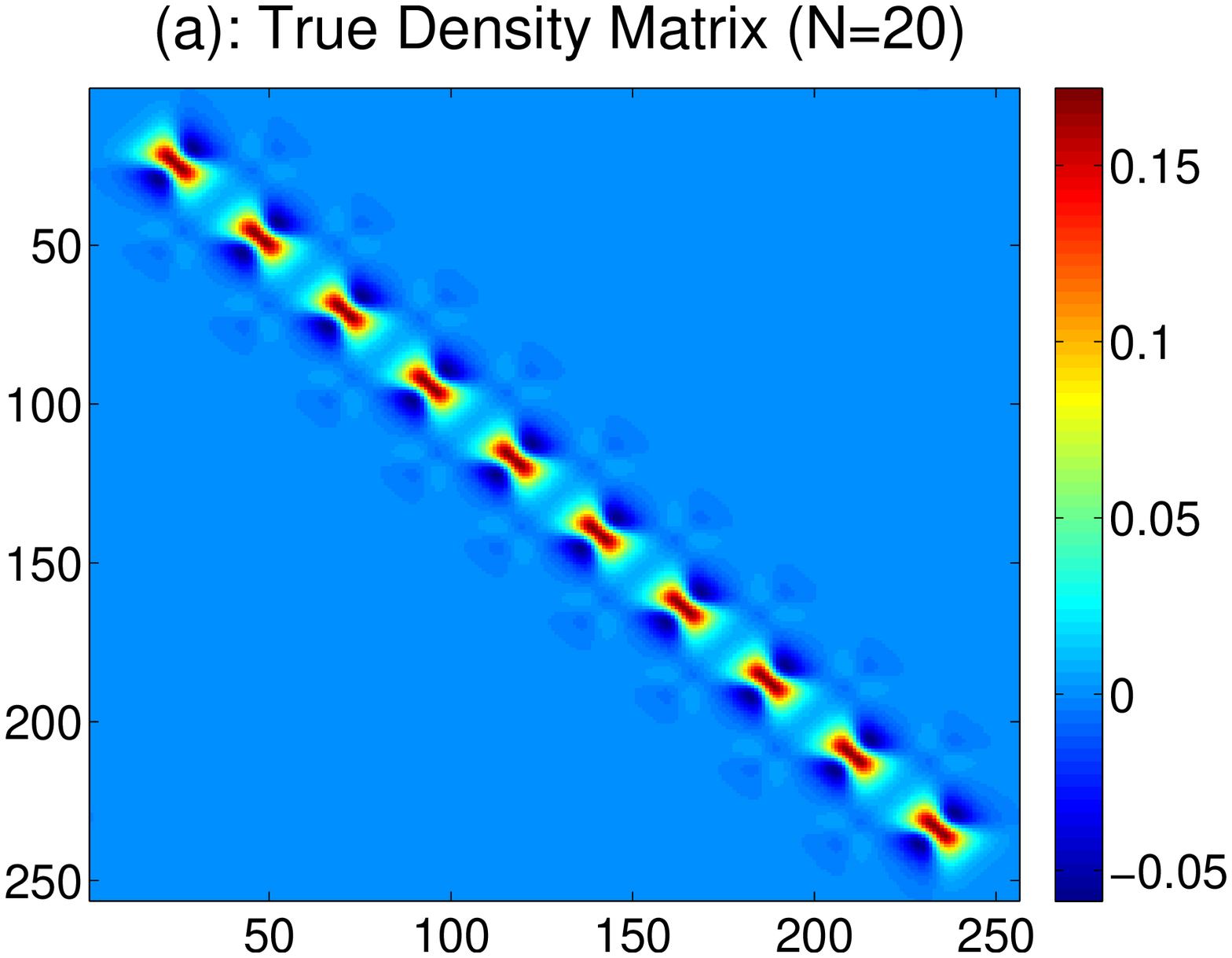}\\
\end{minipage}\hfill
\begin{minipage}{0.49\linewidth}
\includegraphics[width=1\linewidth]{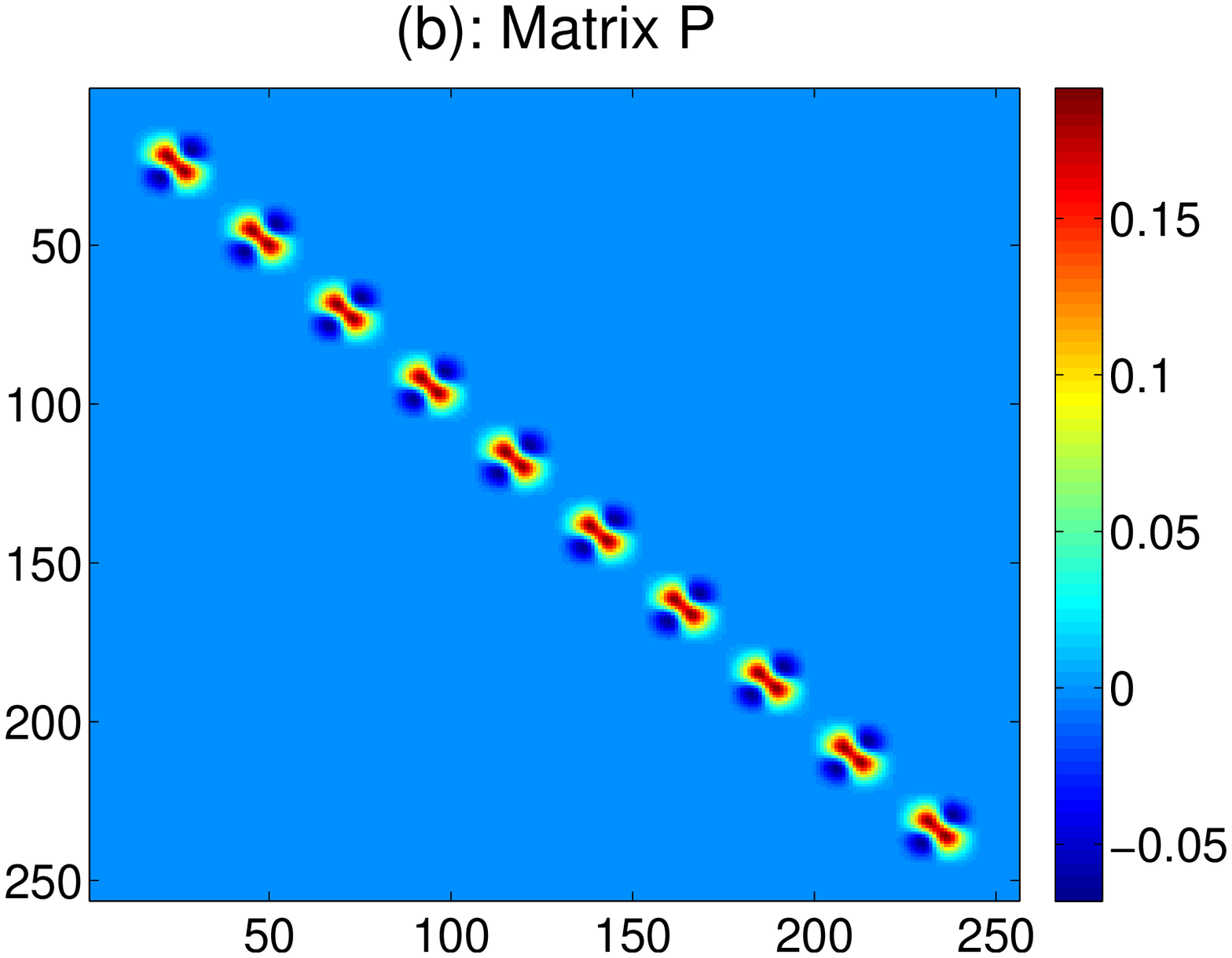}\\
\end{minipage}\hfill\\
\begin{minipage}{0.49\linewidth}
\includegraphics[width=.9\linewidth]{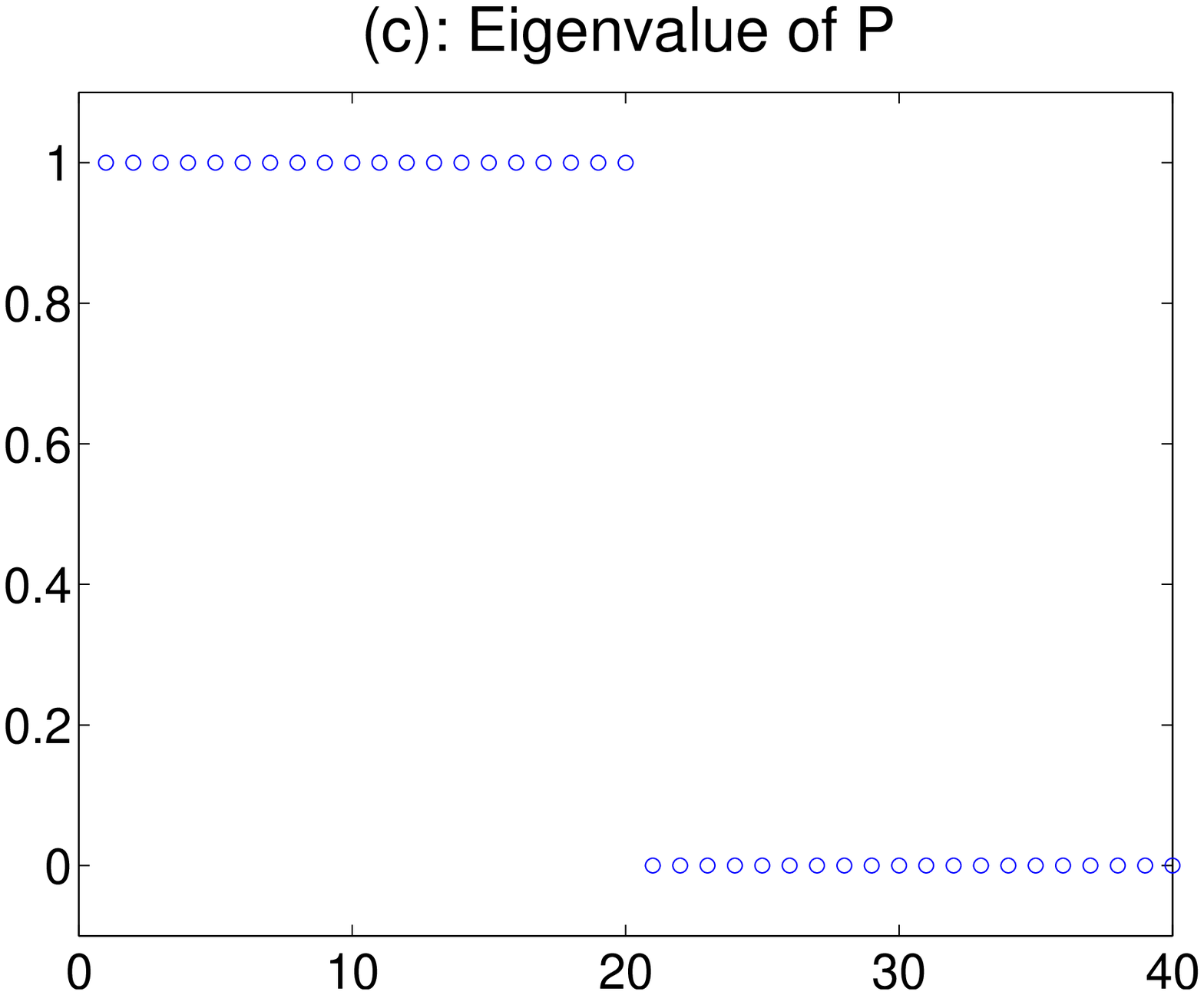}\\
\end{minipage}\hfill
\begin{minipage}{0.49\linewidth}
\includegraphics[width=.9\linewidth]{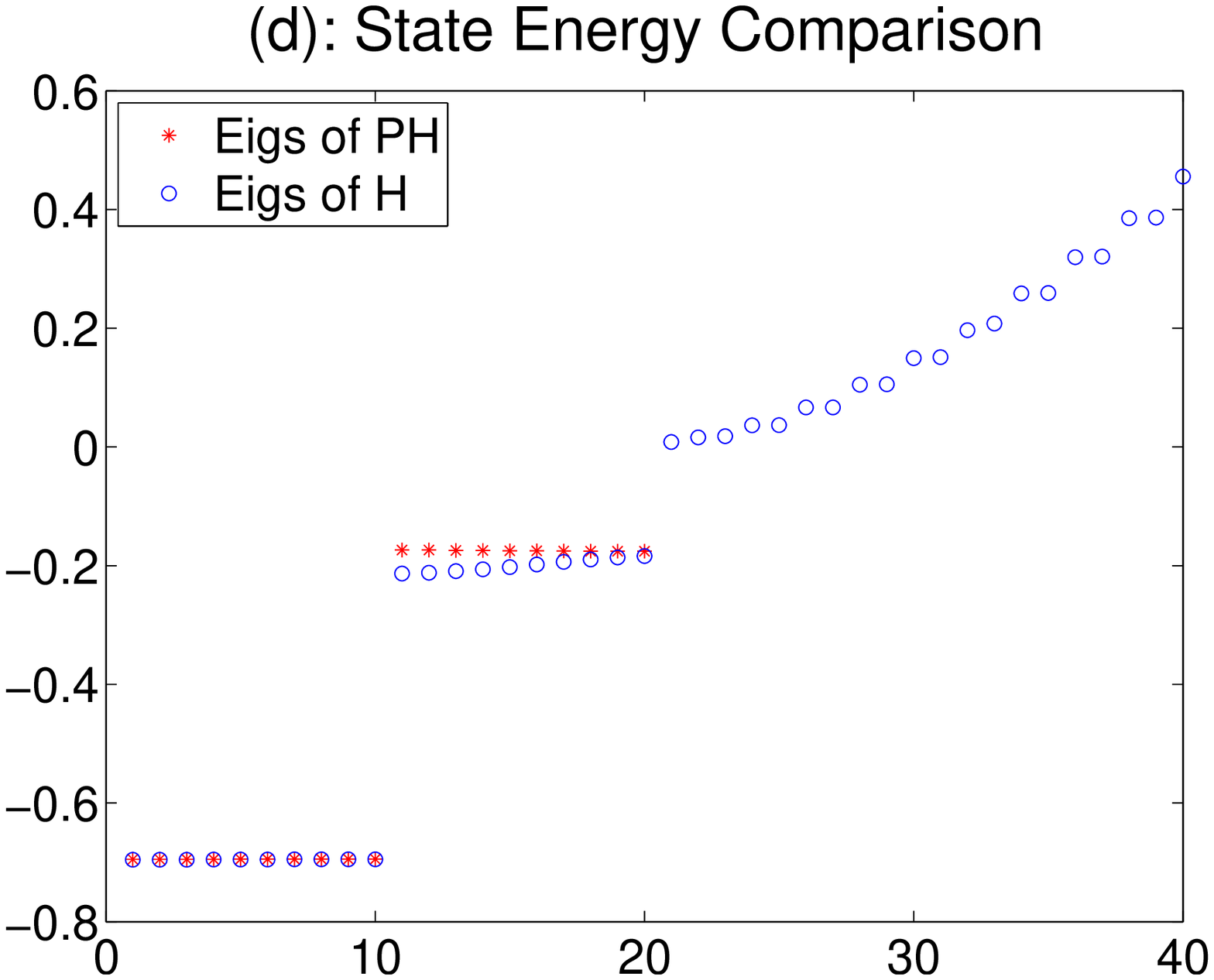}\\
\end{minipage}\hfill\\
\centering
\includegraphics[width=1\linewidth]{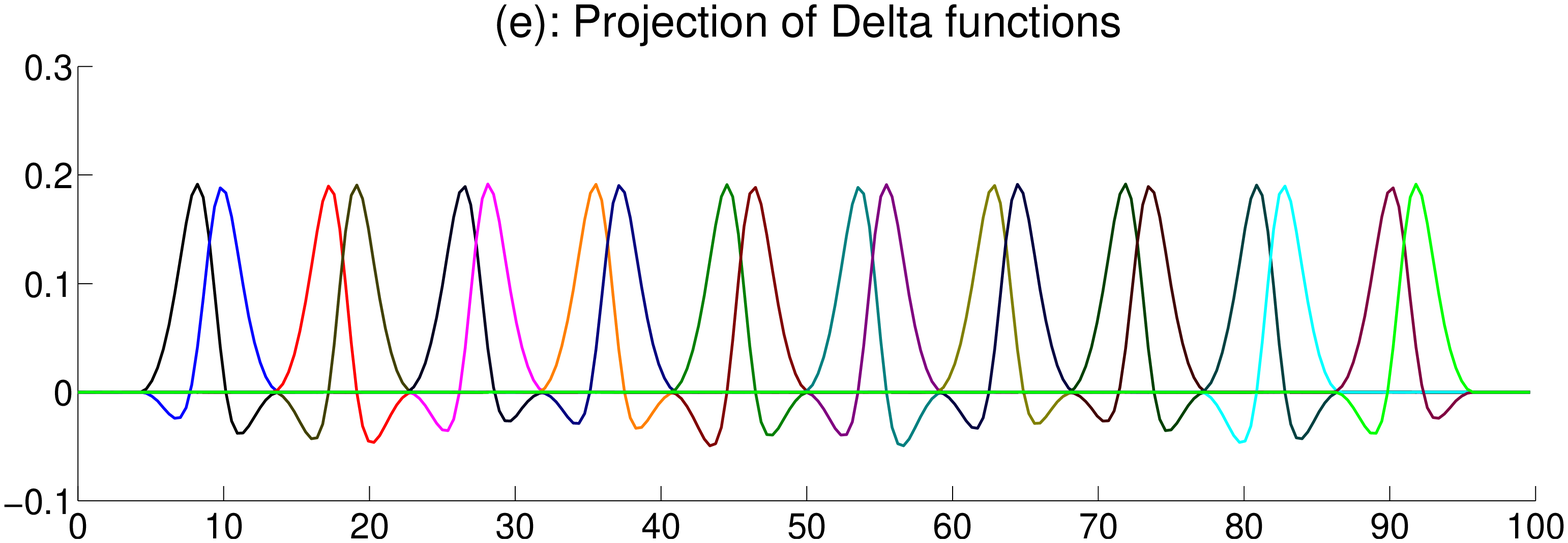}\\
\caption{(a): The true density matrix
  corresponds to the first $20$ eigenfunctions of $H$. (b): The sparse
  representation $P$ of the density matrix for $\mu = 100$.  (c): The
  occupation number (eigenvalues) of $P$. (d) The first $20$
  eigenvalues of $PH$ compared with the eigenvalues of $H$. (e)
  Projection of Delta function $\delta(x - x_i)$.}
\label{fig:KP_N20}
\end{figure}

\subsection{An example of non-unique minimizers}

Let us revisit the Example~$2$ in Section~\ref{sec:formulation} for
which the minimizers to the variational problem is
non-unique. Theorem~\ref{thm:conv} guarantees that the algorithm will
converge to some minimizer starting from any initial condition. 

It is easy to check that in this case
\begin{equation}
  P^{\ast} = Q^{\ast} = R^{\ast} = 
  \begin{pmatrix}
    1 & 0 & 0 \\
    0 & 0 & 0 \\
    0 & 0 & 0
  \end{pmatrix}, 
  \quad 
   b^{\ast} = \begin{pmatrix} 
     1 & 0 & 0 \\ 
     0 & 1 & -1 \\
     0 & -1 & 1 
   \end{pmatrix}, \quad d^{\ast} = \begin{pmatrix}
     0 & 0 & 0 \\
     0 & -1 & -1 \\
     0 & -1 & -1
  \end{pmatrix}
\end{equation}
is a fixed point of the algorithm. In Figure~\ref{fig:DecayDist}, we
plot the sequence $\Big\{ \lambda \|b^{k} - b^{\ast}\|^2 + r \|d^{k} -
d^{\ast}\|^2 + \lambda \| Q^{k} - Q^{\ast} \|^2 + r \| R^{k} -
R^{\ast} \|^2 \Big\}_k$ for a randomly chosen initial data. We see
that the distance does not converge to $0$ as the algorithm converges
to another minimizer of the variational problem. Nonetheless, as will
be shown in the proof of Theorem~\ref{thm:conv} in
Section~\ref{sec:proof}, the sequence is monotonically non-increasing.

\begin{figure}[ht]
\centering
\includegraphics[width=.8\linewidth]{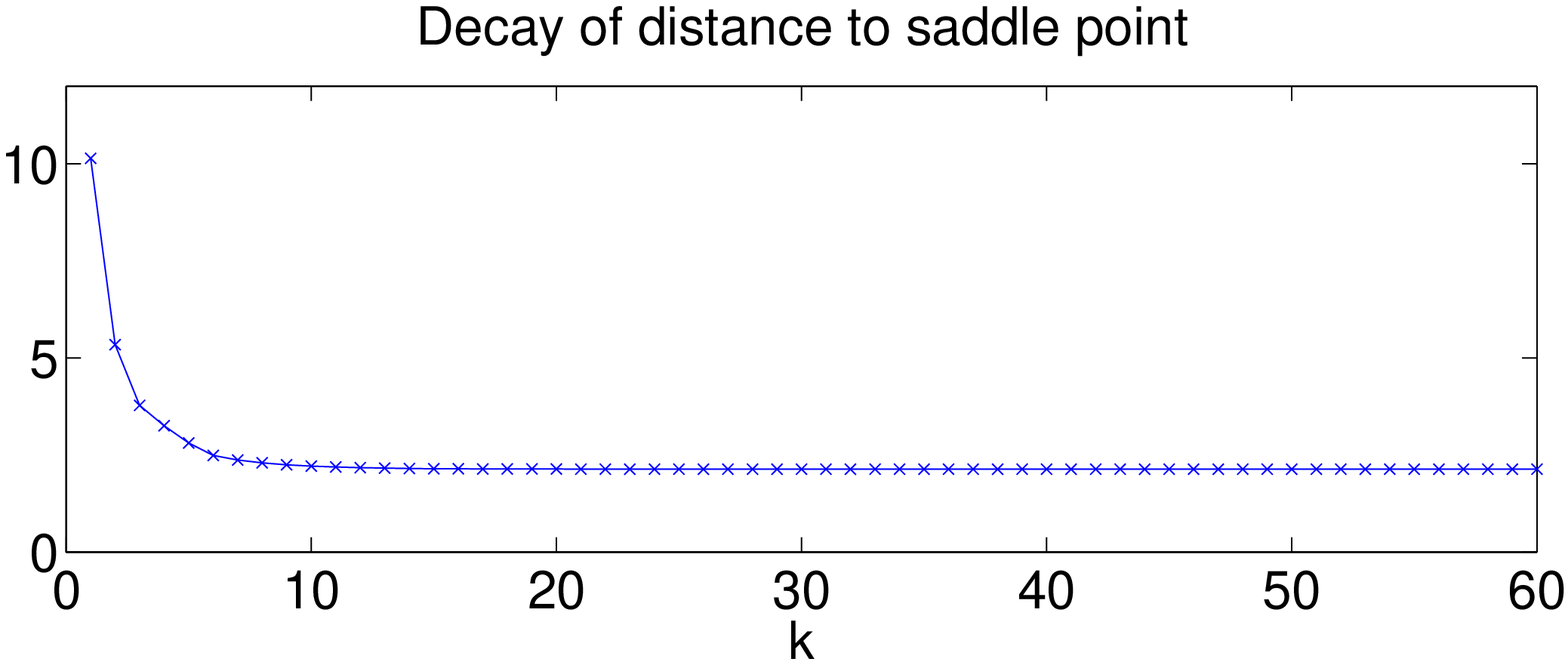}
\caption{$\lambda \|b^{k} - b^{\ast}\|^2 + r \|d^{k} -
d^{\ast}\|^2 + \lambda \| Q^{k} - Q^{\ast} \|^2 + r \| R^{k} -
R^{\ast} \|^2$ as a function of $k$  for Algorithm~\ref{alg:CM_P}.}\label{fig:DecayDist}
\end{figure}

\section{Convergence of Algorithm~\ref{alg:CM_P}}
\label{sec:proof}

For ease of notation, we will prove the convergence of the algorithm for the following slightly generalized variational problem. 
\begin{equation}\label{eq:three}
  \begin{aligned}
    & \min_{P,Q,R} f(P) + g(Q) + h(R) \\
    & \text{s.t.} \; P = Q, P = R
  \end{aligned}
\end{equation}
where $f$, $g$, and $h$ are proper convex functionals, but not
necessarily strictly convex. In particular, we will get \eqref{eq:P_split} if we set 
\begin{align*}
  & f(P) = 
  \begin{cases} 
    \tr(H P), & \text{if } \tr P = N; \\
    + \infty, & \text{otherwise}, 
  \end{cases} \\
  & g(Q) = \norm{Q}_1 / \mu \\
  & h(R) = 
  \begin{cases}
    0, & \text{if } R = R^{\TT}, \text{and } 0 \preceq R \preceq I; \\
    + \infty, & \text{otherwise}. 
  \end{cases}
\end{align*}

The corresponding algorithm for \eqref{eq:three} is given by 
\begin{algorithm}\label{alg:split2}
Initialize $P^0 = Q^0 = R^0, b^0 = d^0 = 0$

\While{``not converge"}{
\begin{enumerate}
\item $\displaystyle P^k = \arg\min_{P} f(P) + \frac{\lambda}{2} \norm{P - Q^{k-1} + b^{k-1}}^2 + \frac{r}{2} \norm{P - R^{k-1} + d^{k-1}}^2$, 
\item $\displaystyle Q^k = \arg\min_{Q} g(Q) + \frac{\lambda}{2} \norm{P^k - Q + b^{k-1}}^2 $.
\item $R^k = \displaystyle\arg\min_{R}  h(R) +  \frac{r}{2}\norm{P^{k} - R + d^{k-1}}^2$. 
\item $b^{k} = b^{k-1} +  P^k - Q^k$.
\item $d^{k} = d^{k-1} +  P^k - R^k$.
\end{enumerate}
}
\end{algorithm}

We define an augmented Lagrangian 
\begin{multline}\label{eq:Lag}
  \mc{L}(P,Q,R;b,d) = f(P) + g(Q) + h(R) + \frac{\lambda}{2}\|P - Q\|^2 + \lambda\langle P - Q, b\rangle \\
  + \frac{r}{2}\|P - R\|^2 + r\langle P - R, d\rangle
\end{multline}

\begin{defn}
  We call $(P^{\ast}, Q^{\ast}, R^{\ast}; b^{\ast}, d^{\ast})$ a
  \emph{saddle point} of the Lagrangian \eqref{eq:Lag}, if
  \begin{equation} \label{eq:saddle}
    \mc{L}(P^{\ast}, Q^{\ast}, R^{\ast}; b, d) \leq \mc{L}(P^{\ast}, Q^{\ast}, R^{\ast}; b^{\ast}, d^{\ast}) \leq \mc{L}(P, Q, R; b^{\ast}, d^{\ast})
  \end{equation}
  for any $(P, Q, R; b, d)  \in \RR^{n\times n} \times \RR^{n\times n} \times \RR^{n\times n}\times \RR^{n\times n}\times \RR^{n\times n}$. 
\end{defn}

\begin{lemma} $(\Ps, \Qs, \Rs)$ is a solution of the optimization problem \eqref{eq:three} if any only if there exist $\bs, \ds \in \RR^{n\times n}$ such that $(\Ps, \Qs, \Rs;\bs, \ds)$ is a saddle point satisfying \eqref{eq:saddle}
\end{lemma}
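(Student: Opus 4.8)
The plan is to prove the two implications separately, following the standard augmented-Lagrangian (equivalently ADMM) saddle-point template, while being careful that $\mc{L}$ depends only affinely on the multipliers $(b,d)$ and that $f,g,h$ are merely proper convex. For the ``if'' direction, suppose $(\Ps,\Qs,\Rs;\bs,\ds)$ satisfies \eqref{eq:saddle}. First I would observe that $f(\Ps),g(\Qs),h(\Rs)$ must all be finite, since otherwise $\mc{L}(\Ps,\Qs,\Rs;b,d)=+\infty$ for every $(b,d)$ and the right inequality of \eqref{eq:saddle} would force $\mc{L}(P,Q,R;\bs,\ds)=+\infty$ for \emph{every} $(P,Q,R)$, contradicting properness. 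The map $(b,d)\mapsto\mc{L}(\Ps,\Qs,\Rs;b,d)$ is then finite and affine with gradient $\bigl(\lambda(\Ps-\Qs),\,r(\Ps-\Rs)\bigr)$; the left inequality of \eqref{eq:saddle} says this map is maximized over all of $\RR^{n\times n}\times\RR^{n\times n}$ at $(\bs,\ds)$, which is possible only if that gradient vanishes. Hence $\Ps=\Qs$ and $\Ps=\Rs$, so the triple is feasible for \eqref{eq:three} and all the quadratic and linear coupling terms drop out, giving $\mc{L}(\Ps,\Qs,\Rs;\bs,\ds)=f(\Ps)+g(\Qs)+h(\Rs)$. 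Evaluating the right inequality of \eqref{eq:saddle} at an arbitrary feasible triple $(P,P,P)$ (again killing the coupling terms) then yields $f(\Ps)+g(\Qs)+h(\Rs)\le f(P)+g(P)+h(P)$, i.e.\ $(\Ps,\Qs,\Rs)$ solves \eqref{eq:three}.

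For the ``only if'' direction, suppose $(\Ps,\Qs,\Rs)$ solves \eqref{eq:three}; being feasible it satisfies $\Ps=\Qs=\Rs$. I would rewrite the problem as $\min_{(P,Q,R)}F(P,Q,R)+\iota_V(P,Q,R)$, where $F(P,Q,R)=f(P)+g(Q)+h(R)$, $V=\{(P,Q,R):P=Q=R\}$ is a linear subspace, and $V^{\perp}=\{(X,Y,Z):X+Y+Z=0\}$. Optimality is $0\in\partial\bigl(F+\iota_V\bigr)(\Ps,\Qs,\Rs)$; applying the subdifferential sum rule together with the separability of $F$ produces a triple $(X,Y,Z)$ with $X+Y+Z=0$ and $-X\in\partial f(\Ps)$, $-Y\in\partial g(\Qs)$, $-Z\in\partial h(\Rs)$. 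I would then set $\bs=-Y/\lambda$ and $\ds=-Z/r$. Using $\Ps=\Qs=\Rs$ to annihilate the quadratic penalties, the stationarity conditions for $(P,Q,R)\mapsto\mc{L}(P,Q,R;\bs,\ds)$ become $-(\lambda\bs+r\ds)\in\partial f(\Ps)$, $\lambda\bs\in\partial g(\Qs)$, $r\ds\in\partial h(\Rs)$; since $\lambda\bs+r\ds=-Y-Z=X$, these are precisely the three inclusions just obtained. Convexity of $\mc{L}$ in $(P,Q,R)$ upgrades stationarity to global minimality, which is the right inequality of \eqref{eq:saddle}, while the left inequality is immediate because, $\Ps=\Qs=\Rs$ being feasible, $\mc{L}(\Ps,\Qs,\Rs;b,d)$ does not depend on $(b,d)$ at all.

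The only step that is not pure bookkeeping is the use of the sum rule $\partial(F+\iota_V)=\partial F+\partial\iota_V$ at the optimum --- equivalently, the existence of Lagrange multipliers --- which can fail for a general proper convex $F$ restricted to an affine subspace. The hard part is therefore to certify the constraint qualification behind it. I would do this by noting that the coupling constraints are affine and, for the concrete functionals attached to \eqref{eq:P_split}, that the feasible set contains the point $P=Q=R=\tfrac{N}{n}I$, which lies in the relative interior of $\operatorname{dom}F$ because it satisfies $\tr=N$ and $0\prec\tfrac{N}{n}I\prec I$ whenever $0<N<n$; the standard subdifferential sum rule for convex functions, or equivalently Fenchel--Rockafellar strong duality, then applies and furnishes $(\bs,\ds)$. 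I would also point out that the absence of strict convexity of $f,g,h$ causes no trouble: it only means subgradients take the place of gradients throughout the argument.
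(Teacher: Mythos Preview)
Your proof is correct and follows essentially the same saddle-point/KKT template as the paper: derive feasibility from the left inequality, optimality from the right one, and for the converse build $(\bs,\ds)$ so that $-\lambda\bs-r\ds\in\partial f(\Ps)$, $\lambda\bs\in\partial g(\Qs)$, $r\ds\in\partial h(\Rs)$, then sum the resulting subgradient inequalities and absorb the nonnegative quadratic penalties. The paper simply asserts the existence of such $\bs,\ds$ without comment; your explicit reduction to the sum rule $\partial(F+\iota_V)=\partial F+\partial\iota_V$ and your verification of a Slater-type qualification at $P=Q=R=\tfrac{N}{n}I$ is a welcome addition that fills in the one step the paper leaves implicit.
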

\begin{proof} Given a saddle point $(\Ps, \Qs, \Rs;\bs, \ds)$ satisfying \eqref{eq:saddle}, it is clear that the first inequality in \eqref{eq:saddle} implies $\Ps = \Qs = \Rs$. Substitute $P = Q = R$ in the second inequality of \eqref{eq:saddle}, we can immediately have $(\Ps, \Qs, \Rs)$ is a minimizer \eqref{eq:three}. 

On the other hand, suppose $(\Ps, \Qs, \Rs)$ is a solution of \eqref{eq:three}. The first inequality in \eqref{eq:saddle} holds since $\Ps = \Qs = \Rs$. Moreover, there exist $\bs, \ds$ such that 
\begin{equation*}
-\lambda \bs - r \ds \in \partial f(\Ps), \qquad \lambda \bs \in \partial g(\Qs), \qquad r \ds \in \partial h(\Rs)
\end{equation*}
which suggests, for any $P, Q, R \in \RR^{n\times n}$
\begin{align*}
f(\Ps)  &\leq f(P) + \lambda \langle \bs, P - \Ps \rangle + r \langle \ds, P - \Ps \rangle \nonumber \\
g(\Qs) &\leq g(Q) - \lambda \langle \bs, Q - \Qs \rangle  \nonumber \\
h(\Rs) &\leq h(R)  -  r \langle \ds, R - \Rs \rangle \nonumber 
\end{align*}
The summation of the above three inequalities yield the second inequality in \eqref{eq:saddle}.
\end{proof}


\begin{theorem}\label{thm:three}
  The sequence $\Big\{(P^k, Q^k, R^k)\Big\}_k$ generated by
  Algorithm~\ref{alg:split2} from any starting point converges to a
  minimum of the variational problem \eqref{eq:three}.
\end{theorem}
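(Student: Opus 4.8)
The plan is to carry out the Lyapunov (Fej\'er-monotonicity) analysis familiar from ADMM, adapted to the two-constraint splitting $P = Q$, $P = R$ and to functionals that are merely convex. Since \eqref{eq:three} is solvable in the cases we care about (for \eqref{eq:P_split} the feasible set is compact and the objective continuous), the Lemma above supplies a saddle point $(\Ps, \Qs, \Rs; \bs, \ds)$ of the Lagrangian \eqref{eq:Lag}; it satisfies $\Ps = \Qs = \Rs$ together with $-\lambda\bs - r\ds \in \partial f(\Ps)$, $\lambda\bs \in \partial g(\Qs)$, $r\ds \in \partial h(\Rs)$. First I would write the first-order optimality conditions of the three subproblems of Algorithm~\ref{alg:split2} and simplify them using the multiplier updates $b^k = b^{k-1} + P^k - Q^k$ and $d^k = d^{k-1} + P^k - R^k$. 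This gives $\lambda b^k \in \partial g(Q^k)$, $r d^k \in \partial h(R^k)$, and, for the $P$-step, $-\lambda b^k - r d^k - \lambda(Q^k - Q^{k-1}) - r(R^k - R^{k-1}) \in \partial f(P^k)$. The two ``lag'' terms $Q^k - Q^{k-1}$ and $R^k - R^{k-1}$ are the price of updating $P$ before $Q$ and $R$, and will have to be shown to vanish.

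Next I would introduce the Lyapunov functional
\[
  V^k = \lambda\norm{b^k - \bs}^2 + r\norm{d^k - \ds}^2 + \lambda\norm{Q^k - \Qs}^2 + r\norm{R^k - \Rs}^2 ,
\]
which is exactly the quantity plotted in Figure~\ref{fig:DecayDist}, and expand $V^{k-1} - V^k$ using $\norm{a}^2 - \norm{c}^2 = \norm{a - c}^2 + 2\langle a - c, c\rangle$ together with the multiplier updates. Into the resulting cross terms I would feed three kinds of monotonicity inequalities: (i) monotonicity of $\partial g$ and $\partial h$ between the current iterate and the saddle point, which makes $\langle Q^k - \Qs, b^k - \bs\rangle \ge 0$ and $\langle R^k - \Rs, d^k - \ds\rangle \ge 0$; (ii) monotonicity of $\partial f$ between $P^k$ and $\Ps$, which turns $-\langle P^k - \Ps, (\lambda b^k + r d^k) - (\lambda\bs + r\ds)\rangle$ into a lower bound involving the lag terms; and (iii) monotonicity of $\partial g$ and $\partial h$ between consecutive iterates (from $\lambda b^k \in \partial g(Q^k)$ and $\lambda b^{k-1} \in \partial g(Q^{k-1})$, and similarly for $h$), which yields $\langle P^k - Q^k, Q^k - Q^{k-1}\rangle \ge 0$ and $\langle P^k - R^k, R^k - R^{k-1}\rangle \ge 0$. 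Using $P^k - Q^k = (P^k - \Ps) - (Q^k - \Qs)$ and completing squares, the cross terms should collapse and leave the key estimate $V^{k-1} - V^k \ge \lambda\norm{P^k - Q^k}^2 + r\norm{P^k - R^k}^2 + \lambda\norm{Q^k - Q^{k-1}}^2 + r\norm{R^k - R^{k-1}}^2 \ge 0$. From this: $V^k$ is non-increasing, hence convergent; $\{b^k\}, \{d^k\}, \{Q^k\}, \{R^k\}$ are bounded, and so is $\{P^k\}$ since $P^k - Q^k \to 0$; and summing over $k$ shows $P^k - Q^k \to 0$, $P^k - R^k \to 0$, $Q^k - Q^{k-1} \to 0$, $R^k - R^{k-1} \to 0$, so the lag terms and both primal residuals vanish.

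Finally I would pass to a subsequence along which $(P^{k_j}, Q^{k_j}, R^{k_j}, b^{k_j}, d^{k_j}) \to (\bar P, \bar Q, \bar R, \bar b, \bar d)$. The vanishing residuals give $\bar P = \bar Q = \bar R$, and since the lag terms also vanish, I can pass to the limit in the three subdifferential inclusions --- using that $\partial f$, $\partial g$, $\partial h$ have closed graphs --- to obtain $-\lambda\bar b - r\bar d \in \partial f(\bar P)$, $\lambda\bar b \in \partial g(\bar Q)$, $r\bar d \in \partial h(\bar R)$. These are precisely the relations shown in the proof of the Lemma above to make $(\bar P, \bar Q, \bar R; \bar b, \bar d)$ a saddle point, so $(\bar P, \bar Q, \bar R)$ solves \eqref{eq:three}. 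To upgrade subsequential convergence to convergence of the whole sequence, I would re-run the estimate of the second step with the Lyapunov functional re-centered at this particular saddle point: it is again non-increasing, and it tends to $0$ along $\{k_j\}$, hence $\to 0$; therefore $b^k \to \bar b$, $d^k \to \bar d$, $Q^k \to \bar Q$, $R^k \to \bar R$, and then $P^k = Q^k + (P^k - Q^k) \to \bar P$. This yields $(P^k, Q^k, R^k) \to (\bar P, \bar Q, \bar R)$, a minimizer of \eqref{eq:three}.

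The main obstacle is the key monotonicity inequality of the second step. The cross-term bookkeeping is delicate, and one genuinely needs all of (i)--(iii): the ``iterate versus saddle point'' monotonicity of all three subdifferentials and the ``consecutive iterate'' monotonicity of $\partial g$ and $\partial h$ are each required to absorb every cross term into a perfect square. The two-level splitting forces one to treat $(Q, R)$ as a single block with the weights $\lambda$ and $r$ built into the norm, and the lack of strict convexity means no subproblem has a unique solution, so there is nothing to rely on beyond this Fej\'er-type estimate together with the re-centering argument at the end --- which also explains why, as Example 2 and Figure~\ref{fig:DecayDist} show, the particular minimizer reached depends on the initialization.
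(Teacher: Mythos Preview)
Your proposal is correct and follows essentially the same route as the paper's proof: the same Lyapunov functional $V^k=\lambda\norm{b^k-\bs}^2+r\norm{d^k-\ds}^2+\lambda\norm{Q^k-\Qs}^2+r\norm{R^k-\Rs}^2$, the same descent estimate $V^{k-1}-V^k\ge \lambda\norm{P^k-Q^k}^2+r\norm{P^k-R^k}^2+\lambda\norm{Q^k-Q^{k-1}}^2+r\norm{R^k-R^{k-1}}^2$, and the same re-centering trick to upgrade subsequential to full convergence. The only differences are cosmetic---you phrase the basic inequalities as monotonicity of $\partial f,\partial g,\partial h$ where the paper writes out the corresponding variational inequalities, and you identify the cluster point as a saddle point via the closed-graph property of the subdifferentials rather than sandwiching the objective value---and your handling of the final step (showing that the limiting multipliers $\bar b,\bar d$ themselves yield a saddle point) is in fact slightly cleaner than the paper's.
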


\begin{remark}
  We remind the readers that the minimizers of the variational
  principle \eqref{eq:three} might not be unique. In the non-unique
  case, the above theorem states that any initial condition will
  converge to some minimizer, while different initial condition might
  give different minimizers.
\end{remark}

\begin{proof}
Let $(P^{\ast}, Q^{\ast}, R^{\ast})$ be an optimal solution of \eqref{eq:three}. 
We introduce the short hand notations
\begin{equation}
  \begin{split}
    & \wb{P}^k = P^k - \Ps, \quad \wb{Q}^k = Q^k - \Qs, \quad
    \text{and} \quad \wb{R}^k = R^k - \Rs. \\
    & \wb{b}^k = b^k - \bs, \quad \wb{d}^k = d^k - \ds.
  \end{split}
\end{equation}
From Step $4$ and $5$ in the algorithm, we get 
\begin{equation}
\wb{b}^k = \wb{b}^{k-1} + \wb{P}^{k} - \wb{Q}^{k}, \quad \text{and} \quad 
\wb{d}^k = \wb{d}^{k-1} + \wb{P}^{k} - \wb{R}^{k},
\end{equation}
and hence
\begin{equation} \label{eqn:bderror}
  \begin{split}
    & \|\wb{b}^{k-1}\|^2  - \|\wb{b}^{k}\|^2  = -2\langle \wb{b}^{k-1}, \wb{P}^{k} - \wb{Q}^{k} \rangle - \|\wb{P}^{k} - \wb{Q}^{k}\|^2  \\
    & \norm{\wb{d}^{k-1}}^2 - \|\wb{d}^{k}\|^2 = - 2\langle
    \wb{d}^{k-1}, \wb{P}^{k} - \wb{R}^{k} \rangle - \|\wb{P}^{k}
    - \wb{R}^{k}\|^2
  \end{split}
\end{equation}

Note that by optimality
\begin{align}
 \Ps = \arg\min_{P\in \mathcal{C}_P} \mathcal{L}(P,\Qs,\Rs;\bs,\ds) \\
 \Qs = \arg\min_{Q\in \mathcal{C}_Q} \mathcal{L}(\Ps,Q,\Rs;\bs,\ds) \\
 \Rs = \arg\min_{R\in \mathcal{C}_R} \mathcal{L}(\Ps,\Qs,R;\bs,\ds) 
\end{align}
Hence, for any $P,  Q, R\in \RR^{n\times n}$, we have
\begin{align}
& f(P) - f(\Ps) + \lambda\langle \Ps - \Qs + \bs, P - \Ps \rangle + r \langle \Ps - \Rs + \ds, P - \Ps \rangle \geq 0 \label{eqn:fPs}\\
& g(Q) - g(\Qs) + \lambda\langle \Qs - \Ps - \bs, Q - \Qs \rangle \geq 0  \label{eqn:gQs} \\
& h(R) - h(\Rs) +   r \langle \Rs - \Ps - \ds, R - \Rs \rangle \geq 0  \label{eqn:hRs}
\end{align}

According to the construction of $\{P^k, Q^k, R^k\}$, for any $P, Q, R  \in \RR^{n\times n}$, we have
\begin{align}
  & 
  \begin{aligned}
    f(P) - f(P^{k}) & + \lambda\langle P^{k} - Q^{k-1} + b^{k-1}, P - P^{k} \rangle \\
    & + r \langle P^{k} - R^{k} + d^{k-1}, P - P^{k} \rangle \geq 0
  \end{aligned} \label{eqn:fPk}\\
  & g(Q) - g(Q^{k}) + \lambda\langle Q^{k}- P^{k} - b^{k-1}, Q - Q^{k} \rangle \geq 0  \label{eqn:gQk} \\
  & h(R) - h(R^{k}) + r \langle R^{k} - P^{k} - d^{k-1}, R - R^{k}
  \rangle \geq 0 \label{eqn:hRk}
\end{align}

Let $P = P^{k}$ in \eqref{eqn:fPs} and $P = \Ps$ in \eqref{eqn:fPk}, their summation yields 
\begin{equation}\label{eqn:fPsk}
  \lambda\langle -\wb{P}^{k} + \wb{Q}^{k-1} - \wb{b}^{k-1}, \wb{P}^{k} \rangle + r \langle -\wb{P}^{k} + \wb{R}^{k-1} - \wb{d}^{k-1}, \wb{P}^{k} \rangle \geq 0
\end{equation}
Similarly, let $Q = Q^{k}$ in \eqref{eqn:gQs} and $Q = \Qs$ in \eqref{eqn:gQk}, and let $R = R^{k}$ in \eqref{eqn:hRs} and $R = \Rs$ in \eqref{eqn:hRk}, we obtain 
\begin{align}
  & \lambda\langle -\wb{Q}^{k} + \wb{P}^{k} + \wb{b}^{k-1},
  \wb{Q}^{k} \rangle \geq 0 \label{eqn:gQsk} \\
  & r \langle -\wb{R}^{k} + \wb{P}^{k} + \wb{d}^{k-1}, \wb{R}^{k}
  \rangle \geq 0 \label{eqn:hRsk}
\end{align}

The summation of \eqref{eqn:fPsk}, \eqref{eqn:gQsk}, and
\eqref{eqn:hRsk} yields
\begin{multline}
  \lambda\langle - \wb{b}^{k-1}, \wb{P}^{k} \rangle + \lambda\langle
  \wb{Q}^{k-1} -\wb{P}^{k} , \wb{P}^{k} \rangle + r \langle -
  \wb{d}^{k-1}, \wb{P}^{k} \rangle + r \langle \wb{R}^{k-1} -\wb{P}^{k},
  \wb{P}^{k} \rangle  \\
  + \lambda \langle \wb{b}^{k-1}, \wb{Q}^{k} \rangle + \lambda\langle
  \wb{P}^{k} - \wb{Q}^{k}, \wb{Q}^{k} \rangle + r \langle \wb{d}^{k-1},
  \wb{R}^{k} \rangle + r \langle \wb{P}^{k} - \wb{R}^{k}, \wb{R}^{k}
  \rangle \geq 0.
\end{multline}
This gives us, after organizing terms
\begin{multline}
  -\lambda\langle \wb{b}^{k-1}, \wb{P}^{k} -\wb{Q}^{k} \rangle - \lambda \|\wb{Q}^{k} - \wb{P}^{k}\|^2 - \lambda\langle \wb{P}^{k} ,  \wb{Q}^{k} -\wb{Q}^{k-1} \rangle  \\
  -r\langle \wb{d}^{k-1}, \wb{P}^{k} -\wb{R}^{k} \rangle - r
  \|\wb{R}^{k} - \wb{P}^{k}\|^2 - r \langle \wb{P}^{k},
  \wb{R}^{k} -\wb{R}^{k-1} \rangle \geq 0
\end{multline}
Combining the above inequality with \eqref{eqn:bderror}, we have
\begin{equation}\label{eqn:Monotone1}
  \begin{aligned} 
    \bigl(\lambda \|\wb{b}^{k-1}\|^2  + r \|\wb{d}^{k-1}\|^2 \bigr) & -   \bigl(\lambda \|\wb{b}^{k}\|^2  + r \|\wb{d}^{k}\|^2 \bigr)  \\
    & = \lambda \bigl(-2\langle \wb{b}^{k-1}, \wb{P}^{k} - \wb{Q}^{k} \rangle - \|\wb{P}^{k} - \wb{Q}^{k}\|^2\bigr) \\
    & \qquad + r \bigl(-2\langle \wb{d}^{k-1}, \wb{P}^{k} - \wb{R}^{k} \rangle - \|\wb{P}^{k} - \wb{R}^{k}\|^2 \bigr)  \\
    & \geq \lambda \|\wb{Q}^{k} - \wb{P}^{k}\|^2 + 2 \lambda\langle \wb{P}^{k},  \wb{Q}^{k} -\wb{Q}^{k-1}\rangle \\
    & \qquad + r \|\wb{R}^{k} - \wb{P}^{k}\|^2 + 2 r \langle
    \wb{P}^{k}, \wb{R}^{k} -\wb{R}^{k-1} \rangle
  \end{aligned}
\end{equation}

Now, we calculate $  \langle \wb{P}^{k} ,  \wb{Q}^{k} -\wb{Q}^{k-1}\rangle$. It is clear that
\begin{multline}\label{eqn:ErrorPQ1}
  \langle \wb{P}^{k} ,  \wb{Q}^{k} -\wb{Q}^{k-1}\rangle = \langle \wb{P}^{k} - \wb{P}^{k-1} ,  \wb{Q}^{k} -\wb{Q}^{k-1}\rangle +  \langle \wb{P}^{k-1} - \wb{Q}^{k-1} ,  \wb{Q}^{k} -\wb{Q}^{k-1}\rangle \\
  + \langle \wb{Q}^{k-1} , \wb{Q}^{k} -\wb{Q}^{k-1}\rangle
\end{multline}
Note that $\displaystyle Q^{k-1} = \arg\min_{Q} g(Q) + \frac{\lambda}{2} \|Q - P^{k-1} - b^{k-2} \|^2 $. Thus, for any $Q\in \RR^{n\times n}$, we have 
\begin{eqnarray}
g(Q) - g(Q^{k-1}) + \lambda\langle Q^{k-1} - P^{k-1} - b^{k-2}, Q - Q^{k-1} \rangle \geq 0 
\end{eqnarray}
In particular, let $Q = Q^{k}$, we have
\begin{eqnarray}  \label{eqn:gQkk1}
  g(Q^{k}) - g(Q^{k-1}) + \lambda\langle Q^{k-1} - P^{k-1} - b^{k-2}, Q^{k} - Q^{k-1} \rangle \geq 0 
\end{eqnarray}
On the other hand, set $Q = Q^{k-1}$ in \eqref{eqn:gQk}, we get
\begin{eqnarray}  \label{eqn:gQkk2}
g(Q^{k-1}) - g(Q^{k}) + \lambda\langle Q^{k}- P^{k} - b^{k-1}, Q^{k-1} - Q^{k} \rangle \geq 0  
\end{eqnarray}

The summation of \eqref{eqn:gQkk1} and \eqref{eqn:gQkk2} yields
\begin{equation}
\langle b^{k-1} - b^{k-2}, Q^{k} - Q^{k-1} \rangle + \langle Q^{k-1} - Q^{k} + P^{k} - P^{k-1} , Q^{k} - Q^{k-1} \rangle \geq 0 
\end{equation}
Note that $P^{k} - P^{k-1} = \wb{P}^{k} - \wb{P}^{k-1}, Q^{k} - Q^{k-1} = \wb{Q}^{k} - \wb{Q}^{k-1}, b^{k-1} - b^{k-2} = \wb{P}^{k-1} - \wb{Q}^{k-1}$, 
thus we have
\begin{equation}
\langle \wb{P}^{k-1} - \wb{Q}^{k-1}, \wb{Q}^{k} - \wb{Q}^{k-1} \rangle + \langle \wb{P}^{k} - \wb{P}^{k-1} , \wb{Q}^{k} - \wb{Q}^{k-1} \rangle \geq \|\wb{Q}^{k} - \wb{Q}^{k-1}\|^2 
\label{eqn:ErrorPQ2}
\end{equation}

Combine \eqref{eqn:ErrorPQ2} with \eqref{eqn:ErrorPQ1}, we have
\begin{equation}
  \langle \wb{P}^{k}, \wb{Q}^{k} - \wb{Q}^{k-1} \rangle \geq \| \wb{Q}^{k} - \wb{Q}^{k-1} \|^2 + \langle \wb{Q}^{k-1}, \wb{Q}^{k} - \wb{Q}^{k-1} \rangle 
\label{eqn:ErrorPQ3}
\end{equation}
Similarly, we have
\begin{equation}
  \langle \wb{P}^{k}, \wb{R}^{k} - \wb{R}^{k-1} \rangle \geq \| \wb{R}^{k} - \wb{R}^{k-1} \|^2 + \langle \wb{R}^{k-1}, \wb{R}^{k} - \wb{R}^{k-1} \rangle 
\label{eqn:ErrorPR3}
\end{equation}

Substitute \eqref{eqn:ErrorPQ3} and \eqref{eqn:ErrorPR3} into
\eqref{eqn:Monotone1}, we have
\begin{equation} \label{eqn:Monotone2}
  \begin{aligned}
    \bigl(\lambda \|\wb{b}^{k-1}\|^2  + r \|\wb{d}^{k-1}\|^2 ) & -  (\lambda \|\wb{b}^{k}\|^2  + r \|\wb{d}^{k}\|^2 )   \\
    & \geq  \lambda \|\wb{Q}^{k} - \wb{P}^{k}\|^2 + 2 \lambda\langle \wb{P}^{k},  \wb{Q}^{k} - \wb{Q}^{k-1}\rangle  \\
    &  \qquad + r \|\wb{R}^{k} - \wb{P}^{k}\|^2 + 2  r \langle \wb{P}^{k} ,  \wb{R}^{k} - \wb{R}^{k-1} \rangle  \\
    & \geq \lambda \|\wb{Q}^{k} - \wb{P}^{k}\|^2 + 2 \lambda
    \bigl(\| \wb{Q}^{k} - \wb{Q}^{k-1} \|^2 + \langle \wb{Q}^{k-1}, \wb{Q}^{k} - \wb{Q}^{k-1} \rangle \bigr)  \\
    & \qquad + r \|\wb{R}^{k} - \wb{P}^{k}\|^2 + 2  r \bigl(\| \wb{R}^{k} - \wb{R}^{k-1} \|^2 + \langle \wb{R}^{k-1}, \wb{R}^{k} - \wb{R}^{k-1} \rangle\bigr) \\
    &=  \lambda \|\wb{Q}^{k} - \wb{P}^{k}\|^2 +  \lambda \bigl(\| \wb{Q}^{k} \|^2  - \| \wb{Q}^{k-1} \|^2  + \| \wb{Q}^{k} - \wb{Q}^{k-1} \|^2  \bigr)  \\
    & \qquad + r \|\wb{R}^{k} - \wb{P}^{k}\|^2 + r \bigl(\|
    \wb{R}^{k} \|^2 - \| \wb{R}^{k-1} \|^2 + \| \wb{R}^{k} -
    \wb{R}^{k-1} \|^2 \bigr)
\end{aligned}
\end{equation}
which yields
\begin{multline} \label{eqn:Monotone3}
  \bigl(\lambda \|\wb{b}^{k-1}\|^2  + r \|\wb{d}^{k-1}\|^2 + \lambda \| \wb{Q}^{k-1} \|^2 + r \| \wb{R}^{k-1} \|^2 \bigr) \\
  -  \bigl(\lambda \|\wb{b}^{k}\|^2  + r \|\wb{d}^{k}\|^2 +  \lambda \| \wb{Q}^{k} \|^2 + r \| \wb{R}^{k} \|^2 \bigr)    \\
  \geq \lambda \|\wb{Q}^{k} - \wb{P}^{k}\|^2 + \lambda \| \wb{Q}^{k} -
  \wb{Q}^{k-1} \|^2 + r \|\wb{R}^{k} - \wb{P}^{k}\|^2 + r \|
  \wb{R}^{k} - \wb{R}^{k-1} \|^2
\end{multline}

This concludes that the sequence $\Big\{ \lambda \|\wb{b}^{k}\|^2 + r
\|\wb{d}^{k}\|^2 + \lambda \| \wb{Q}^{k} \|^2 + r \| \wb{R}^{k} \|^2
\Big\}_k$ is non-increasing and hence convergent.
This further implies,
\renewcommand{\labelenumi}{(\alph{enumi})}
\begin{enumerate}
\item $\{P^k\}_k, \{Q^k\}_k, \{R^k\}_k, \{b^k\}_k, \{d^k\}_k$ are all bounded sequences, and hence the sequences has limit points. 
\item $\lim_{k\rightarrow\infty} \| Q^k - P^k\| = 0$ and $\lim_{k\rightarrow\infty} \| R^k - P^k\| = 0$.
\end{enumerate}
Therefore, the sequences have limit points. Let us denote $(\wt{P},
\wt{Q}, \wt{R}; \wt{b}, \wt{d})$ as a limit point, that is, a
subsequence converges 
\begin{equation}
  \lim_{j\to \infty} (P^{k_j}, Q^{k_j}, R^{k_j}; b^{k_j}, d^{k_j}) 
  = (\wt{P}, \wt{Q}, \wt{R}; \wt{b}, \wt{d}). 
\end{equation}
We now prove that $(\wt{P}, \wt{Q}, \wt{R})$ is a minimum of the variational problem \eqref{eq:three}, \textit{i.e.}
\begin{equation}\label{eq:limitPQR}
  f(\wt{P}) + g(\wt{Q}) +  h(\wt{R}) = \lim_{j\rightarrow \infty} f(P^{k_j}) + g(Q^{k_j}) + h(R^{k_j}) = f(\Ps) + g(\Qs) + h(\Rs)
\end{equation}

First note that since $(\Ps, \Qs, \Rs; \bs, \ds)$ is a saddle point, we have
\begin{multline}
f(\Ps) + g(\Qs) + h(\Rs) \leq f(P^{k_j}) + g(Q^{k_j}) + h(R^{k_j}) + \frac{\lambda}{2}\|P^{k_j} - Q^{k_j}\|^2 \\
+ \lambda\langle P^{k_j} - Q^{k_j}, \bs \rangle + \frac{r}{2}\|P^{k_j} - R^{k_j}\|^2 + r\langle P^{k_j} - R^{k_j}, \ds \rangle
\end{multline}
Taking the limit $j \to \infty$, we get 
\begin{equation}
  f(\Ps) + g(\Qs) + h(\Rs) \leq f(\wt{P}) + g(\wt{Q}) +  h(\wt{R}). 
\end{equation}

On the other hand, taking $P = \Ps$, $Q = \Qs$, and $R = \Rs$ in \eqref{eqn:fPk}--\eqref{eqn:hRk}, we get
\begin{align*}
  f(\Ps) & + g(\Qs) + h(\Rs) \\
  & \geq   f(P^{k_j}) + g(Q^{k_j}) + h(R^{k_j})  - \lambda\langle P^{k_j} - Q^{k_j-1} + b^{k_j-1}, \Ps - P^{k_j} \rangle  \\
  & \qquad - r \langle P^{k_j} - R^{k_j} + d^{k_j-1}, \Ps - P^{k_j} \rangle - \lambda\langle Q^{k_j}- P^{k_j} - b^{k_j-1}, \Qs - Q^{k_j} \rangle \\
  & \qquad - r \langle R^{k_j} - P^{k_j} - d^{k_j-1}, \Rs - R^{k_j}
  \rangle  \\
  &= f(P^{k_j}) + g(Q^{k_j}) + h(R^{k_j}) \\
  &  \qquad - \lambda\langle  b^{k_j-1}, Q^{k_j} - P^{k_j} \rangle - \lambda\langle P^{k_j} - Q^{k_j-1} , \Ps - P^{k_j} \rangle - \lambda\langle Q^{k_j}- P^{k_j} , \Qs - Q^{k_j} \rangle  \\
  & \qquad - r \langle d^{k_j-1}, R^{k_j} - P^{k_j} \rangle - r \langle
  P^{k_j} - R^{k_j} , \Ps - P^{k_j} \rangle - r \langle R^{k_j} -
  P^{k_j} , \Rs - R^{k_j} \rangle
\end{align*}
From \eqref{eqn:Monotone3}, we have $\{P^{k_j}\}, \{Q^{k_j}\}, \{R^{k_j}\}, \{b^{k_j}\}, \{d^{k_j}\}$ are all bounded sequences, and furthermore, 
\begin{eqnarray}
\lim_{j\rightarrow\infty} \| Q^{k_j }- P^{k_j}\| = 0, \qquad \lim_{j\rightarrow\infty} \| Q^{k_j }- Q^{k_j - 1}\| = 0.\nonumber \\
\lim_{j\rightarrow\infty} \| R^{k_j} - P^{k_j}\| = 0, \qquad \lim_{j\rightarrow\infty} \| R^{k_j }- R^{k_j - 1}\| = 0. \nonumber
\end{eqnarray}
Taking the limit $j \to \infty$, we then get 
\begin{equation}
  f(\Ps) + g(\Qs) + h(\Rs) \geq f(\wt{P}) + g(\wt{Q}) +  h(\wt{R}). 
\end{equation}
Hence, the limit point is a minimizer of the variational principle. 

Finally, repeating the derivation of \eqref{eqn:Monotone3} by replacing $(\Ps, \Qs, \Rs)$ by $(\wt{P}, \wt{Q}, \wt{R})$, we get convergence of the whole sequence due to the monotonicity. 
\end{proof}


\begin{bibdiv}
\begin{biblist}

\bib{BaroniGiannozzi:92}{article}{
      author={Baroni, S.},
      author={Giannozzi, P.},
       title={Towards very large-scale electronic-structure calculations},
        date={1992},
     journal={Europhys. Lett.},
      volume={17},
       pages={547\ndash 552},
}

\bib{CandesStrohmerVoroninski:13}{article}{
      author={Candes, E.~J.},
      author={Strohmer, T.},
      author={Voroninski, V.},
       title={Phase{L}ift: exact and stable recovery from magnitude
  measurements via convex programming},
        date={2013},
     journal={Comm. Pure Appl. Math.},
      volume={66},
       pages={1241\ndash 1274},
}

\bib{Cloizeaux:64b}{article}{
      author={des Cloizeaux, J.},
       title={Analytical properties of $n$-dimensional energy bands and
  {W}annier functions},
        date={1964},
     journal={Phys. Rev.},
      volume={135},
       pages={A698\ndash A707},
}

\bib{Cloizeaux:64a}{article}{
      author={des Cloizeaux, J.},
       title={Energy bands and projection operators in a crystal: analytic and
  asymptotic properties},
        date={1964},
     journal={Phys. Rev.},
      volume={135},
       pages={A685\ndash A697},
}

\bib{E:2010PNAS}{article}{
      author={E, W.},
      author={Li, T.},
      author={Lu, J.},
       title={Localized bases of eigensubspaces and operator compression},
        date={2010},
     journal={Proc Natl Acad Sci U S A},
      volume={107},
      number={1273--1278},
}

\bib{ELu:CPAM}{article}{
      author={E, W.},
      author={Lu, J.},
       title={The electronic structure of smoothly deformed crystals:
  {C}auchy-{B}orn rule for nonlinear tight-binding model},
        date={2010},
     journal={Comm. Pure Appl. Math.},
      volume={63},
       pages={1432\ndash 1468},
}

\bib{ELu:13}{article}{
      author={E, W.},
      author={Lu, J.},
       title={The {K}ohn-{S}ham equation for deformed cyrstals},
        date={2013},
     journal={Mem. Amer. Math. Soc.},
      volume={221},
      number={1040},
}

\bib{Esser:2009CAM}{article}{
      author={Esser, E.},
       title={Applications of lagrangian-based alternating direction methods
  and connections to split bregman},
        date={2009},
     journal={UCLA CAM Report (09-31)},
}

\bib{Garcia-CerveraLuXuanE:09}{article}{
      author={Garcia-Cervera, C.~J.},
      author={Lu, J.},
      author={Xuan, Y.},
      author={E, W.},
       title={A liear scaling subspace iteration algorithm with optimally
  localized non-orthogonal wave functions for {K}ohn-{S}ham density functional
  theory},
        date={2009},
     journal={Phys. Rev. B},
      volume={79},
       pages={115110},
}

\bib{Goedecker:99}{article}{
      author={Geodecker, S.},
       title={Linear scaling electronic structure methods},
        date={1999},
     journal={Rev. Mod. Phys.},
      volume={71},
       pages={1085\ndash 1123},
}

\bib{GlowinskiLeTallec:89}{book}{
      author={Glowinski, R.},
      author={Le~Tallec, P.},
       title={Augmented {L}agrangian and operator-splitting methods in
  nonlinear mechanics},
   publisher={SIAM},
     address={Philadelphia},
        date={1989},
}

\bib{Goldstein:2009split}{article}{
      author={Goldstein, T.},
      author={Osher, S.},
       title={The split {B}regman method for {L}1-regularized problems},
        date={2009},
     journal={SIAM Journal on Imaging Sciences},
      volume={2},
      number={2},
       pages={323\ndash 343},
}

\bib{Kivelson:82}{article}{
      author={Kivelson, S.},
       title={Wannier functions in one-dimensional disordered systems:
  {A}pplication to fractionally charged solitons},
        date={1982},
     journal={Phys. Rev. B},
      volume={26},
       pages={4269\ndash 4277},
}

\bib{Kohn:1959Wannier}{article}{
      author={Kohn, W},
       title={{Analytic Properties of Bloch Waves and Wannier Functions}},
        date={1959},
     journal={Physical Review},
      volume={115},
      number={4},
       pages={809\ndash 821},
}

\bib{Kohn:59}{article}{
      author={Kohn, W.},
       title={Analytic properties of {B}loch waves and {W}annier functions},
        date={1959},
     journal={Phys. Rev.},
      volume={115},
       pages={809\ndash 821},
}

\bib{Kronig:1931quantum}{article}{
      author={Kronig, R de~L},
      author={Penney, WG},
       title={Quantum mechanics of electrons in crystal lattices},
        date={1931},
     journal={Proceedings of the Royal Society of London. Series A},
      volume={130},
      number={814},
       pages={499\ndash 513},
}

\bib{Lai:2014splitting}{article}{
      author={Lai, R.},
      author={Osher, S.},
       title={A splitting method for orthogonality constrained problems},
        date={2014},
     journal={Journal of Scientific Computing},
      volume={58},
      number={2},
       pages={431\ndash 449},
}

\bib{LiNunesVanderbilt:93}{article}{
      author={Li, X.-P.},
      author={Nunes, R.~W.},
      author={Vanderbilt, D.},
       title={Density-matrix electronic-structure method with linear
  system-size scaling},
        date={1993},
     journal={Phys. Rev. B},
      volume={47},
       pages={10891\ndash 10894},
}

\bib{Lieb:77}{article}{
      author={Lieb, E.~H.},
      author={Simon, B.},
       title={The {H}artree-{F}ock theory for {C}oulomb systems},
        date={1977},
     journal={Commun. Math. Phys.},
      volume={53},
       pages={185\ndash 194},
}

\bib{Marzari:1997}{article}{
      author={Marzari, Nicola},
      author={Vanderbilt, David},
       title={{Maximally localized generalized Wannier functions for composite
  energy bands}},
        date={1997},
     journal={Physical Review B},
      volume={56},
      number={20},
       pages={12847\ndash 12865},
}

\bib{McWeeny:60}{article}{
      author={McWeeny, R.},
       title={Some recent advances in density matrix theory},
        date={1960},
     journal={Rev. Mod. Phys.},
      volume={32},
       pages={335\ndash 369},
}

\bib{NenciuNenciu:98}{article}{
      author={Nenciu, A.},
      author={Nenciu, G.},
       title={The existence of generalised {W}annier functions for
  one-dimensional systems},
        date={1998},
     journal={Commun. Math. Phys.},
      volume={190},
       pages={541\ndash 548},
}

\bib{Nenciu:83}{article}{
      author={Nenciu, G.},
       title={Existence of the exponentially localised {W}annier functions},
        date={1983},
     journal={Commun. Math. Phys.},
      volume={91},
       pages={81\ndash 85},
}

\bib{Osher:2005}{article}{
      author={Osher, S.},
      author={Burger, M.},
      author={Goldfarb, D.},
      author={Xu, J.},
      author={Yin, W.},
       title={An iterative regularizatoin method for total variation-based
  image restoration},
        date={2005},
     journal={Multiscale Model. Simul.},
      volume={4},
       pages={460\ndash 489},
}

\bib{OzolinsLaiCaflischOsher:13}{article}{
      author={Ozolins, V.},
      author={Lai, R.},
      author={Caflisch, R.},
      author={Osher, S.},
       title={Compressed modes for variational problems in mathematics and
  physics},
        date={2013},
     journal={Prol. Natl. Acad. Sci. USA},
      volume={110},
       pages={18368\ndash 18373},
}

\bib{Panati:07}{article}{
      author={Panati, G.},
       title={Trviality of {B}loch and {B}loch-{D}irac bundles},
        date={2007},
     journal={Ann. Henri Poincar\'e},
      volume={8},
       pages={995\ndash 1011},
}

\bib{Setzer:2009SSVMCV}{article}{
      author={Setzer, S.},
       title={Split bregman algorithm, douglas-rachford splitting and frame
  shrinkage},
        date={2009},
     journal={Proceedings of the 2nd International Conference on Scale Space
  and Variational Methods in Computer Vision,},
      volume={LNCS, 5567},
}

\bib{Wannier:1937}{article}{
      author={Wannier, G~H},
       title={{The structure of electronic excitation levels in insulating
  crystals}},
        date={1937-08},
     journal={Physical Review},
      volume={52},
      number={3},
       pages={0191\ndash 0197},
}

\bib{Wu:2010SIAM}{article}{
      author={Wu, C.},
      author={Tai, X.},
       title={Augmented lagrangian method, dual methods and split-bregman
  iterations for {ROF}, vectorial {TV} and higher order models},
        date={2010},
     journal={SIAM J. Imaging Science},
      volume={3},
      number={3},
       pages={300\ndash 339},
}

\bib{yin2013error}{article}{
      author={Yin, W.},
      author={Osher, S.},
       title={Error forgetting of bregman iteration},
        date={2013},
     journal={Journal of Scientific Computing},
      volume={54},
      number={2-3},
       pages={684\ndash 695},
}

\bib{Yin:2008bregman}{article}{
      author={Yin, W.},
      author={Osher, S.},
      author={Goldfarb, D.},
      author={Darbon, J.},
       title={Bregman iterative algorithms for l1-minimization with
  applications to compressed sensing},
        date={2008},
     journal={SIAM Journal on Imaging Sciences},
      volume={1},
      number={1},
       pages={143\ndash 168},
}

\end{biblist}
\end{bibdiv}

\end{document}